\xdef\csname v\a\endcsname {
		{\noexpand\boldsymbol{\a}}
	}
\newcommand{\vone}{{\boldsymbol{1}}}
\newcommand{\vzero}{{\boldsymbol{0}}}
\theoremstyle{plain}
\newtheorem{theorem}{Theorem}[section]
\newtheorem{lemma}[theorem]{Lemma}
\theoremstyle{remark}
\theoremstyle{assumption}
\theoremstyle{definition}
\author[G. Aletti]{Giacomo Aletti}
\address{ADAMSS Center,
  Universit\`a degli Studi di Milano, Milan, Italy}
\email{giacomo.aletti@unimi.it}
\author[I. Crimaldi]{Irene Crimaldi}
\address{IMT School for Advanced Studies Lucca, Lucca, Italy}
\email{irene.crimaldi@imtlucca.it}
\author[A. Ghiglietti]{Andrea Ghiglietti}
\address{Universit\`a degli Studi di Milano-Bicocca, Milan, Italy}
\email{andrea.ghiglietti@unimib.it}
\title{Interacting Innovation processes: \\case studies from Reddit and Gutenberg}
\thanks{All the authors contributed equally to the present work.}
\begin{document}

\begin{abstract}
In this work, we introduce an extremely general model for a collection of 
innovation processes in order to model and analyze the {\em interaction} among them. 
We provide theoretical results, analytically proven, and we show how the proposed model fits 
the behaviors observed in some real data sets (from {\em Reddit} and {\em Gutenberg}).  
It is worth mentioning that the given applications are only examples of 
the potentialities of the proposed model and related results: 
due to its abstractness and generality, it can be applied to many interacting innovation processes.
\end{abstract}

\maketitle

Analyzing the {\em innovation} process, that is the
underlying mechanisms through which novelties emerge, diffuse and
trigger further novelties is definitely of primary importance in
many areas (biology, linguistics, social science and others
\cite{armano, arthur, fink, gooday, obrien, puglisi, reader, rogers,
  rzh, saracco, sole, thurner}). We can define 
  {\em novelties } (or {\em innovations}) as the first time
occurrences of some event. A widely used mathematical object that models an 
innovation process is an {\em urn model with infinitely many colors},
also known as {\em species sampling sequence} \cite{han-pitman,
  pitman_1996, Zabell_1992}. Let $C_1$ be the first observed color, then,
given the colors $C_1,\dots, C_t$ of the first $t$ extractions, the
color of the $(t+1)$-th extracted ball is new 
(i.e. not already drawn in the previous extractions) 
with a probability
$Z^*_t$ which is a function of $C_1,\dots,C_t$ (sometimes called ``birth
probability'') and it is equal to the already observed color $c$ with
probability $P_{c,t}=\sum_{n=1}^t Q_{n,t}I_{\{C_n=c\}}$, where
$Q_{n,t}$ is a function of $C_1,\dots,C_t$. The quantities $Z^*_t$ and
$Q_{n,t}$ specify the model: precisely, $Z^*_t$ describes the
probability of having a new color (that is a novelty) at time-step
$t+1$ and $Q_{n,t}$ is the weight at time-step $t$ associated to extraction  
$n$, with $1\leq n\leq t$, so that the probability of having at
time-step $t+1$ the ``old'' color $c$ is proportional to the total
weight at time-step $t$ associated to that color (a {\em reinforcement
mechanism}, sometimes called ``weighted preferential attachment'' principle).
Note that the number of possible colors is not fixed a priori, but new
colors continuously enter the system. We can see the urn with
infinitely many colors as the space of possibilities, while the
sequence of extracted balls with their colors represents the history
which has been actually realized.  \\

\indent The {\em Blackwell-MacQueen urn} scheme \cite{black-mac,
  pitman_1996} provides the most famous example of innovation process.  
  According to this model, at time-step $t+1$ a new color is
observed with probability given by a deterministic function of $t$, 
that is $Z^*_t=z^*(t)=\theta/(\theta+t)$, where $\theta>0$,
and an old color is observed with a probability proportional to the
number $K_{c,t}$ of times that color was extracted in the previous
extractions: $Q_{n,t}=q_n(t)=1/(\theta+t)$,
i.e.~$P_{c,t}=K_{c,t}/(\theta+t)$.  This is the ``simple''
preferential attachment rule, also called ``popularity''
principle. This urn model is also known as Dirichlet process
\cite{ferguson} or as Hoppe's model~\cite{hoppe} and, in terms of
random partitions, it corresponds to the so called {\em Chinese
  restaurant process} \cite{pitman}. Afterwards, it has been extended
introducing an additional parameter and it has been called {\em
  Poisson-Dirichlet model}~\cite{james, pitman, pit-yor,
  Teh2006}. More precisely, for the Poisson-Dirichlet model, we have
\begin{equation}\label{PD-probab}
  \begin{split}
  &Z^*_t=\frac{\theta+\gamma D_t}{\theta +  t},
  \qquad Q_{n,t}=\frac{1-\gamma/K_{C_n,t}}{\theta + t},
  \\
&\mbox{and so }\quad
P_{c,t}=
\frac{K_{c,t}-\gamma}{\theta +  t},
\end{split}
\end{equation}
where $0\leq \gamma<1$, $\theta>-\gamma$ and $D_t$ denotes the number
of distinct extracted colors until time-step $t$.  
From an applicative point of view, as an innovation process,
the Poisson-Dirichlet process has the merit to reproduce in many cases
the correct basic statistics, namely the Heaps'~\cite{heaps_1978,
  herdan_1960} and the (generalized) Zipf's
laws~\cite{zipf_1929,zipf_1935,zipf_1949}, which quantify,
respectively, the~rate at which new elements appear and the frequency
distribution of the elements. In particular, the {\em Heaps' law} states 
that the number $D_t$ of distinct
observed elements (i.e.~colors, according to the metaphor of the urn)
when the system consists of $t$ elements (i.e.~after $t$ extractions
from the urn) follows a power law with an exponent smaller than or equal to $1$ and,  
 for the Poisson-Dirichlet model, we have
$D_t\propto t^{\gamma}$ for $0<\gamma<1$ (while  
$D_t\propto\ln(t)$ for $\gamma=0$).
\\

\indent Recently, a new model, called {\em urn with triggering},
that includes the Poisson-Dirichlet process as a particular case, 
have been introduced and studied \cite{ale-cri_SR, Tria3, Tria1, Tria2}.  
This model is based on Kauffman's principle of the adjacent
possible~\cite{kauffman_2000}: indeed, the model starts with an urn
with a finite number of balls with distinct colors and,
whenever a color is extracted for the first time, a set of balls with
new colors is added to the urn. This represents Kauffman's idea that,
when a novelty occurs, it triggers further potential novelties. 
In particular, the urn with triggering has the merit to provide  
a very clear representation of the evolution dynamics of the Poisson-Dirichlet process. 
An urn initially contains $N_0>0$ distinct balls of different colors. 
Then, at each time step $t+1$, a ball is drawn at random from the urn and 
\begin{itemize}
\item[(a)] if the color of the extracted ball is new, 
i.e. it was not been extracted in the previous extractions, 
then we replace the extracted ball by $\widehat{\rho}$ 
  balls of the same color as the extracted ball plus 
  $(\nu+1)$ balls of distinct new colors, i.e. not already present in the urn;
 \item[(b)] if the color of the extracted ball is old, i.e. it has
   been already extracted in the previous extractions, we replace the extracted 
   ball by $(1+\rho)$ balls of the same color as the extracted one.
 \end{itemize}
It easy to verify that, when the {\em balance condition}  
$\widehat{\rho}+\nu=\rho$ is satisfied (this means that at each time-step the number of balls added to the urn 
is always $\rho$, regardless of the outcome of the extraction), the above updating rule gives rise to 
the above probabilities \eqref{PD-probab}, taking $\rho>\nu\geq 0$, 
$\theta=N_0/\rho$ and $\gamma=\nu/\rho$.
\\

\indent 
{Since it is doubtless important to understand how different innovation processes affect each other,   
this work aims at introducing and analyzing a model for a finite network of innovation processes. In the proposed model, for each node $h$,  
 the probability of observing a new or an old item depends, not only on the path of observations 
 recorded for $h$ itself, but also on the outcomes registered for the other nodes $j\neq h$. More precisely, 
 we introduce a system of $N$ urns with triggering that interact each other as follows:}
 \begin{itemize}
 \item[(i)] 
{the probability of {\em exploitation} of an old item $c$ by node $h$, i.e.  
 the probability of extracting from urn $h$ a color $c$ already drawn in the past from an urn of the system 
 (not necessarily from $h$ itself),    
has an increasing dependence not only on the number of times $c$ has been observed in node $h$ itself 
(that could be even zero), but also on the number of times $c$ has been observed in each of the other nodes;}
\item[(ii)] 
{the probability of {\em production} (or {\em exploration}) of a novelty for the entire system 
by node $h$, i.e. the probability of extracting from urn $h$ a color never extracted before from any of the urns in the system,  
has an increasing dependence not only on the number of novelties produced by $h$ itself in the past, 
but also on the number of novelties produced by each of the other nodes in the past.}
\end{itemize}
In particular, (ii) means that Kauffman's principle of the adjacent possible is at the ``system level": that is, 
when urn $h$ produces a novelty for the system, this fact triggers further potential novelties 
in all the urns of the system, not only in urn $h$ itself.   The two different dependencies described above ((i) and (ii))  are 
tuned by two different matrices (called $\Gamma$ and $W$ in the sequel). 
\\

\indent Despite the amount of scientific works regarding interacting urns with a finite set of colors 
(see, for instance, \cite{ale-cri-ghi, ale-cri-ghi-complete} and the references therein),  
in the existing literature we have found only 
{a few papers about a collection of interacting 
(in the same sense of the present work) urns with infinitely many colors, 
that is \cite{fortini, iacopini-2020, ubaldi-2021}.  In the model provided in \cite{fortini} (see Example~3.8 in that paper),  
there is} a finite collection of 
Dirichlet processes with random reinforcement. More precisely, in that model we have a random weight 
$W_{t,h}$ associated to the extraction at time-step $t$ from the urn $h$ so that,  
the probability of extracting from urn $h$ an old  color $c$ 
(here, the term ``old" refers to urn $h$, that is a color never extracted before from urn $h$) 
is proportional to the weight associated to that color, specifically 
$\sum_{n=1}^t W_{n,h}I_{C_{n,h}=c}/ ( \theta +\sum_{n=1}^t W_{n,h})$. 
The interaction across the urns is introduced by means of the weights, which could be stochastically dependent:  
each $W_{n,h}$ may be the same for each urn $h$, or a function of the observed outcomes of the other urns, 
or a function of some common (observable or latent) variables. 
It is easy to understand that this model is different from ours: 
we consider Poisson-Dirichlet processes, not only Dirichlet processes, and, 
differently from the model in \cite{fortini}, for us, 
the notion of ``old" or ``new" color refers to the entire system, not to each single urn, and 
Kauffman's principle of the adjacent possible is at the system level as explained above.  
{Our work and \cite{iacopini-2020} share the fact that the proposed models are both a collection of
urns with triggering with an interacting dynamics that brings the Kauffman’s principle of the
adjacent possible from the single agent to the network of agents. Adopting the terminology of 
\cite{iacopini-2020}, we can say that both interacting mechanisms are based on the construction and
the updating of a “social" urn for each network node from which the extractions take place, but the
contruction and the updating rules of the social urns are deeply different in the two models. In
particular, differently from \cite{iacopini-2020}, we introduce the notion of "new” and "old” at the
system level. In \cite{iacopini-2020} the authors focus on the novelties in each sequence (novelty in $h$
= first apperance in $h$ of a new item), that they call “discoveries”; while we also study the
sequence of the novelties for whole the system produced by each agent. Furthermore, in \cite{iacopini-2020} 
the extraction of an “old” item in a certain network node does not affect the
other nodes, in our model we also have an interacting reinforcement mechanism for the “old” items:
the probability of the extraction of an old item depends on the number of times it has been observed
in all the nodes. This allows us to get a specific result on the distribution of the observations in the
system among the different items observed. Finally, \cite{ubaldi-2021} provides a multi-agent version of 
the urn with triggering model, which is specific for describing the birth and the evolution of social networks.}
\\

\indent While the model we propose is extremely general and may be also employed in other contexts, 
it has been tested on two real data sets: 
one taken from the social content aggregation website {\em Reddit}, 
collected, elaborated and made freely available on the web by the authors of  
\cite{monti}, and one got from the on-line library {\em Project Gutenberg}, which is a collection of public domain books.  
{We show that both data sets exhibit empirical behaviours that are 
in accordance with those predicted by the proven theoretical results.}
\\

\indent The sequel of the paper is so structured. In Section~\ref{sec-methods} we 
introduce the model and we explain the role played by each model parameter. 
In Section \ref{sec-results}  we illustrate the theoretical results and we show how some real innovation processes can be 
well described using the proposed model. Section \ref{discussion} is devoted to the discussion of the achieved results and 
the presentation of possible future developments. Finally, the supplementary material collects the 
analytical proof of all the presented theoretical results. 


\section{Methods}\label{sec-methods}

The model we propose essentially consists in 
a finite system of {\em interacting} urns with triggering. More precisely,  
suppose to have $N$ urns (that may represent $N$ different agents of a system), 
labeled from $1$ to $N$. 
{{\em At time-step $0$, the colors inside each urn are different from those in the other urns}. 
Let $N_{0,h}>0$ be the number of distinct balls with distinct colors inside the urn $h$.}  
Then, at each time-step $t\geq 1$, one ball is drawn at
random from each urn and, for any $h=1,\dots N$, urn $h$ is so updated according to the colors extracted
 from urn $h$ itself and from all the other urns $j\neq h$:
\begin{itemize}
\item if the color of the ball extracted from urn $h$ is ``new'' (i.e., it appears
  for the first time in the system), then we replace (inside urn $h$) the extracted ball by 
  $\widehat{\rho}_{h,h}>0$ balls of the same color plus 
{$(\nu_{h,h}+1)$, with $\nu_{h,h}\geq 0$,} 
  balls of distinct ``new'' colors (i.e. not already present in the system);
 
 \item if the color of the ball extracted from urn $h$ is ``old'' (i.e., it has
   been already extracted in the system), we add $\rho_{h,h}>0$ balls of the same color 
   into urn $h$; 

\item for each $j\neq h$, if the color of the ball extracted from urn $j$ is ``new'' (i.e., it appears
  for the first time in the system), then into urn $h$ we add $\widehat{\rho}_{j,h}\geq 0$ balls of 
  the same color as the 
{one} extracted from urn $j$ 
  plus $\nu_{j,h}\geq 0$ balls of distinct ``new'' colors (i.e. not already present in the system);
  
 \item for each $j\neq h$, if the color of the ball extracted from urn $j$ is ``old'' (i.e., it has
   been already extracted in the system), then into urn $h$ we add $\rho_{j,h}\geq 0$ 
   balls of the same color as the one extracted from urn $j$.
 \end{itemize}
As already pointed out, the terms ``new" and ``old" refer to the entire system, 
that is a ``new" color is a color that has never been extracted from an urn of the system. 
On the contrary,  an ``old" color is a color that has already been extracted from at least one urn of the system, 
but it is possible that it has never been extracted from some other urns in the system.
\\
\indent We assume that the {\em ``new" colors added to a certain urn are 
always different from those added to the other urns 
(at the same time-step or in the past)}.
By means of this fact, together with the assumption that initially the colors in the urns are different from each other, 
we cannot have the same new color extracted simultaneously from different urns. 
In other words, {\em we cannot have the same novelty produced simultaneously from different agents 
of the system}. 
{Therefore, for each observed new color (novelty) $c$, there exists a unique urn (agent), say $j^*(c)$, 
in the system that produced it. However, in a time-step following its first extraction,  
color $c$ could be also extracted from another urn $h\neq j^*(c)$,      
as a consequence of the interaction among the urns (agents). Indeed, the ``contamination" of the color-set of the urn $h$ 
with the colors present in the other urns is possible by means of the interaction terms 
$\widehat{\rho}_{j,h}$ and/or $\rho_{j,h}$ in the above model dynamics.}
\\

\indent As in the standard Poisson-Dirichlet model, we assume the {\em balance condition} 
\begin{equation}\label{balance-cond}
\widehat{\rho}_{j,h}+\nu_{j,h}=\rho_{j,h},\quad \mbox{i.e. } \widehat{\rho}_{j,h}=\rho_{j,h}-\nu_{j,h}\,,
\end{equation} 
so that, at each time-step, each urn $j$ contributes to increase the number of balls inside urn $h$ by $\rho_{j,h}\geq 0$, 
with $\rho_{h,h}>0$. Therefore, 
at each time-step, the number of balls added to urn $h$ is $\rho_h=\sum_{j=1}^N \rho_{j,h}>0$.  
Hence,  if we denote by $C_{t+1,h}$ the color extracted from urn $h$ at time-step $t+1$, we have 
\begin{equation*}
Z^*_{t,h}=P(C_{t+1,h}=\mbox{``new"}\,|\, \mbox{past})=
\frac{N_{0,h}+\sum_{j=1}^N \nu_{j,h}D^*_{t,j}}{N_{0,h} + \rho_h t}\,,
\end{equation*}
where $D^*_{t,j}$ denotes the number, until time-step $t$, 
of distinct observed colors extracted for their first time from urn $j$, 
that is the number of distinct novelties for the whole system 
``produced" by urn (agent) $j$ until time-step $t$.
Moreover, for each old color $c$, we have 
\begin{equation*}
\begin{split}
P_{t}(h,c)=P(C_{t+1,h}= c \,|\, \mbox{past} ) &= 
\frac{\sum_{j\neq j^*(c)}\rho_{j,h}K_t(j,c)+\rho_{j^*(c),h}(K_t(j^*(c),c)-1)+\widehat{\rho}_{j^*(c),h}}{N_{0,h} + \rho_h t}
\\
&=\frac{\sum_{j=1}^N\rho_{j,h}K_t(j,c)+(\widehat{\rho}_{j^*(c),h}-\rho_{j^*(c),h})}{N_{0,h} + \rho_h t}
\\
&=\frac{\sum_{j=1}^N\rho_{j,h}K_t(j,c)-\nu_{j^*(c),h}}{N_{0,h} +\rho_h t}\,,
\end{split}
\end{equation*}
where $K_{t}(j,c)$ denotes the number of times the color $c$ has been extracted from urn $j$ until time-step $t$ and 
$j^*(c)$ denotes the urn from which the color $c$ has been extracted for the first time. 
(Note that $\rho_{j^*(c),h}=0$ implies $\nu_{j^*(c),h}=0$ by the balance condition.)
\\

\indent Without loss of generality,
to ease the notation we adopt a different parametrization by setting
\begin{equation}\label{normalization}
\theta_h=N_{0,h}/\rho_h,\quad \gamma_{j,h}=\nu_{j,h}/\rho_h,\quad  
\lambda_{j,h}=\widehat{\rho}_{j,h}/\rho_h\quad\mbox{and}\quad 
w_{j,h}=\rho_{j,h}/\rho_h\,,
\end{equation}
where $\theta_h>0$, $0\leq\gamma_{j,h}\leq 1$ with $<1$ for $j=h$, 
$0\leq \lambda_{j,h}\leq 1$ with $>0$ for $j=h$ and 
$0\leq w_{j,h}\leq 1$ with $>0$ for $j=h$.
This choice can be read as a normalization of the parameters since, for each $h=1,\dots, N$, we have
$\sum_j w_{j,h}=1$ and so, by the balance condition, $0\leq \sum_j\gamma_{j,h}<1$ and $0<\sum_j\lambda_{j,h}\leq 1$. 
With the new parametrization, we obtain
\begin{equation}\label{birth-prob-inter}
Z^*_{t,h}=P(C_{t+1,h}=\mbox{``new"}\,|\, \mbox{past})=
\frac{\theta_{h}+\sum_{j=1}^N \gamma_{j,h}D^*_{t,j}}{\theta_{h} + t}\,,
\end{equation}
and, for each "old" color $c$, 
\begin{equation}\label{old-color-prob-inter}
\begin{split}
P_{t}(h,c)=P(C_{t+1,h}= c \,|\, \mbox{past} ) &= 
\frac{\sum_{j\neq j^*(c)}w_{j,h}K_t(j,c)+w_{j^*(c),h}(K_t(j^*(c),c)-1)+\lambda_{j^*(c),h}}{\theta_h + t}
\\
&=\frac{\sum_{j=1}^N w_{j,h}K_t(j,c)-\gamma_{j^*(c),h}}{\theta_h + t}\,.
\end{split}
\end{equation}

Note that the probability that urn (agent) $h$ will produce at time-step $t+1$ a novelty for the entire system has an 
increasing dependence on the number $D^*_{t,j}$ of novelties produced by the urn (agent) $j$ until  
time-step $t$ and the parameter $\gamma_{j,h}$ regulates this dependence. In other words,  
Kauffman's principle of the adjacent possible is at the ``system level": that is,  for each pair $(j,h)$ of urns in the system, 
the parameter $\gamma_{j,h}$ quantifies how much the production of a novelty by urn $j$ induces potential novelties in urn $h$. 
Moreover, on the other hand, 
the probability that from urn $h$ we will extract at time-step $t+1$ an old color $c$ has an increasing dependence 
on the number $K_{t}(j,c)$ of times the color $c$ has been drawn from urn $j$ until time-step $t$ and the parameter 
$w_{j,h}$ quantifies how much the number $K_t(j,c)$ leads toward a future extraction of a ball of color $c$ from urn $h$. 
\\

\indent As particular cases, we can see that the case $N=1$ reduces to the classical Poisson-Dirichlet process with 
parameters $\theta>0$ and $0\leq \gamma<1$, and the {\em case of independence}  
corresponds to the framework when $w_{j,h}=0$ (and so $\gamma_{j,h}=\lambda_{j,h}=0$) 
for each $j\neq h$. In the latter case, by the model definition, the colors are not shared by the agents, because 
 each urn has colors different from those inside the other urns.  
Indeed, for $N$ independent Poisson-Dirichlet processes the probability of having colors in common is null.

\subsubsection*{Chinese restaurant metaphor} 

It is also worthwhile to recall that a standard metaphor used 
to represent the random partition induced by the Poisson-Dirichlet process, that is the 
random partition of the extracted balls among the observed colors, is the 
``Chinese restaurant" metaphor: suppose to have a restaurant with infinite tables 
and, at each time-step $t+1$, a customer enters and sits at a table,   
with probabilities $Z^*_t$ and $P_{c,t}$ given in \eqref{PD-probab} 
as the probability of sitting to an empty table and to an already occupied table, respectively. 
The random partition induced at time-step $t$ is the random allocation of the customers, arrived until time-step $t$, 
among the occupied tables.  
The interacting model introduced above can be represented with a similar metaphor. More precisely, 
suppose to have a restaurant with infinite tables where, at each time-step, $N$ customers enter simultaneously. 
Each customer belongs to a specific category $h=1,\dots, N$.
Then, at time-step $t+1$, the probability that the customer belonging to category $h$ sits to an empty table 
is $Z^*_{t,h}$ defined in \eqref{birth-prob-inter} and the probability that she sits to an already occupied table is 
$P_t(h,c)$ defined in \eqref{old-color-prob-inter}. We cannot have customers belonging to different categories that occupy simultaneously 
the same empty table. 
{ However, the sharing of a table by multiple categories is possible, 
after the first occupation of the table,  because of the 
presence of the interaction terms $\lambda_{j,h}$ and $w_{j,h}$ in \eqref{old-color-prob-inter}.}   
The probability $Z^*_{t,h}$ results increasing not only with the number of distinct 
tables occupied by customers of category $h$ until time-step $t$, but also with the 
numbers of distinct tables occupied by customers of each other category $j\neq h$. 
The parameters $\gamma_{j,h}$ rule these dependencies.
Similarly, the probability $P_t(h,c)$ has naturally an increasing dependence on the number of customers already seated at that table, 
but each of these customers has a different weight, i.e. $w_{j,h}$, according to her category: indeed, 
the parameter $w_{j,h}$ regulates how much the number of customers of category $j$ sitting to a table drives a customer of category $h$ to 
choose that table. 
{For the sake of clarity, we have synthetize in Table \ref{tab:table-correspondence-1}  
how the quantities and the events involved in the proposed model can be interpreted 
through both the urn metaphor or the Chinese restaurant metaphor.} 
\\

\begin{table}[h!]
{
  \begin{center}
    \caption{Correspondence table between the model, the urn metaphor and the Chinese restaurant metaphor}
    \label{tab:table-correspondence-1}
\begin{tiny}
    \begin{tabular}{l|l|l} 
    \hline
      \textbf{} & \textbf{Urn metaphor} & \textbf{Chinese restaurant metaphor}\\
      \hline
      agents & urns & categories\\
      \hline
      agent's action & extracted ball & customer entering the restaurant\\
      \hline
      item adopted & color of the extracted ball & table chosen by the customer\\
      \hline
      production (or exploration) of a novelty & extraction of a color never extracted before  & occupation of an empy table\\
      & from any urn of the system & \\
      \hline
      exploitation of an old item & extraction of a color already extracted before  & choice of an already occupied table\\
      & from some urn in the system & \\
      \hline
      $D^*_{t,h}=$ number, until time-step $t$, & number, until time-step $t$, of distinct colors
      & number, until time-step $t$, \\
      of distinct novelties for the whole system &  observed in the whole system and & of distinct tables\\
      produced by agent $h$ & extracted for their first time from urn $h$ & occupied for their first time by \\
      & & a customer belonging to category $h$\\
      \hline 
      $D_{t,h}=$ number, until time-step $t$, &  number, until time-step $t$, & number, until time-step $t$,  \\
      of distinct items adopted by agent $h$ & of distinct colors extracted from urn $h$ & of distinct tables occupied by at least one\\
      & &  customer belonging to category $h$\\
      \hline
      $K_t(h,c)=$ number, until time-step $t$, & number, until time-step $t$, & number, until time-step $t$, of customers\\
      of times agent $h$ has adopted item $c$ & of times color $c$ has been extracted from urn $h$ &  belonging to category $h$ and 
      sitting at table $c$\\ 
      \hline
    \end{tabular}
\end{tiny}
  \end{center}
  }
\end{table}


\subsubsection*{Matrix notation}

In order to present the theoretical results, we set $\Gamma$, $W$, $\Lambda$ equal to the non-negative $N\times N$ square matrices 
with elements $\gamma_{j,h}$, $w_{j,h}$ and 
$\lambda_{j,h}$, respectively. We recall that, by the balance condition~\eqref{balance-cond}  
and the reparametrization \eqref{normalization}, we have 
$$
W=\Gamma+\Lambda,\qquad \mathbf{0}^\top\leq {\vone}^\top\Gamma<\vone^\top
\qquad \mbox{and}\qquad {\vone}^\top W={\vone}^\top,
$$
where ${\vone}$ and ${\vzero}$ denote the vectors with all the components equal to $1$ and $0$, respectively. 
As observed above, the matrix $\Gamma$ rules the production of potential novelties and, in particular, 
its elements out of the diagonal regulate the interaction among the agents with respect to this issue; while, 
the matrix $W$ rules the interaction among the agents with respect to the choice of an old item. 
\\

\section{Results}\label{sec-results}
In this section we will present first the theoretical results and then the empirical results related to two
real data sets. The proofs of the first ones are collected in the supplementary materials, that may be found together with the online version 
at \cite{ale-cri-ghi-supplM}.

\subsection{Theoretical results}\label{sec-model}

The first result states that, if $\Gamma$ is irreducible, that is the graph with the 
{agents} as nodes 
and with $\Gamma$ as the adjacency matrix  is strongly connected, then 
$D^*_{t,h}\propto t^{\gamma^*}$ a.s. for all $h=1,\dots,N$, that is 
all the $D^*_{t,h}$ grow with the same Heaps' exponent $\gamma^*\in (0,1)$. 
{This means that, at the steady state, all the agents of the network produce innovations 
for the system at the same rate.} In addition, 
the ratio $D^*_{t,h}/D^*_{t,j}$ provides a strongly consistent estimator of 
the ratio $u_h/u_j$ of the relative centrality scores (with respect to $\Gamma^\top$) 
of the two nodes $h$ and $j$. More precisely, we have

\begin{theorem}\label{th-synchro-rates}
Suppose that the matrix $\Gamma$ is irreducible. Denote by $\gamma^*\in (0,1)$ the 
Perron-Frobenius eigenvalue of $\Gamma$, by ${\vv}$ the corresponding right eigenvector with
strictly positive entries and such that ${\vv}^\top{\vone}=1$ and, finally,  denote 
by ${\vu}$ the corresponding left 
{eigenvector} with strictly positive entries and  
${\vv}^\top{\vu}=1$. 
Then, for each $h=1,\dots, N$, we have 
$$
t^{-\gamma^*}D^*_{t,h}\stackrel{a.s.}\longrightarrow D^{**}_{\infty,h}\,,
$$ 
where $D^{**}_{\infty,h}$ is a finite strictly positive random variable. 
Moreover, for each pair of indexes $h,j=1,\dots, N$, we have 
$$
\frac{ D^{*}_{t,h} }{ D^{*}_{t,j} }\stackrel{a.s.}\longrightarrow \frac{u_h}{u_j}\,.
$$
\end{theorem}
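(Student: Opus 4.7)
The plan is to treat $\vD^*_t := (D^*_{t,1},\dots,D^*_{t,N})^\top$ as a multi-dimensional generalized P\'olya-type process whose deterministic drift is governed by $\Gamma^\top$. From \eqref{birth-prob-inter},
\[
E\bigl[\vD^*_{t+1}-\vD^*_t \mid \mathcal F_t\bigr] = \vZ^*_t,
\qquad (\vZ^*_t)_h = \frac{\theta_h+(\Gamma^\top \vD^*_t)_h}{\theta_h+t},
\]
so, writing $\vD^*_{t+1}=\vD^*_t+\vZ^*_t+\Delta\vM_{t+1}$ with each component of $\Delta\vM_{t+1}$ a conditionally centered Bernoulli, the Perron--Frobenius structure of $\Gamma^\top$ (eigenvalue $\gamma^*\in(0,1)$, right eigenvector $\vu$, left eigenvector $\vv^\top$) becomes the natural object controlling both the growth rate and the asymptotic direction of $\vD^*_t$.

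\textbf{Scalar limit via the $\vv$-projection.}
I would first project onto the left PF eigenvector: set $L_t:=\vv^\top\vD^*_t$. Writing $\boldsymbol{\theta}=(\theta_1,\dots,\theta_N)^\top$, in the case $\theta_h\equiv\theta$ (the general case introduces only lower-order perturbations), one checks directly that
\[
E[L_{t+1}\mid \mathcal F_t]=L_t\Bigl(1+\frac{\gamma^*}{\theta+t}\Bigr)+\frac{\vv^\top\boldsymbol{\theta}}{\theta+t},
\]
so the shifted quantity $Y_t:=L_t+\vv^\top\boldsymbol{\theta}/\gamma^*$ satisfies $E[Y_{t+1}\mid\mathcal F_t]=Y_t(1+\gamma^*/(\theta+t))$. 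Since $a_t:=\prod_{s=1}^{t-1}(1+\gamma^*/(\theta+s))\sim C\,t^{\gamma^*}$, the ratio $Y_t/a_t$ is a non-negative martingale and converges a.s.\ to a finite random variable $R$. Moreover the Bernoulli variances $Z^*_{t,h}(1-Z^*_{t,h})$ are of order $t^{\gamma^*-1}$ at the limit, hence $\sum_t a_{t+1}^{-2}\,\mathrm{Var}(\vv^\top\Delta\vM_{t+1}\mid\mathcal F_t)<\infty$, convergence holds also in $L^2$, and $t^{-\gamma^*}L_t\to R$ a.s.\ with $E[R]>0$.

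\textbf{Alignment, ratios, and main obstacle.}
To upgrade the scalar limit to the vector statement $t^{-\gamma^*}\vD^*_t\to R\vu$ a.s., I would use the spectral decomposition $\Gamma^\top=\gamma^*\vu\vv^\top+A$ in which, by the irreducibility of $\Gamma$ and Perron--Frobenius, the operator $A$ has spectral radius strictly below $\gamma^*$. Projecting $\vD^*_t$ onto the invariant complement of $\mathrm{span}(\vu)$ gives auxiliary processes whose drifts are controlled by $A$; analogous P\'olya-style renormalizations, now with exponents strictly smaller than $\gamma^*$, force the complementary components to be $o(t^{\gamma^*})$ a.s. Combining with the scalar result yields $D^{**}_{\infty,h}=R\,u_h$ and the ratio conclusion $D^*_{t,h}/D^*_{t,j}\to u_h/u_j$ follows immediately. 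The main obstacle is the strict positivity $P(R>0)=1$: the positive-martingale argument alone only gives $R\geq 0$. I would rule out $R=0$ by combining $E[R]>0$ (so $P(R>0)>0$) with a tail 0--1 argument for $\{R=0\}$, exploiting that, conditional on any $\mathcal F_t$, one can produce with strictly positive probability an arbitrarily large number of further novelties; alternatively one can appeal to standard non-degeneracy results for generalized P\'olya urns with irreducible replacement matrix.
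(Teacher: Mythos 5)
Your proposal sets up the right objects (the drift of $\vD^*_t$ governed by $\Gamma^\top$, the projection $\vv^\top\vD^*_t$, the Pólya-type renormalization $Y_t/a_t$), but it leaves a genuine gap exactly at the point you yourself flag as the main obstacle: the almost-sure strict positivity of the limit $R$. Your martingale argument gives $R\ge 0$ with $E[R]>0$, hence only $P(R>0)>0$, and neither of your two proposed repairs works as stated. There is no independence structure that would make $\{R=0\}$ amenable to a Kolmogorov-type zero--one law: the increments $X^*_{t,h}$ are strongly reinforced, and the observation that from any $\mathcal{F}_t$ one can still produce arbitrarily many further novelties with positive probability is perfectly compatible with $D^*_{t,h}=o(t^{\gamma^*})$, i.e.\ with $R=0$. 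Nor is $(\vD^*_t)$ a classical generalized Pólya urn with irreducible replacement matrix (the normalization is by $t$, and the ``replacement'' acts only through the birth probabilities of an infinite-colour triggering scheme), so there is no off-the-shelf non-degeneracy theorem to invoke; the positivity here is of the same delicate nature as the positivity of the limit $W$ in branching-type schemes. The paper has to build precisely this piece: it compares the $\vv$-projection of the birth-probability vector with an auxiliary $[0,1]$-valued process represented as a two-colour urn, proves that the corresponding count $H_t$ diverges and that $1/H_t=o(t^{-1/\theta})$ for every $\theta>1/\gamma^*$, and then applies the Pemantle--Volkov lemma to $\ln(H_t/t^{\gamma^*})$ to obtain a finite limit of the logarithm, hence a strictly positive limit (Theorem~\ref{th-general} and its lemmas). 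Without an argument of this type, both the first statement and the ratio statement (which requires $P(R>0)=1$ in order to divide) remain unproven.

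A second, smaller flaw is the spectral claim in your alignment step: irreducibility of $\Gamma$ does \emph{not} imply that the complementary part $A$ in $\Gamma^\top=\gamma^*\vu\vv^\top+A$ has spectral radius strictly below $\gamma^*$; if $\Gamma$ is irreducible but not primitive there are eigenvalues $\gamma\ne\gamma^*$ with $|\gamma|=\gamma^*$. The conclusion you want still holds, because such eigenvalues are simple and satisfy $\mathcal{R}e(\gamma)<\gamma^*$, and it is the real part (not the modulus) that drives the contraction in the renormalized dynamics; the paper accordingly splits the non-principal spectrum into $|\gamma|<\gamma^*$ and $|\gamma|=\gamma^*$ and handles the second case with a separate estimate. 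So this step is repairable, but your stated justification is incorrect. Apart from these two points, your route — working directly with $\vD^*_t$ rather than with the birth-probability vector $\vZ^*_t$ as the paper does, with your variance summability playing the role of the paper's technical lemma and the paper recovering $D^*_{t,h}$ from $Z^*_{t,h}$ via a conditional strong law — is essentially parallel to the published argument.
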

As a consequence, since the number $D_t^*$ of distinct 
{items}  
observed in the entire system until time-step $t$ coincides 
by model definition with $\sum_{h=1}^N D^*_{t,h}$, we also have 
{that this number grows as $t^{\gamma^*}$, i.e.}    
$$
t^{-\gamma^*}D^*_t\stackrel{a.s.}\longrightarrow D_\infty^{**}=\sum_{h=1}^N D^{**}_{\infty,h}\,.
$$ 
Furthermore, 
{ 
if we denote by $(D_{t,h})$ the {\em discovery} process \cite{iacopini-2020} for agent $h$, that is 
if we denote by $D_{t,h}$ the number of distinct items adopted by agent $h$, }   
then we have 
$D^*_{t,h}\leq D_{t,h}\leq D^*_t$ and so we get 
$$
D_{t,h}=O(t^{\gamma^*})\qquad\mbox{and}\qquad 1/D_{t,h}=O(t^{-\gamma^*}),
$$
which, in particular, imply that, when the quantities $D_{t,h}$ have an asymptotic power law behavior,  
then they necessarily have the same Heaps' exponents, equal to $\gamma^*$. In addition, we obtain 
$$
\frac{u_h}{\sum_{h=1}^N u_h}\leq\liminf_{t}\frac{D_{t,h}}{D_{t,j}}\leq \limsup_t\frac{D_{t,h}}{D_{t,j}}
\leq \frac{\sum_{h=1}^N u_h}{u_j}\,.
$$

\indent 
{The second result of the present work affirms that if $W$ is irreducible, 
that is the graph with the agents as nodes and $W$ as the adjacency matrix is strongly connected, 
then, for each observed item $c$, 
the number of times item $c$ has been adopted by agent $h$ grows linearly. Moreover, 
at the steady state, the times item $c$ has been adopted in the whole system are uniformly  
distributed among the agents. This concept can be reformulated more clearly using the metaphor of 
the Chinese restaurant: the limit composition of each table $c$ is the uniform one
(with respect to the categories).} 
More precisely, we have  

\begin{theorem}\label{th-uniform-partition}
Suppose that the matrix $W$ is irreducible.
Then, for each $h=1,\dots,N$, we have 
$$
\frac{1}{t}K_t(h,c)\stackrel{a.s.}\longrightarrow K_{\infty}(c)
$$
for each observed color $c$ in the system,
where $K_{\infty}(c)$ is a suitable random variable that takes values in $(0,1]$ and does not depend on $h$. 
As a consequence, for each $h=1,\dots,N$, we also have that
$$
\frac{K_t(h,c)}{\sum_{j=1}^N K_t(j,c)}\stackrel{a.s.}\longrightarrow \frac{1}{N}.
$$ 
\end{theorem}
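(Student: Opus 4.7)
My plan is to fix an observed color $c$ (first extracted at time $\tau_c$ in urn $j^*(c)$), treat $\vK_t(c):=(K_t(h,c))_{h=1}^N$ as a vector-valued process, and analyse its normalization $\vY_t:=\vK_t(c)/t$ as a stochastic approximation driven by a linear ODE whose spectrum forces synchronization. From \eqref{old-color-prob-inter}, for $t\geq\tau_c$,
\begin{equation*}
\mathbb{E}\bigl[\vK_{t+1}(c)-\vK_t(c)\,\big|\,\mathcal{F}_t\bigr]=D_t\bigl(W^\top\vK_t(c)-\boldsymbol{\gamma}_{j^*(c)}\bigr),\qquad D_t=\mathrm{diag}\!\bigl(1/(\theta_h+t)\bigr).
\end{equation*}
Since $\|\vK_t(c)\|_\infty\leq t$ and $D_t=(1/t)I+O(1/t^2)$, a direct computation gives
\begin{equation*}
\vY_{t+1}-\vY_t=\tfrac{1}{t+1}(W^\top-I)\vY_t+\tfrac{1}{t+1}\boldsymbol{\xi}_{t+1}+O(1/t^2),
\end{equation*}
with $\boldsymbol{\xi}_{t+1}$ a bounded martingale difference. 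The associated mean-field ODE is $\dot{\vy}=(W^\top-I)\vy$.

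Next I would extract the spectral information of $W^\top$. Because $W$ is irreducible and column-stochastic ($\vone^\top W=\vone^\top$), $W^\top$ is irreducible and row-stochastic: by Perron-Frobenius, $1$ is a simple eigenvalue of $W^\top$ with right eigenvector $\vone$, and there is a unique positive left eigenvector $\boldsymbol{\pi}>\vzero$ with $W\boldsymbol{\pi}=\boldsymbol{\pi}$ and $\vone^\top\boldsymbol{\pi}=1$; every other eigenvalue $\lambda$ satisfies $|\lambda|\leq 1$ with $\lambda\neq 1$, hence $\mathrm{Re}(\lambda-1)<0$. Thus $\mathrm{span}(\vone)$ is the globally attracting equilibrium set of the ODE. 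Projecting $T_t:=\boldsymbol{\pi}^\top\vK_t(c)$ and using $\boldsymbol{\pi}^\top W^\top=\boldsymbol{\pi}^\top$, I obtain
\begin{equation*}
\mathbb{E}\bigl[T_{t+1}/(t+1)\,\big|\,\mathcal{F}_t\bigr]=T_t/t+O(1/t^2),
\end{equation*}
so $(T_t/t)_{t\geq\tau_c}$ is a bounded non-negative quasi-martingale and converges a.s.\ to some $K_\infty(c)\geq 0$ by Doob (or Robbins-Siegmund). Combining this with the standard stochastic-approximation fact that, since $\vY_t\in[0,1]^N$ is bounded, every a.s.\ limit point of $\vY_t$ belongs to the chain-recurrent set $\mathrm{span}(\vone)$ of the mean ODE, and since $\boldsymbol{\pi}^\top(\alpha\vone)=\alpha$, this forces $\vY_t\to K_\infty(c)\vone$ almost surely.

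The hard step and main obstacle is upgrading $K_\infty(c)\geq 0$ to $K_\infty(c)>0$ almost surely, as the spectral/SA argument alone only produces non-negativity. Here I would compare $K_t(j^*(c),c)$ from below with a classical P\'olya-type scheme: because $w_{j^*(c),j^*(c)}>0$ and $K_{\tau_c}(j^*(c),c)=1$, the conditional probability of drawing $c$ from its home urn $j^*(c)$ at time $t+1$ dominates a reinforced single-urn dynamics whose normalized proportion is known to converge to a strictly positive random limit; the bound $T_t\geq\pi_{j^*(c)}K_t(j^*(c),c)$ then transfers positivity to $K_\infty(c)$. The stated corollary follows at once: $\sum_j K_t(j,c)/t\to NK_\infty(c)>0$ a.s., and hence $K_t(h,c)/\sum_j K_t(j,c)\to 1/N$ almost surely.
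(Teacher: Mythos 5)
Your first two steps (the dynamics of $\vY_t=\vK_t(c)/t$, the projection $T_t=\boldsymbol{\pi}^\top\vK_t(c)$ along the Perron eigenvector of $W$, and the ODE/chain-recurrence argument forcing limit points into $\mathrm{span}(\vone)$) are sound and constitute a legitimate alternative to the paper's route, which instead works with the vector of conditional probabilities $\vP_t(c)$ and recycles verbatim the Jordan-block/almost-supermartingale machinery of the proof of Theorem~\ref{th-Zstar} (with $W$ in place of $\Gamma$, so $w^*=1$ and $\zeta_t\equiv 1$), converting back to counts via Lemma~\ref{williams-lemma}. The problem is the step you yourself flag as the hard one: positivity of $K_\infty(c)$. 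Your comparison of $K_t(j^*(c),c)$ with a single-urn reinforced scheme cannot work. Discarding the other urns' counts, the best lower bound on the home-urn probability available from \eqref{old-color-prob-inter} is
$$
P_t\bigl(j^*(c),c\bigr)\;\geq\;\frac{w_{j^*(c),j^*(c)}\,K_t(j^*(c),c)-\gamma_{j^*(c),j^*(c)}}{\theta_{j^*(c)}+t}\,,
$$
so the dominated single-urn dynamics has reinforcement coefficient $\delta=w_{j^*(c),j^*(c)}$, which is strictly less than $1$ as soon as there is any genuine interaction ($w_{j,j^*(c)}>0$ for some $j\neq j^*(c)$, since columns of $W$ sum to $1$). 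By exactly the dichotomy quantified in Theorem~\ref{th-general}, such a scheme grows like $t^{\delta}$ with $\delta<1$: its proportion converges to $0$, not to a strictly positive random limit. Hence the bound $T_t\geq \pi_{j^*(c)}K_t(j^*(c),c)$ only yields $K_t(j^*(c),c)\gtrsim t^{w_{j^*(c),j^*(c)}}$, which is compatible with $K_\infty(c)=0$ and cannot close the argument; and without $K_\infty(c)>0$ the final ratio statement $K_t(h,c)/\sum_j K_t(j,c)\to 1/N$ does not follow either.

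The missing idea is that full reinforcement $\delta=1$ is recovered only at the \emph{aggregate} level, precisely through the projection you already introduced: since $\boldsymbol{\pi}^\top W^\top=\boldsymbol{\pi}^\top$, the process $T_t$ (equivalently, in the paper, $\vv^\top\vP_t(c)$ with $\vv$ the right Perron eigenvector of $W$) satisfies $E[\Delta T_{t+1}\,|\,\mathcal{F}_t]= T_t/t + O(1/t^2)$ with coefficient exactly $1$, and it is to this scalar process that one must apply a dedicated positivity result — in the paper, Theorem~\ref{th-general} with $\delta=w^*=1$, whose proof is itself nontrivial (divergence of $H_t$, the estimate $1/H_t=o(t^{-1/\theta})$, and a Pemantle--Volkov-type quasi-martingale argument on $\ln(H_t/t^\delta)$). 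So the repair is not a different comparison urn but replacing your home-urn lower bound by the $\delta=1$ analysis of $T_t$ itself (or citing/reproving Theorem~\ref{th-general}); as written, the positivity step is a genuine gap.
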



\subsection{Empirical results}\label{sec:real data}

{In this subsection we show that the behaviors predicted by the previous theoretical
results match with the ones we actually observe in two different real data sets: } 
 one taken from the social content aggregation website {\em Reddit}, 
collected, elaborated and made freely available on the web by the authors of  
\cite{monti} at \href{https://github.com/corradomonti/demographic-homophily}{https://github.com/corradomonti/demographic-homophily}, 
and one got from the on-line library {\em Project Gutenberg} at \href{https://www.gutenberg.org/}{https://www.gutenberg.org/}.  
{In order to illustrate these examples, we adopt the metaphor of the Chinese restaurant and so, for each of them, we 
identify the customers' categories and the tables we are looking at. In both examples,  
we consider $N=2$ categories with their sequences of customers who select the tables.   
See Table \ref{tab:table-correspondence-2} for a guide on how to interpret the quantities and the events of interest 
in the considered data sets in terms of the Chinese restaurant metaphor.} 
\\

{\indent We analyze the processes $(D^*_{t,h})$ and $(D_{t,h})$, with $h=1,\,2$, and the composition of the tables,
constructed starting from the real data, in order to verify if 
they exhibit a behavior along time in agreement with the theoretical results of the previous section. 
Specifically, we point out:
\begin{itemize}
\item[1)] the power law behavior of the processes $(D^*_{t,h})$ and $(D_{t,h})$, with $h=1,2$;
\item[2)] the fact that the above processes increases with the same Heaps' exponent 
(the constant $\gamma^*$ in Theorem~\ref{th-synchro-rates});
\item[3)] the convergence of the ratio $D^*_{t,1}/D^*_{t,2}$, or equivalently of the difference 
$\log_{10}(D^*_{t,1})-\log_{10}(D^*_{t,2})$, as $t\to +\infty$;
\item[4)] the convergence of the composition of the tables toward the uniform one (as stated in Theorem~\ref{th-uniform-partition}).
\end{itemize}
For points 1) and 2), we follow the standard method in literature: 
we provide the $\log_{10}-\log_{10}$ plot of the considered processes and the estimate of the common slopes 
of the corresponding lines by a least square interpolation. The goodness of fit of the provided lines with the same slope is 
supported by the extremely high value of the $R^2$ index. 
Regarding point 3), we plot the observed sequence $\log_{10}(D^*_{t,1})-\log_{10}(D^*_{t,2})$ along time, 
in order to highlight how its fluctuations decrease along time 
and how it asymptotically stabilizes. The limit of this process is estimated as  the difference between the 
intercepts of the two lines obtained for $D^*_{t,1}$ and $D^*_{t,2}$ in the $\log_{10}-\log_{10}$ plot. 
This value, denoted as $\widehat{u}$, represents an estimation of the difference 
$\log_{10}(u_1/u_2)=\log_{10}(u_1)-\log_{10}(u_2)$, 
where $r=u_1/u_2$ is the limit quantity in the second part of Theorem~\ref{th-synchro-rates}, which is also 
the ratio of the two centrality scores with respect to $\Gamma^\top$ of the two categories. 
Finally, for point 4), we plot the quantiles of the distribution of the proportion 
$\frac{K_t(1,c)}{K_t(1,c) + K_t(2,c)}$, from the least populated table $c$ to the most populated one, 
in order to appreciate their convergence toward $1/2$.} 
\\
 
\begin{table}[h!]
{
  \begin{center}
    \caption{Correspondence table for applications}
    \label{tab:table-correspondence-2}
\begin{tiny}
    \begin{tabular}{l|l|l} 
    \hline
      \textbf{Chinese restaurant metaphor} & \textbf{Reddit data set} & \textbf{Gutenberg data set}\\
      \hline
      categories & sentiment types (positive/negative) & literary genres (Western/History)\\
      \hline
      customer entering the restaurant & comment & word\\
      \hline
      chosen table & author of the commented news & word\\
      \hline
      occupation of an empty table & first comment to a news & first appearance of a word \\
      \hline
      choice of an already occupied table & comment to a news whose author has already & usage of a word already used before\\
      & received comments to other her news & in some of the literary genres\\
      \hline
      $D^*_{t,h}=$ number, until time-step $t$, & number, until time-step t, & number, until time-step $t$,\\
       of distinct tables occupied & of distinct authors whose & of distinct words whose\\
       for their {\em first} time by & {\em first} received comment & {\em first} apperance has been \\
       a customer belonging to category $h$ & belongs to sentiment category $h$ & in literary genre $h$\\
      \hline
      $D_{t,h}=$ number, until time-step $t$, & number, until time-step $t$, & number, until time-step $t$,\\
       of distinct tables occupied by & of distinct authors who have received  & of distinct words that\\
       {\em at least} one customer  & {\em at least} one comment & have been used in\\
       belonging to category $h$ & belonging to sentiment category $h$ & literary genre $h$\\
      \hline
      $K_t(h,c)=$ number, until time-step $t$, of customers & number, until time-step $t$ of comments & number, until time step $t$, of times\\
      belonging to category $h$ and & received by author $c$ & the word $c$ has been used\\ 
      sitting at table $c$ & belonging to sentiment category $h$  & in literary genre $h$\\ 
      \hline
    \end{tabular}
\end{tiny}
  \end{center}
  }
\end{table}


\subsubsection*{Reddit data set}

This data set  consists of a collection of news, and comments associated to each news, 
for the period $2016-2020$, downloaded from the {\em r/news} community on 
the website {\em Reddit} at \href{https://www.reddit.com/r/news}{https://www.reddit.com/r/news}, which is devoted to 
the discussion of news articles about events 
in the United States and the rest of the world. 
Each news is associated with the author who posted it.   
Moreover, the data set contains the specific topic the news belongs to (we refer to \cite{monti} for details 
about the topic classification) and, to each comment 
is also assigned a measurement of the sentiment, expressed as a real value in $(-1,1)$. It corresponds to 
the ``compound" score given by the VADER (Valence Aware Dictionary and sEntiment Reasoner) Sentiment Analysis
\cite{vader}, which is a lexicon and rule-based sentiment analysis tool, 
specifically thought for sentiments expressed in social media. 
\\

\indent Here we consider only the comments to news belonging to the topic ``Politics".
{Moreover, we categorize the sentiment variable, following \cite{ale-cri-sar-sentiment}: precisely, we define it as }   
``positive" if the provided sentiment value was larger than $+0.35$ and ``negative" if the provided sentiment value was lower than $-0.35$. 
Any comment with an original sentiment value that lies within $-0.35$ and $+0.35$ has been removed.
Summing up, we consider all the comments to the commented news regarding the topic ``Politics", 
with a sentiment value larger than $+0.35$ (positive) or lower than $-0.35$ (negative).
This provides us a total of $3\, 016\, 990$ comments in the negative sentiment category and $2\,602\,173$ comments in the positive sentiment category.
\\

\indent 
{We are interested in the sequence of authors who receive at least one comment with negative sentiment 
for the news they post and in the analogous sequence related to comments with positive sentiment. As explained above, we illustrate  
that these two sequences exhibit the asymptotic behaviors predicted by the proposed model. For this purpose, }  
we firstly identify the main quantities related to the Chinese restaurant version of the model: 
each sentiment category is a customer category (category 1 = negative sentiment and category 2 = positive sentiment)  
and the authors represent the tables. 
Therefore, when at time-step $t$ a news receives a comment with a specific sentiment, then the author who posted such a news 
is ``new" or ``old" for that specific sentiment category if, 
respectively, she has or has not already received a comment within that sentiment category.
Analogously, the author will be ``new" or ``old" for the entire system (the whole collection of comments) if, respectively, 
she has or has not already received any comment. In order to obtain two sequences of comments of the same length,  as required by the model, 
we have randomly removed some comments from the negative sentiment category, i.e. the one containing more comments. In addition, 
we verified successfully that an author is not commented for the first time simultaneously with two comments of different sentiment. 
\\

\indent 
{For each possible sentiment category $h=1,\,2$, 
the observed quantity $D^*_{t,h}$ (i.e. the number, until time-step $t$, of 
distinct authors whose {\em first} received comment belongs to sentiment category $h$) shows a power law growth along time.  
We can observe the same behavior for $D_{t,h}$ (i.e. the number, until time-step $t$, of distinct authors who 
have received {\em at least one} comment belonging to sentiment category $h$).  
Figure \ref{fig:lines_Reddit-new} provides the asymptotic behavior of these processes in $\log_{10}-\log_{10}$ scale,  
where we can also appreciate how the lines exhibit the same slope, which indicates that the processes have the same Heaps' exponent.   
This is exactly in accordance with the first result of Theorem~\ref{th-synchro-rates}.  
The estimated value of the Heaps' exponent, estimated as the common slope of the lines in the $\log_{10}-\log_{10}$ plot  
is $\widehat{\gamma}^*=0.781$.  
Figure \ref{fig:ratio_Reddit} shows the convergence of the process $\log_{10}(D^*_{t,1})-\log_{10}(D^*_{t,2})$ 
toward the estimated limit value  $\widehat{u}=-0.727$, computed as the difference between the intercepts of the two regression lines 
 for the two processes $(D^*_{t,h})$ in Figure \ref{fig:lines_Reddit-new}. This value is an estimation of the 
quantity $u=\log_{10}(r)$, where $r=u_1/u_2=10^u$ is the limit in the second result of Theorem~\ref{th-synchro-rates}.}
\\

 \begin{figure}[htp]
\centering
\includegraphics[height=0.3\textheight, width=0.8\textwidth]{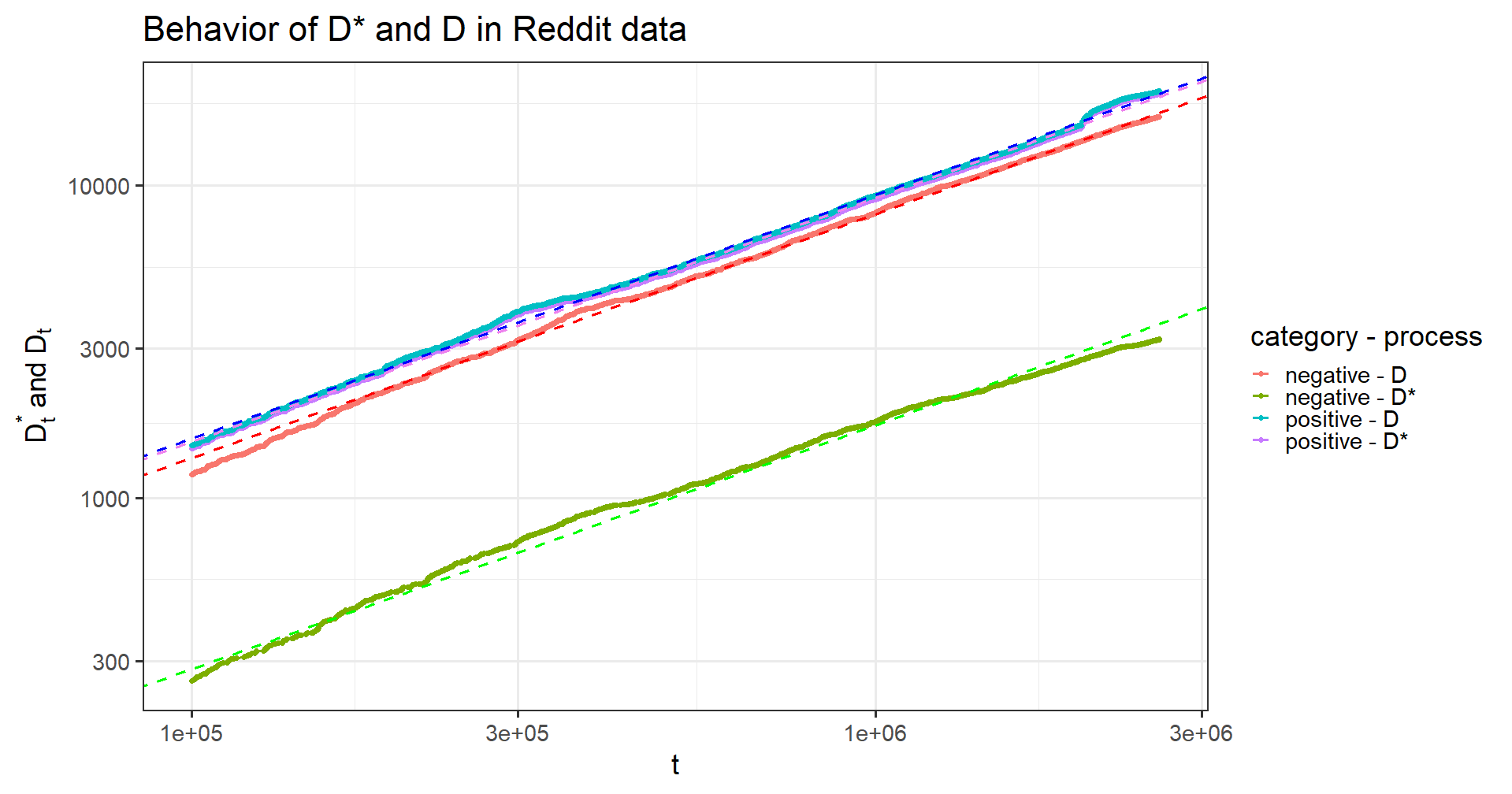}  
\caption[]{
{ Reddit data set.  
Linear behavior of $(D^{*}_{t,h})$ and $(D_{t,h})$ along time, for $h=1,2$, in $\log_{10}-\log_{10}$ scale. 
The dashed lines are obtained by a least square interpolation. The goodness of fit $R^2$ index is $0.9984$. 
The estimated common slop is $\widehat{\gamma}^*=0.781$.
}}
\label{fig:lines_Reddit-new}.  
\end{figure}

\begin{figure}[htp]
\centering
\includegraphics[height=0.3\textheight, width=0.8\textwidth]{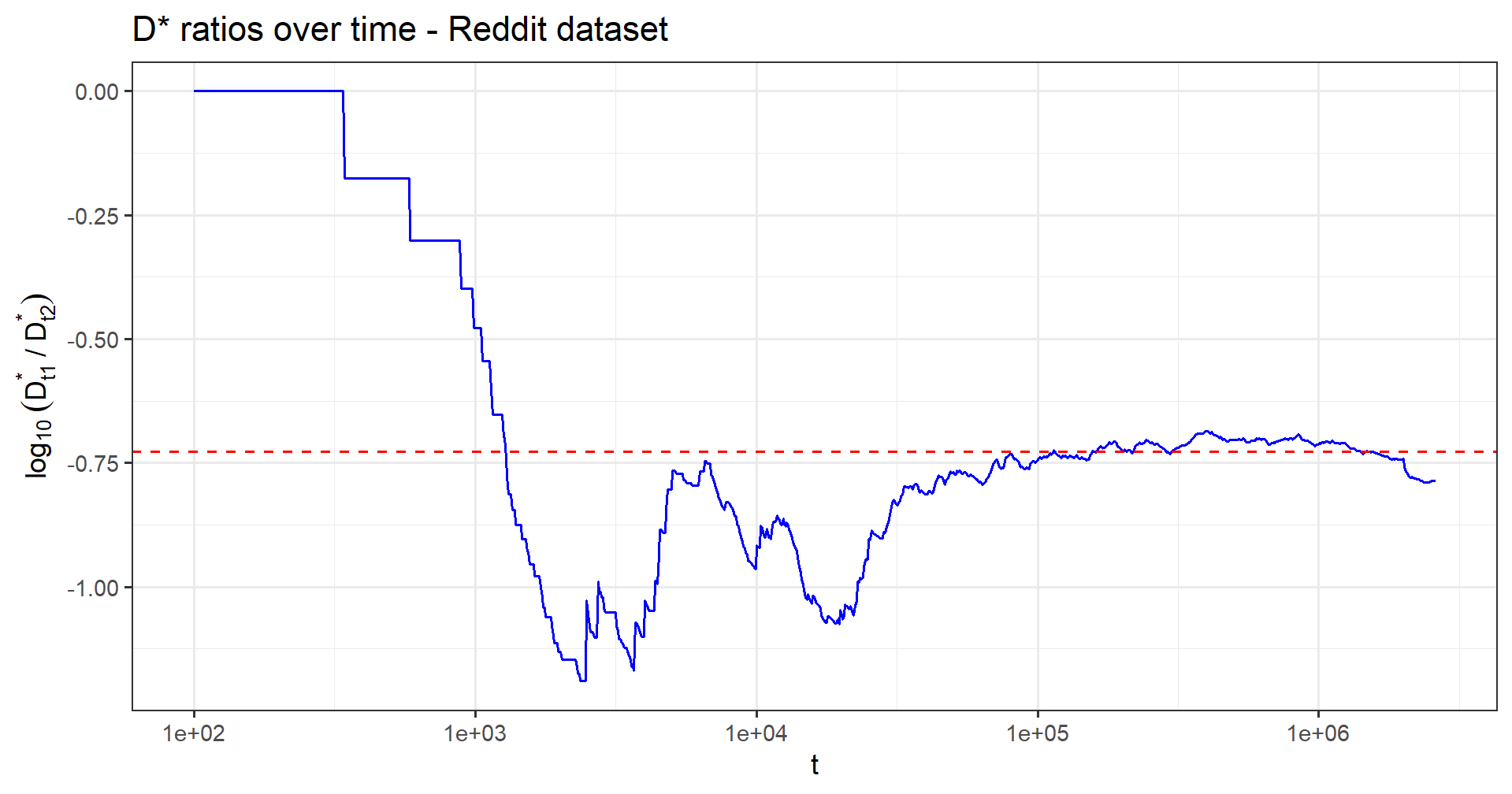}  
\caption[]{ 
{Reddit data set.  
Plot of the process $\log_{10}(D^*_{t,1})-\log_{10}(D^*_{t,2})$ along time. The horizontal dashed red line  
represents the estimated limit value $\widehat{u}=-0.727$.} }
\label{fig:ratio_Reddit}.  
\end{figure}

\indent Regarding the table composition, we provide Figure \ref{fig:Table_composition_Reddit-new} with  
the proportion of comments with negative sentiment received by an author over the total number of received comments. 
More precisely, we plot the quantiles of the empirical distribution of this proportion, 
from the least commented author to the most commented one.    
{In order to construct the quantiles, we have listed the authors (tables) from the least commented to the most commented, 
removing those commented less than $10$ times (tables with less than $10$ customers),  
then we have grouped these authors taking intervals of equal length ($0.5$ in $\log_{10}$ scale). 
Finally, within each group, we have computed the quantiles of the empirical distribution of  
the proportion of the comments with negative sentiment the authors have received with respect to the total number of received comments.}  
We can appreciate how these quantiles get closer to $1/2$ (the uniform composition) as the number of received comments increases. 
This is in accordance with Theorem~\ref{th-uniform-partition}. 
\\
 
\begin{figure}[htp]
\centering
\includegraphics[height=0.3\textheight, width=0.8\textwidth]{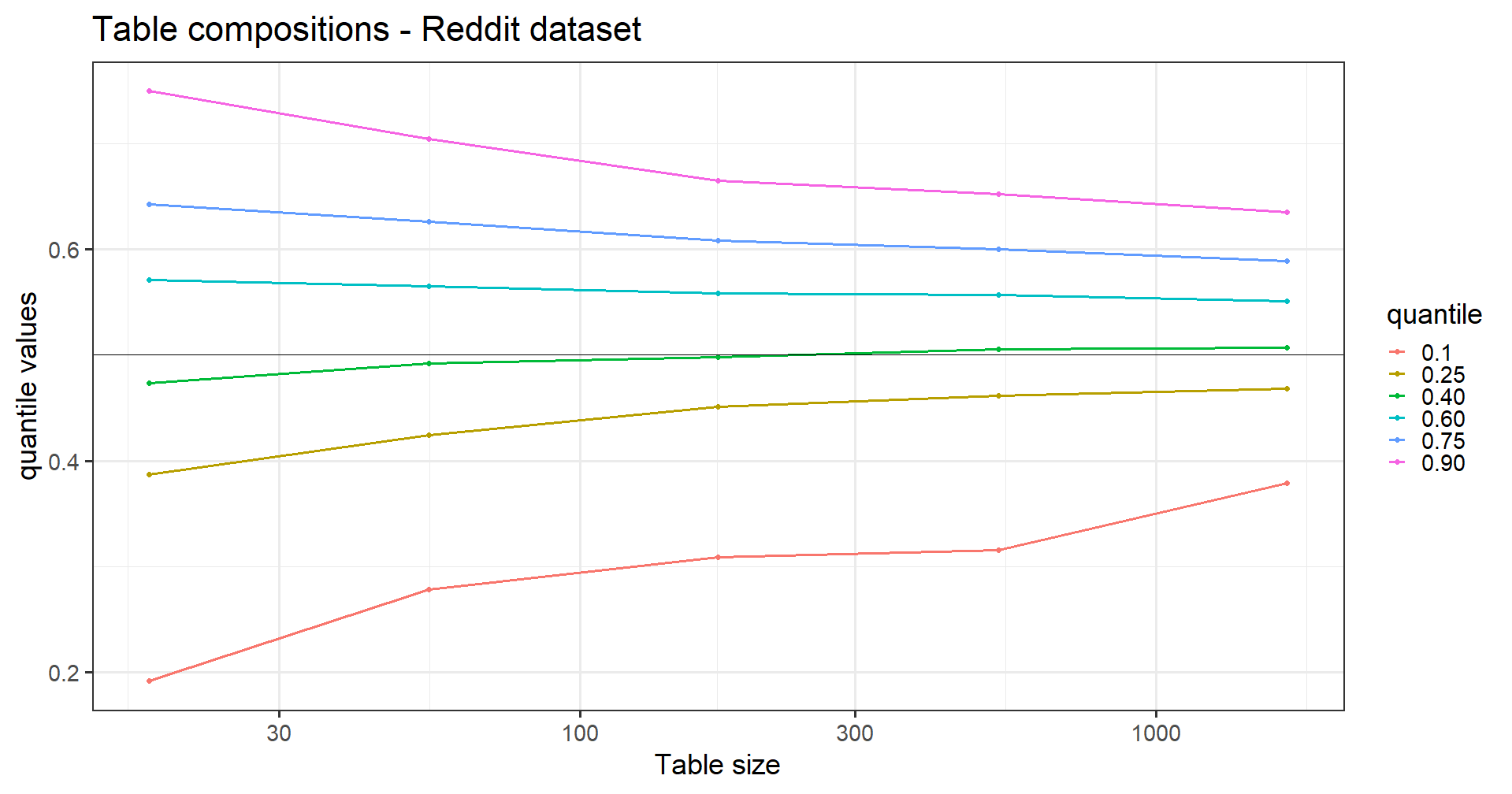}
\caption[]{Reddit data set. 
Real data: Quantiles of the distribution of the proportion of comments with negative sentiment received by the authors  
along their number of received comments, form the least commented to the most commented. } 
\label{fig:Table_composition_Reddit-new}
\end{figure}
  

\subsubsection*{Gutenberg data set}

We downloaded this data set from the on-line library {\em Project Gutenberg}. 
It consists of a collection of over $70\,000$ free ebooks.
After selecting only those written in English and classifying them in different topics, we decided to 
focus on two particular 
{literary genres: ``Western" and ``History". 
For each of them}, we have considered all the words contained in seven books, 
for a total of $480\,460$ words for ``Western" and $476\,948$ words for ``History"  
(after a slight pre-processing: e.g. removal of punctuation, spaces, numbers and words with $1$ or $2$ characters and 
acquisition of the stem of the words 
by means of Dr. Martin Porter's stemming algorithm \cite{porterSA}).  
\\ 

\indent 
{We are interested in the two sequences of words for the two different literary genres and,  
as explained at the beginning of this subsection, we would like to 
check that these two sequences exhibit the asymptotic behaviors predicted by the proposed model. In order to do so, } 
we firstly identify the main quantities related to 
the Chinese restaurant version of the model: each 
{literary genre} is a customer category 
(category 1 = ``Western" and category 2 = ``History") and 
the words represent the tables. 
Therefore, each word will be ``new" or ``old" for a specific 
{literary genre if, respectively, 
it has or has not already been used within that genre. } 
Analogously, each word will be ``new" or ``old" for the entire system if, respectively, 
it has or has not already been used within any considered book.
In order to obtain two sequences of words of the same length,  as required by the model, we have randomly removed some words from the 
category ``Western", i.e. the one containing more words. In addition, we verified successfully that a new word does not appear 
for the first time simultaneously in both genres.\\

\indent 
{For each literary genre $h=1,\,2$, the observed quantity $D^*_{t,h}$ 
(i.e. the number, until time-step $t$, of distinct words 
whose {\em first} appearance has been in literary genre $h$) shows a power law growth along time.  
The same behavior is shown by $D_{t,h}$ (i.e. the number, until-time-step $t$, 
of distinct words used in literary genre $h$). 
Figure \ref{fig:lines_Gutenberg-new} provides the asymptotic behavior of these processes in $\log_{10}-\log_{10}$ scale,  
where we can also appreciate how the lines exhibit the same slope, which indicates that the processes have the same Heaps' exponent.   
This is exactly in accordance with the first result of Theorem~\ref{th-synchro-rates}.  
The estimated value of the common Heaps' exponent, estimated as the common slope of the lines in the $\log_{10}-\log_{10}$ plot is 
$\widehat{\gamma}^*=0.466$. Figure \ref{fig:ratio_Gutenberg} shows the convergence of the process 
$\log_{10}(D^*_{t,1})-\log_{10}(D^*_{t,2})$ toward the estimated limit value $\widehat{u}=-0.238$, 
computed as the difference between the intercepts of the two lines for the two processes $(D^*_{t,h})$
 in Figure \ref{fig:lines_Gutenberg-new}. This value is an estimation of the 
  quantity $u=\log_{10}(r)$, where $r=u_1/u_2=10^u$ is the limit in the second result of Theorem~\ref{th-synchro-rates}. 
  With respect to the Reddit data set, we can observe that here the convergence is slower.}
\\

\begin{figure}[htp]
\centering
\includegraphics[height=0.3\textheight, width=0.8\textwidth]{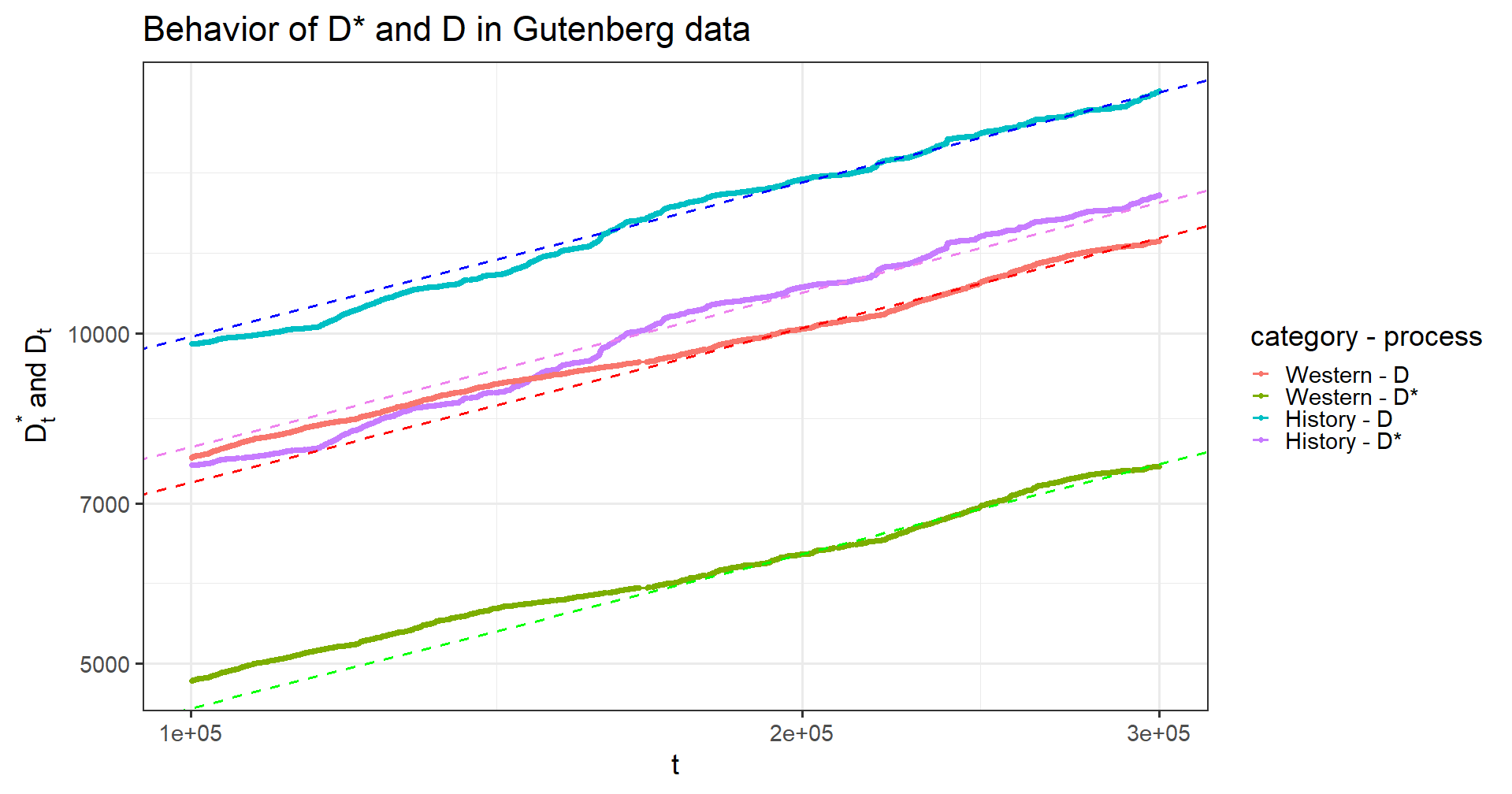} 
\caption[]{
{Gutenberg data set.  
Linear behavior of $(D^{*}_{t,h})$ and $(D_{t,h})$ along time, for $h=1,2$, in $\log_{10}-\log_{10}$ scale. 
The dashed lines are obtained  by a least square interpolation. The goodness of fit $R^2$ index is $0.9937$. 
The estimated common slope is $\widehat{\gamma}^*=0.466$.}}
\label{fig:lines_Gutenberg-new}
\end{figure}

\begin{figure}[htp]
\centering
\includegraphics[height=0.3\textheight, width=0.8\textwidth]{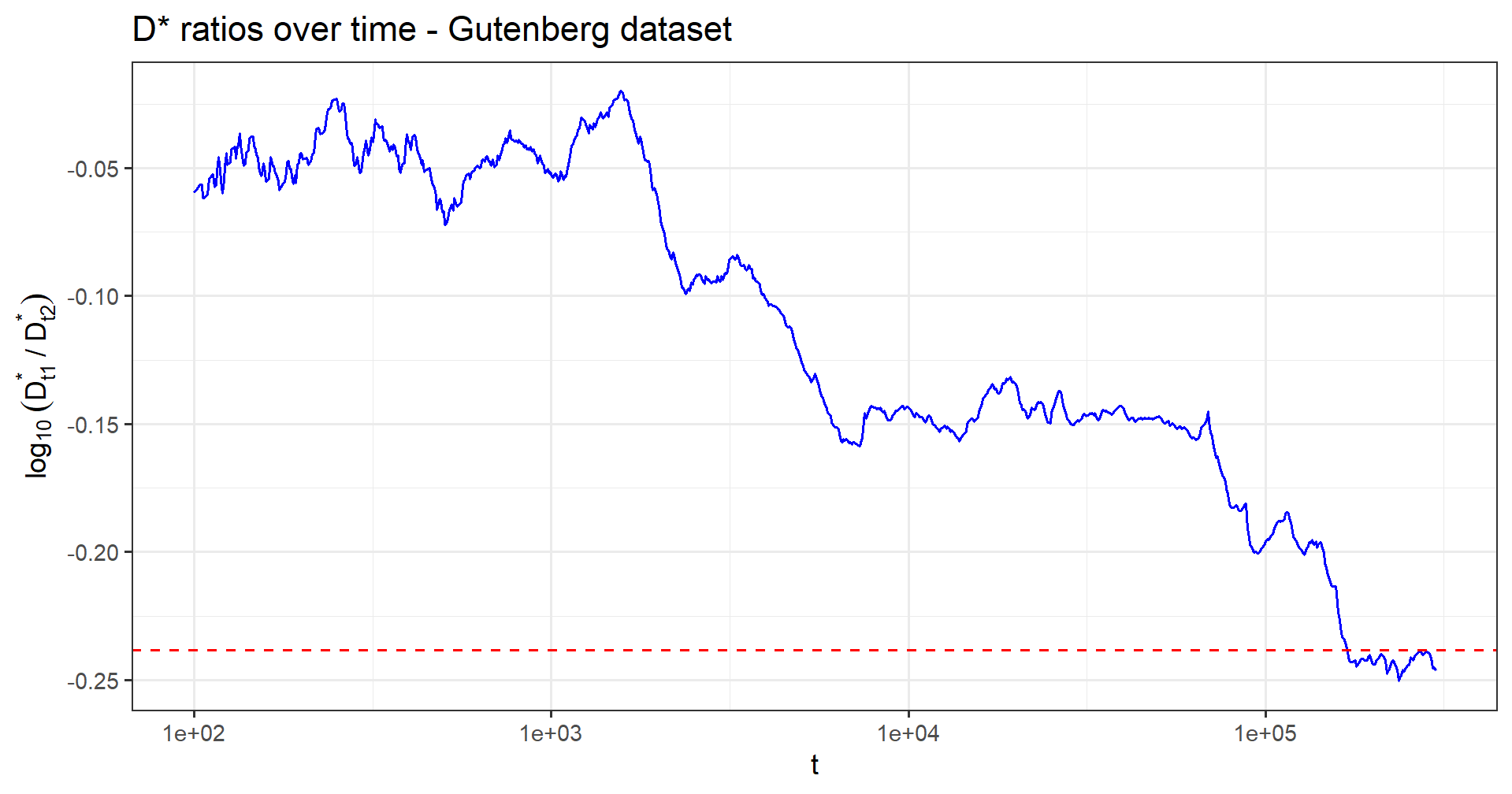}  
\caption[]{
{Gutenberg data set.  
Plot of the process $\log_{10}(D^*_{t,1})-\log_{10}(D^*_{t,2})$ along time. The horizontal dashed red line  
represents the estimated limit value $\widehat{u}=-0.238$.}}
\label{fig:ratio_Gutenberg}.  
\end{figure}

\indent Regarding the table composition, we provide Figure \ref{fig:Table_composition_Gutenberg-new}  with  
the proportion of times a word has been used in the topic ``Western" over the total number of times it 
has been used in the entire system. As in the previous application, we plot the quantiles of the empirical distribution of 
this proportion along the frequency of the words in the system, from the least frequent to the most 
frequent. 
{In order to construct the quantiles, we have listed the words (tables) 
from the least frequent to the most frequent, removing those appeared less than $10$ times (tables with less than $10$ customers),  
then we have grouped these words taking intervals of equal length ($0.5$ in $\log_{10}$ scale). 
Finally, within each group, we have computed the quantiles of the empirical distribution of  
the proportion of times the words have been used in the topic ``Western" with respect to the total number of times it 
has been used in the entire system.  } We can appreciate how these quantiles get closer to $1/2$ 
(the uniform composition) as the frequency of the word increases. This is in accordance with Theorem~\ref{th-uniform-partition}.
\\
 
\begin{figure}[htp]
\centering
\includegraphics[height=0.3\textheight, width=0.8\textwidth]{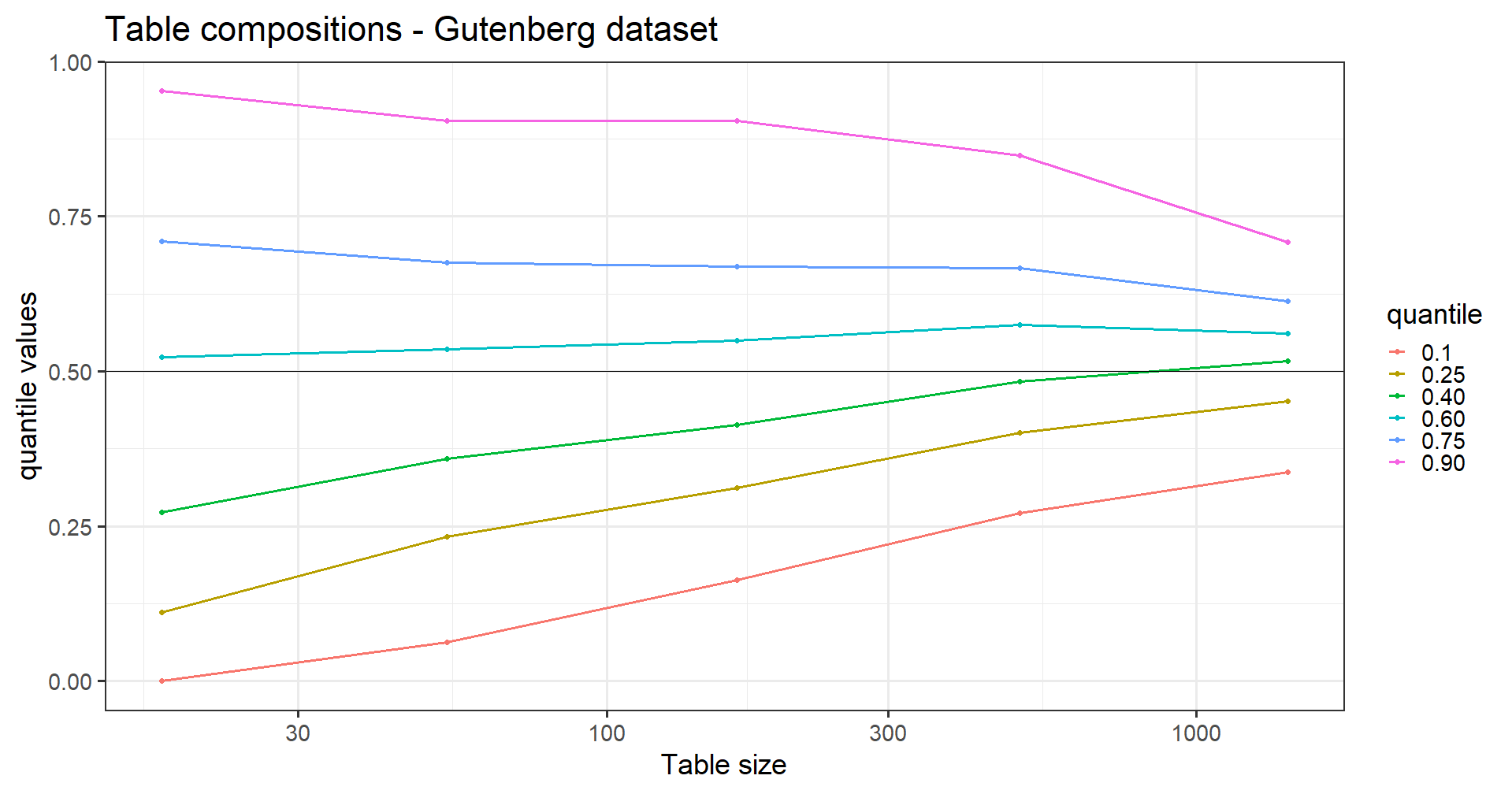}
\caption[]{Gutenberg data set. 
Real data: Quantiles of the distribution of the proportion of times the words have been used in the topic ``Western" 
along their frequency in the system, form the least frequent to the most frequent.  } 
\label{fig:Table_composition_Gutenberg-new}
\end{figure}


\section{Discussion}\label{discussion}

In this work we have introduced a general model in order to analyze a system of $N$ 
{\em interacting} innovation processes. 
The interaction among the processes is ruled by two matrices $\Gamma$ and $W$.  
The first one regulates the production of potential novelties, while the second one tunes the 
interaction with respect to the choice of an old item.  
When matrix $\Gamma$ is irreducible, 
{we have proven 
that the numbers $D^*_{t,h}$, with $h=1,\dots, N$, of distinct novelties for the entire system produced by agent $h$ until time-step $t$ 
have and asymptotic power law behavior with a common Heaps' exponent $0<\gamma^*<1$.  
Moreover, we have proven that the ratio $D^*_{t,h}/D^*_{t,j}$ converges almost surely toward the ratio $u_h/u_j$ 
of the relative centrality scores of $h$ and $j$.
Finally, when the matrix $W$ is irreducible, we have proven that, for each observed item $c$, the 
number of times item $c$ has been adopted by agent $h$ (i.e.~the number of customers of category $h$ sitting at table $c$) 
grows linearly and the proportions of times it has been adopted by agent $h$ over 
the number of times it has been adopted in general in the system converges almost surely to $1/N$ 
 (i.e.~the asymptotic composition of table $c$, with respect to the different $N$ categories, is the uniform one). 
 \\
 \indent In order to highlight the potentialities of the proposed model and of the proven related results 
 in the study of the interaction among innovation processes, we have illustrated that the behaviors predicted by the 
 provided theoretical results match with the ones we observe in two real data sets. 
One interesting research line that we have in mind for the future is to study the speed of convergence for the limits given in 
the shown theoretical results, in order to   
develop statistical instruments for an accurate inference on the two interaction matrices, $\Gamma$ and $W$, from the real data.  
Regarding this issue, it is important to note that the value $\gamma^*$ and the vector $\mathbf{u}=(u_h)_{h=1,\dots,N}$ do not 
uniquely determine the matrix $\Gamma$. In other terms, given the estimates of $\gamma^*$ and of $\mathbf{u}$, there exist 
infinite matrices $\Gamma$ that could have generated that estimated values. 
This is a tough point to deal with and further theoretical results are needed if we want to detect the 
model parameters from the data. 
In the Supplementary material, we present an idea for a first estimation of the interaction matrices in the case $N=2$: 
after the estimation of $\gamma^*$ and $\log_{10}(r)=\log_{10}(u_1/u_2)$ as the common slope and the difference of 
the intercepts, respectively, of the lines related to the observed processes $(D^*_{t,h})$, $h=1,\,2$, plotted in $\log_{10}-\log_{10}$ scale,  
we can consider parametric families of matrices $\Gamma$ and $W$ compatible with these estimated values 
and we can perform a Maximum Likelihood Estimation (MLE) in order to detect the remaining parameters that better fit the data. 
However, for having a robust MLE procedure, we need to reduce the number of parameters by 
imposing some restrictions on them, for instance the symmetry of the matrices. 
We have tested this procedure on some simulations and the results are collected in the Supplementary material.}
\\
\indent Regarding the model assumptions, we point out that the balance condition~\eqref{balance-cond} 
forces to have Heaps' exponents strictly smaller than $1$.   
Since eliminating this condition in the case of a single process ($N=1$) makes an exponent equal to $1$ possible 
\cite{ale-cri_SR, Tria3, Tria1, Tria2},   
it is plausible that it would be the same also for $N\geq 2$. 
Therefore, a second research line for the future is to investigate the proposed model without assuming the balance condition.  Moreover, the 
balance condition forces $w_{h,j}$ (the parameter governing the interaction in the selection of an old item)  
to be large whenever $\gamma_{h,j}$ (the parameter tuning the interaction with respect to the production of potential novelties) 
is large and, vice versa, $\gamma_{h,j}$ is necessarily small whenever $w_{h,j}$ is small.
On the contrary, the proposed model without the restriction of the balance condition may include cases where $\gamma_{h,j}$ is large, 
but $w_{h,j}$ is small.
\\
\indent Another model assumption that could be removed is the simultaneity in the extractions from all the urns of the system 
(i.e. in the arrivals of the customers for all the categories). Indeed, this condition forces to have the same number of observations 
for each process of the system. This variant of the model could be 
obtained by inserting a selection mechanism for the urn from which the extraction at a certain time-step will be performed 
(i.e. for the category of the customer who will enter the restaurant at a certain time-step). This selection could be driven 
by a reinforcement mechanism on the number of times an urn (category) has been selected.
\\
\indent Finally, regarding the assumption of irreducibility in the theoretical results, 
we underline that when the matrices are not irreducible, it is possible 
to decompose them in irreducible sub-matrices such that 
the union of the spectra of the sub-matrices coincides with the spectrum of the original matrix. Then, a deeper analysis starting from the present theory is needed in the same spirit of \cite{ale-ghi, ale-cri-ghi, ale-cri-ghi-complete}. 
{See also the Supplementary material for an heuristic argument in order to deduce the rate at which each $D^*_{t,h}$ grows 
 in the case of a general (i.e.~not necessarily irreducible) matrix $\Gamma$. }
 


\section*{Acknowledgements}
Giacomo Aletti is a member of the Italian Group ``Gruppo
Nazionale per il Calcolo Scientifico'' of the Italian Institute
``Istituto Nazionale di Alta Matematica''. Irene Crimaldi is a 
member of the Italian Group ``Gruppo Nazionale per l'Analisi
Matematica, la Probabilità e le loro Applicazioni'' of the Italian
Institute ``Istituto Nazionale di Alta Matematica''.  
Work partially done while Giacomo Aletti and Andrea Ghiglietti were hosted at
\emph{MATRIX-MFO Tandem Workshop 2023}.
\\[5pt]
\noindent{\bf Funding Sources}\\
\noindent Irene Crimaldi is partially supported by the Italian
``Programma di Attivit\`a Integrata'' (PAI), project ``TOol for
Fighting FakEs'' (TOFFE) funded by IMT School for Advanced Studies
Lucca.

%
%

\appendix

\makeatletter
\let\oldtitle\@title
\let\@title\@empty
\makeatother

\markright{SUPPLEMENTARY MATERIAL}
\newpage
\setcounter{section}{0}
\setcounter{page}{1}
\setcounter{figure}{0}
\setcounter{equation}{0}
\renewcommand{\thesection}{S\arabic{section}}
\renewcommand{\thepage}{s\arabic{page}}
\renewcommand{\thetable}{S\arabic{table}}
\renewcommand{\thefigure}{S\arabic{figure}}
\renewcommand{\theequation}{S:\arabic{figure}}




\section*{\large\textbf{\uppercase{Supplementary material for Interacting Innovation processes: case studies from Reddit and Gutenberg}}}
\thispagestyle{empty}


\section{Analytical proofs}

Denote by $X^*_{t,h}$ the random variable that takes value $1$ when the ball extracted from urn $h$ at time-step $t$ has 
a new (for all the system) color and is equal to $0$ otherwise. Then $Z^*_{t,h}$ defined in \eqref{birth-prob-inter} 
coincides with $P(X^*_{t+1,h}=1\,|\,\mbox{past})=E[X^*_{t+1,j}\,|\,\mbox{past}]$ and 
$D^*_{t,j}$ can be written as $\sum_{n=1}^t X^*_{n,j}$. Therefore, since we have  
 $$
 Z^*_{t,h}=\frac{\theta_h+\sum_{n=1}^t\sum_{j=1}^N \gamma_{j,h}X^*_{n,j}}{\theta_h+t}\,,
 $$
 we obtain the following dynamics for $Z^*_{t,h}$:
 \begin{equation*}
 Z^*_{0,h}=1,\qquad Z^*_{t+1,h}=(1-r_{t,h})Z^*_{t,h}+r_{t,h}\sum_{j=1}^N \gamma_{j,h}X^*_{t+1,j}\quad\mbox{for } t\geq 0\,,
 \end{equation*}
 where $r_{t,h}=1/(\theta_h+t+1)=1/(t+1)+O_h(1/t^2)$. 
 The corresponding vectorial dynamics 
{for $\vZ=(Z_{t,1},\dots,Z_{t,N})^\top$} is 
\begin{equation}\label{eq-dynamics-vector}
  \begin{split}
    {\vZ}^*_0&=\vone\\
    {\vZ}^*_{t+1}&=
    \left(1-\frac{1}{t+1}\right){\vZ}^*_{t}+\frac{1}{t+1} \Gamma^T {\vX}_{t+1}^* + \boldsymbol{O}(1/t^2)\\
    &=
    {\vZ}^*_t - \frac{1}{t+1}(I-\Gamma^\top){\vZ}^*_t + \frac{1}{t+1}\Gamma^\top \Delta {\vM}^*_{t+1} +
    \boldsymbol{{O}}(1/t^2)\quad\mbox{for } t\geq 0,
\end{split}
\end{equation}
where $\Delta{\vM}^*_{t+1}={\vX}^*_{t+1}-{\vZ}^*_t$ and 
$\boldsymbol{{O}}(1/t^2)=(O_1(1/t^2),\dots,O_N(1/t^2))^{\top}$.  
\\

\indent We prove the following key result:

\begin{theorem}\label{th-Zstar}
Under the same assumptions and notation of Theorem~\ref{th-synchro-rates}, we have 
$$
t^{1-\gamma^*}{\vZ}^*_t\stackrel{a.s.}\longrightarrow \widetilde{Z}^{**}_\infty{\vu}\,,
$$
where $\widetilde{Z}^{**}_\infty$ is an integrable strictly positive random variable.
\end{theorem}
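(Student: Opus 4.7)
The plan is to treat \eqref{eq-dynamics-vector} as a vector-valued stochastic approximation whose drift $-\frac{1}{t+1}(I-\Gamma^\top)\vZ^*_t$ has leading eigenvalue $1-\gamma^*$ with right eigenvector $\vu$. First I would project onto $\vv^\top$ to extract the slow one-dimensional dynamics, then control the components in the complement of $\vu$.

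\textbf{Scalar reduction.} Multiplying \eqref{eq-dynamics-vector} on the left by $\vv^\top$ and using $\vv^\top\Gamma^\top=\gamma^*\vv^\top$, the process $\zeta_t=\vv^\top\vZ^*_t$ (strictly positive because each $Z^*_{t,h}>0$ and $\vv$ has strictly positive entries) satisfies
$$
\zeta_{t+1}=\Bigl(1-\frac{1-\gamma^*}{t+1}\Bigr)\zeta_t+\frac{\gamma^*}{t+1}\vv^\top\Delta\vM^*_{t+1}+O(1/t^2).
$$
Set $a_t=\prod_{s=1}^{t}\bigl(1-(1-\gamma^*)/(s+1)\bigr)^{-1}\sim C\,t^{1-\gamma^*}$ and $\xi_t=a_t\zeta_t$, so that $\xi_0=\vv^\top\vone=1$. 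Telescoping gives
$$
\xi_t=1+\sum_{s=1}^{t}\frac{a_s\gamma^*}{s+1}\,\vv^\top\Delta\vM^*_s+R_t,
$$
where $R_t$ has absolute variation bounded by $C\sum_s a_s/s^2=O\bigl(\sum s^{-(1+\gamma^*)}\bigr)<\infty$; hence $\xi_t$ is, modulo a summable perturbation, the partial sum of a martingale.

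\textbf{Convergence and positivity of $\xi_t$.} The key observation is that $X^*_{t+1,h}\sim\mathrm{Bernoulli}(Z^*_{t,h})$, so the conditional variance of $\vv^\top\Delta\vM^*_{t+1}$ is at most $C\sum_h Z^*_{t,h}\leq C'\zeta_t$. Consequently the predictable quadratic variation of the martingale part of $\xi_t$ is dominated by $C\sum_s a_s\,\xi_{s-1}/(s+1)^2$; taking expectations and noting that $E[\xi_t]$ satisfies the purely deterministic recursion (hence stays bounded), the expected total quadratic variation is at most $C\sum_s a_s/s^2<\infty$. Doob's $L^2$-convergence theorem then yields $\xi_t\to\widetilde Z^{**}_\infty$ almost surely and in $L^2$. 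Integrability of the limit is immediate; strict positivity follows from the $L^2$-boundedness (which gives $E[\widetilde Z^{**}_\infty]=\lim_t E[\xi_t]$, lower-bounded by $1$ minus the absolutely summable deterministic correction) combined with a zero-one argument in the spirit of \cite{pemantle-volkov-1999} to exclude the event $\{\widetilde Z^{**}_\infty=0\}$.

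\textbf{Alignment onto $\vu$.} Set $Q=I-\vu\vv^\top$. Since $\Gamma^\top\vu=\gamma^*\vu$ and $\vv^\top\Gamma^\top=\gamma^*\vv^\top$, the operator $Q$ commutes with $\Gamma^\top$, so $\mathrm{Range}(Q)$ is $\Gamma^\top$-invariant and the residual $\vW_t=\vZ^*_t-\zeta_t\vu$ obeys the same recursion \eqref{eq-dynamics-vector} restricted to $\mathrm{Range}(Q)$. On this subspace the spectrum of $I-\Gamma^\top$ consists of $\{1-\lambda\}$ for the non-Perron eigenvalues $\lambda$ of $\Gamma$, all satisfying $\Re(1-\lambda)>1-\gamma^*$ by Perron-Frobenius. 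A Lyapunov estimate in a norm adapted to the Jordan form of $\Gamma^\top|_{\mathrm{Range}(Q)}$, combined with the same martingale/noise control as above, then gives $\|\vW_t\|=o(t^{-(1-\gamma^*)})$ a.s., so $t^{1-\gamma^*}\vW_t\to\vzero$. Adding this to $t^{1-\gamma^*}\zeta_t\to\widetilde Z^{**}_\infty$ delivers the claimed convergence $t^{1-\gamma^*}\vZ^*_t\to\widetilde Z^{**}_\infty\vu$.

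\textbf{Main obstacle.} The delicate point is the quadratic-variation estimate: the naive bound $\|\Delta\vM^*_s\|^2\le N$ produces a series $\sum s^{-2\gamma^*}$ that diverges for $\gamma^*\le 1/2$, and only the sharper conditional-variance estimate $E[\|\Delta\vM^*_s\|^2\mid\mathcal F_{s-1}]\le C\zeta_{s-1}$--- mildly self-referential and resolved via the bound $E[\xi_t]=O(1)$--- covers the full range $\gamma^*\in(0,1)$ in a single stroke. A secondary technical point is obtaining the correct decay of $\vW_t$ when $\Gamma$ has complex or defective non-Perron eigenvalues, which requires working in a norm adapted to the Jordan structure rather than a purely spectral projection.
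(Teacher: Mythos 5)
Your overall architecture coincides with the paper's proof: project onto the left Perron eigenvector $\vv$, normalize by the factor $\sim t^{1-\gamma^*}$, control the martingale part through the self-referential conditional-variance bound $E[\|\Delta\vM^*_{t+1}\|^2\,|\,\mathcal F_t]\leq C\,\vv^\top\vZ^*_t$ together with the boundedness of $E[\xi_t]$ (this is exactly the paper's Lemma on $V^*_t$), and kill the components complementary to $\vu$ using the spectral gap $\Re(\gamma)<\gamma^*$ for the non-Perron eigenvalues. Up to bookkeeping (the paper uses a Robbins--Siegmund almost-supermartingale argument plus uniform integrability where you invoke $L^2$ martingale convergence, and it treats $|\gamma|<\gamma^*$ and $|\gamma|=\gamma^*$ separately where you use a Jordan-adapted norm), these parts are sound and essentially identical to the paper.

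The genuine gap is the strict positivity of $\widetilde Z^{**}_\infty$. Your argument is ``$E[\widetilde Z^{**}_\infty]=\lim_t E[\xi_t]>0$, plus a zero--one argument in the spirit of Pemantle--Volkov to exclude $\{\widetilde Z^{**}_\infty=0\}$.'' A strictly positive expectation only yields $P(\widetilde Z^{**}_\infty>0)>0$, and there is no available zero--one law for this event: the limit is a genuinely random variable of a strongly path-dependent (reinforced) process, the event $\{\widetilde Z^{**}_\infty=0\}$ is not of Hewitt--Savage or exchangeable type, and L\'evy's $0$--$1$ law gives no dichotomy. Excluding the null limit is precisely where the paper invests its main technical effort: it proves a separate result (Theorem~\ref{th-general}) for a $[0,1]$-valued process $\mathcal W_t$ with $E[Y_{t+1}\,|\,\mathcal F_t]\sim\gamma^*\mathcal W_t$ that approximates $\widetilde Z^*_t$ up to an $O(1/t^2)$ error, represents it as a two-color urn with total added mass $H_t$, shows first that $H_t\to\infty$ a.s.\ and then the a priori polynomial lower bound $1/H_t=o(t^{-1/\theta})$ for every $\theta>1/\gamma^*$ (via an eventual supermartingale $t/H_t^\theta$), and only then applies the Pemantle--Volkov convergence lemma to $L_t=\ln(H_t/t^{\gamma^*})$, checking that $\sum_t|\Delta_t|<\infty$ and $\sum_tQ_t<\infty$ — a verification that itself needs the polynomial lower bound on $H_t$. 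Convergence of $L_t$ to a finite random variable is what forces the limit of $t^{1-\gamma^*}\widetilde Z^*_t$ to be strictly positive a.s. As written, your sketch names the right reference but does not contain this quantitative mechanism, and the ``zero--one'' shortcut it relies on does not exist; this step therefore needs to be supplied, not merely cited.
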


\begin{proof} 
We firstly want to decompose the vectorial process ${\vZ}^*_t$ based 
on the Jordan representation of the matrix $\Gamma$. 
Specifically, for any $\gamma\in Sp(\Gamma^\top)\setminus \gamma^*$, 
we can denote as $J_\gamma$ the Jordan block and with $U_\gamma$ and $V_\gamma$ the matrices whose columns are, respectively, the
left and right (possibly generalized) eigenvectors of $\Gamma$ associated to the eigenvalue $\gamma$, i.e. 
$$\Gamma V_\gamma = V_\gamma J_\gamma
\qquad\text{and}\qquad
U_\gamma^\top \Gamma = J_\gamma U_\gamma^\top.$$
Then, we can consider the decomposition
$$
{\vZ}^*_t=\widetilde{Z}^*_t{\vu}+\sum_{\gamma\in Sp(\Gamma^\top)\setminus \gamma^*}{\vZ}^*_{\gamma,t}\,,
$$
where $\widetilde{Z}^*_t={\vv}^\top{\vZ}^*_t$ and
${\vZ}^*_{\gamma,t}= U_\gamma V_\gamma^\top {\vZ}^*$.
Secondly, we set
$$
\zeta_0=1,\qquad \zeta_{t}=1/\prod_{k=1}^{t}\left[1-\frac{(1-\gamma^*)}{k}\right]\sim
t^{1-\gamma^*}\uparrow +\infty 
$$
and 
$$
\vZ^{**}_t= \zeta_t\vZ^{*}_t,
\quad 
\widetilde{Z}^{**}_{t} = \zeta_t\widetilde{Z}^*_{t}\quad\mbox{and}\quad 
{\vZ}^{**}_{\gamma,t} = \zeta_t {\vZ}^*_{\gamma,t}
$$
(note that $\widetilde{Z}^{**}_t$ is non-negative but not bounded by $1$ as $\widetilde{Z}^*_t$)
 so that we have 
$$
\vZ^{**}_t=\widetilde{Z}^{**}_t{\vu}+\sum_{\gamma\in Sp(\Gamma^\top)\setminus \gamma^*}{\vZ}^{**}_{\gamma,t}.
$$ 
In the following steps, we are going to show that 
$\widetilde{Z}^{**}_t$ converges almost surely and in mean to an integrable random variable $\widetilde{Z}^{**}_\infty$
 such that $P(\widetilde{Z}^{**}_\infty>0)=1$ and that each ${\vZ}^{**}_{\gamma,t}$
converges almost surely to zero. In particular, this last task will be done
separately for the eigenvalues with $|\gamma|<\gamma^*$ and with 
$|\gamma|=\gamma^*$. Remember that the assumption that 
$\Gamma$ (or, equivalently, $\Gamma^\top$) is irreducible ensures that 
$\gamma^*$ is real, simple and $|\gamma|\leq \gamma^*$
for any $\gamma\in Sp(\Gamma^\top)$. In the sequel of the proof, the 
symbol $\mathcal{F}_t$ denotes the past until time-step $t$. 
\\
 
 \noindent{\bf Study of $\widetilde{Z}_t^{**}$. } 
By multiplying equation \eqref{eq-dynamics-vector} by $\vv^\top$ we obtain
 \begin{equation*}
 \widetilde{Z}^*_0=1,\qquad 
 \widetilde{Z}^*_{t+1}=\left[1
-\frac{1}{t+1}(1-\gamma^*)\right]\widetilde{Z}^*_{t}+
\frac{1}{t+1}\gamma^*\Delta \widetilde{M}^*_{t+1}+\widetilde{O}\left(\frac{1}{t^2}\right)\, .
 \end{equation*}
 Then, multiplying everything by $\zeta_{t+1}$ and using the relation $\zeta_{t+1}=\zeta_t[1-(1-\gamma^*)/(t+1)]^{-1}$
 we get the following dynamics for $\widetilde{Z}^{**}_{t}=\zeta_{t}\widetilde{Z}^*_{t}$, where 
 $\Delta\widetilde{M}^*_{t+1}=\vv^{\top}\Delta\mathbf{M}^*_{t+1}$, 
 \begin{equation}\label{eq-dynamics-Z_tilde_**}
\begin{split}
\widetilde{Z}^{**}_0=1,\qquad 
\widetilde{Z}^{**}_{t+1}&=\left[1
-\frac{1}{t+1}(1-\gamma^*)\right]\frac{\zeta_{t+1}}{\zeta_{t}}\zeta_{t}\widetilde{Z}^*_{t}+
\frac{\zeta_{t+1}}{t+1}\gamma^*\Delta \widetilde{M}^*_{t+1}+\widetilde{O}\left(\frac{\zeta_{t+1}}{t^2}\right)
\\
&=\widetilde{Z}^{**}_{t}+
\frac{\zeta_{t+1}}{t+1}\gamma^*\Delta \widetilde{M}^*_{t+1}+\widetilde{O}\left(\frac{\zeta_{t+1}}{t^2}\right)\,.
\end{split}
\end{equation}
Therefore, we have  
$$
E[\widetilde{Z}^{**}_{t+1}
|\mathcal{F}_t]=\widetilde{Z}^{**}_t+
\widetilde{O}(\zeta_{t+1}/t^2).
$$ Since $\gamma^*>0$ and so $\sum_t \zeta_{t+1}/t^2\sim
\sum_t 1/t^{1+\gamma^*}<+\infty$, the process $\widetilde{Z}_t^{**}$ is a
non-negative almost (super-)martingale, almost surely convergent 
toward a finite random variable $\widetilde{Z}^{**}_\infty$ (see Appendix \ref{sec-almost-supermart}). 
Then, using Theorem \ref{th-general}, we can prove that $P(\widetilde{Z}^{**}_\infty>0)=1$. Indeed, if 
we define the stochastic process ${\mathcal W}=(\mathcal{W}_t)_{t\geq 0}$, 
taking values in the interval $[0,1]$, as 
\begin{equation}\label{rw-dyn}
\begin{split}
\mathcal{W}_0=\widetilde{Z}_0^*\\
\mathcal{W}_{t+1}&=\left(1-\frac{1}{t+1}\right)\mathcal{W}_r+\frac{1}{t+1} Y_{t+1},\quad t\geq 0,
\end{split}
\end{equation}
where $Y_{t+1}=\gamma^*\widetilde{X}^*_{t+1}$ (that takes values in $[0,1]$, since $\gamma^*<1$,  
$X^*_{t+1}\in\{0,1\}$ and $\mathbf{v}^\top\mathbf{1}=1$), then we have
$$
|\mathcal{W}_{t}-\widetilde{Z}^*_{t}|=O(1/t^2)\to 0
$$
and also
$$
|\zeta_{t}\mathcal{W}_{t}-\widetilde{Z}^{**}_{t}|=|\zeta_{t}\mathcal{W}_{t}-\zeta_{t}\widetilde{Z}^*_{t}|=O(\zeta_{t}/t^2)
=O(1/t^{1+\gamma^*})\to 0\,.
$$
From Theorem \ref{th-general} applied to $(\mathcal{W}_t)$ 
with $\delta=\gamma^*$, we get that $\zeta_{t}\mathcal{W}_t$ converges 
almost surely to a random variable with values in $(0,+\infty)$. 
This random variable is obviously also the almost sure limit of 
$\widetilde{Z}^{**}_t$ and so we can conclude that $P(\widetilde{Z}^{**}_\infty>0)=1$. 
\\
\indent Furthermore, we can observe that, for each $t$, we have  
$|E[\widetilde{Z}^{**}_t]-E[\widetilde{Z}^{**}_0]|\leq 
\sum_{n=0}^{t-1} |E[\widetilde{Z}^{**}_{n+1}] - E[\widetilde{Z}^{**}_n]|\leq
\sum_n |O(\zeta_{n+1}/n^2)|$
 and thus, since the last series is finite,  
 we have $\sup_t E[\widetilde{Z}^{**}_t]<+\infty$. 
 By Fatou's lemma, this fact implies that $\widetilde{Z}^{**}_\infty$ 
 is integrable.
 \\
 \indent Now, we are ready to prove Lemma~\ref{lemma-tec}, whose 
 statement and proof is postponed at the end of the present proof.
 A first consequence of this lemma is that the convergence 
 of $\widetilde{Z}^{**}_t$ to $\widetilde{Z}^{**}_\infty$ is also in mean.
 Indeed,  from \eqref{eq-dynamics-Z_tilde_**}, since 
 $\sup_t E[\widetilde{Z}^{**}_t]<+\infty$ and  
 $(\Delta \widetilde{M}^{*}_{t+1})^2\leq
C \sum_{j=1}^N (\Delta M^{*}_{t+1,j})^2$, we can obtain 
$$
E[(\widetilde{Z}^{**}_{t+1})^2]\leq E[(\widetilde{Z}^{**}_t)^2]+
(\gamma^{*})^2\frac{\zeta_{t+1}^2}{(t+1)^2}CE[V^{*}_t] +
\widetilde{O}(\zeta_{t+1}/t^2)\,,
$$
where $V^{*}_t$ is defined  in the statement of Lemma~\ref{lemma-tec}. 
Then, we find 
\begin{equation*}
\begin{split}
|E[(\widetilde{Z}^{**}_t)^2]-E[(\widetilde{Z}^{**}_0)^2]|&\leq 
\sum_{n=0}^{t-1} |E[(\widetilde{Z}^{**}_{n+1})^2] - E[(\widetilde{Z}^{**}_n)^2]|\\
&\leq
(\gamma^{*})^2\sum_n \frac{\zeta_{n+1}^2}{(n+1)^2}CE[V^{*}_n]+ 
\sum_n |O(\zeta_{n+1}/n^2)|<+\infty\,,
\end{split}
\end{equation*}
where we have used Lemma~\ref{lemma-tec} in order to say that the first series is finite.
Therefore, we have $\sup_{t}E[(\widetilde{Z}^{**}_t)^2]<+\infty$ and so
$(\widetilde{Z}^{**}_t)_t$ is uniformly integrable and we can conclude that 
$\widetilde{Z}^{**}_t$ converges to $\widetilde{Z}^{**}_\infty$ also in mean.
\\

\noindent{\bf Dynamics of ${\vZ}^{**}_{\gamma,t}$.} 
 By multiplying equation~\eqref{eq-dynamics-vector} by $\zeta_{t+1}$ we get
\begin{equation*}
\begin{split}
\zeta_{t+1}{\vZ}^*_{t+1}&=
\zeta_{t+1}{\vZ}^*_t - 
\frac{1}{t+1}\zeta_{t+1}(I-\Gamma^\top){\vZ}^*_t + \frac{1}{t+1}\zeta_{t+1}\Gamma^\top \Delta {\vM}^*_{t+1} +\boldsymbol{{O}}(\zeta_{t+1}/t^2)\\
{\vZ}^{**}_{t+1}&=
\frac{\zeta_{t+1}}{\zeta_{t}}{\vZ}^{**}_t - 
\frac{1}{t+1}\frac{\zeta_{t+1}}{\zeta_{t}}(I-\Gamma^\top){\vZ}^{**}_t + \frac{\zeta_{t+1}}{t+1}\Gamma^\top \Delta {\vM}^{*}_{t+1} +\boldsymbol{{O}}(\zeta_{t+1}/t^2),
\end{split}
\end{equation*}
where ${\vZ}^{**}_{t}=\zeta_{t}{\vZ}^*_{t}$ and
$\Delta {\vM}^{**}_{t}=\zeta_{t}\Delta {\vM}^{*}_{t}$.
Then, using the relation
$\zeta_{t+1}/\zeta_t = 1 + (\zeta_{t+1}/\zeta_t)(1-\gamma^*)/(t+1)$ and recalling that 
$\zeta_{t+1}/\zeta_t=1+O(1/t)$, we obtain
\begin{equation}\label{eq:dynamics_Z**}
\begin{split}
{\vZ}^{**}_{t+1} &= 
{\vZ}^{**}_{t}+\frac{\zeta_{t+1}}{\zeta_{t}}\frac{1-\gamma^*}{t+1}{\vZ}^{**}_{t}-\frac{1}{t+1}\frac{\zeta_{t+1}}{\zeta_{t}}(I-\Gamma^\top){\vZ}^{**}_t + 
\frac{\zeta_{t+1}}{t+1}\Gamma^\top \Delta {\vM}^{*}_{t+1} +\boldsymbol{{O}}(\zeta_{t+1}/t^2)\\
&= 
{\vZ}^{**}_{t}-\frac{1}{t+1}
(\gamma^*I-\Gamma^\top){\vZ}^{**}_t + \frac{\zeta_{t+1}}{t+1}\Gamma^\top \Delta {\vM}^{*}_{t+1} +
\boldsymbol{{O}}(\zeta_{t+1}/t^2).
\end{split}
\end{equation}

\noindent{\bf Study of ${\vZ}^{**}_{\gamma,t}$ with $|\gamma|<\gamma^*$.}
Let $\vB_t=V_\gamma^\top\vZ^{**}$ and since 
${\vZ}^{**}_{\gamma,t}=U_\gamma V_\gamma^\top\vZ^{**}=U_\gamma \vB_t$,
it is enough to prove that $\|\vB_t\|^2$ converges a.s. to zero.
To this end, by multiplying equation \eqref{eq:dynamics_Z**} by $V_\gamma^\top$, we have
$$
{\vB}^{**}_{t+1} = 
\left[I-\frac{1}{t+1}
(\gamma^*I-J_\gamma^\top)\right]{\vB}^{**}_t + \frac{\zeta_{t+1}}{t+1}J_\gamma^\top V_\gamma^\top \Delta {\vM}^{*}_{t+1} 
+\boldsymbol{{O}}(\zeta_{t+1}/t^2).
$$
Then, since for any real matrix $A$ we can write 
\begin{equation}\label{eq:cross_product_Delta_M}
E[\Delta\vM_{t+1}^{*\top}
A
\Delta\vM_{t+1}^*|\mathcal{F}_t] = 
\sum_{j=1}^N a_{jj}^2
E[\Delta M^{*2}_{j,t+1}|\mathcal{F}_t] \leq \max_j{a_{jj}^2} V^{*}_t,
\end{equation}
we have that
$$
\begin{aligned}
E[\|\vB^{**}_{t+1}\|^2|\mathcal{F}_t] &= 
\Big\|\left[ \left(
1 -\frac{\gamma^*}{t+1}\right)I
+ \frac{1}{t+1}J_{\gamma}
\right]\vB^{**}_{t}\Big\|^2 + 
\left(\frac{\zeta_{t+1}^2}{(t+1)^2}\right)
\sum_{j=1}^N [\bar{V}_{\gamma} 
\bar{J}_{\gamma}
J_{\gamma}^\top
V_{\gamma}^\top]_{jj}^2\mathbb{E}[\Delta M^{*2}_{j,t+1}|\mathcal{F}_t]\\
&\leq
\left( 1 -
 \frac{\gamma^*}{t+1}
+ \frac{\|J_{\gamma}\|_{2,2}}{t+1}
\right)^2\|\vB^{**}_{t}\|^2 + 
\left(\frac{\zeta_{t+1}^2}{(t+1)^2}\right)
\max_j\{[\bar{V}_{\gamma} 
\bar{J}_{\gamma}
J_{\gamma}^\top
V_{\gamma}^\top]^2_{jj}\}
V^{*}_t.
\end{aligned}
$$
Then, regarding the first term, we note that
$$
\left( 1 -
 \frac{\gamma^*}{t+1}
+ \frac{\|J_{\gamma}\|_{2,2}}{t+1}
\right)^2 \leq 
\left( 1 -
 \frac{\gamma^*}{t+1}
+ \frac{|\gamma|+\gamma^*}{2(t+1)}
\right)^2 =
\left( 1 -
 \frac{\gamma^* - |\gamma|}{2(t+1)}
\right)^2,
$$
and so
$$
E[\|\vB^{**}_{t+1}\|^2|\mathcal{F}_t] \leq
\left( 1 -
 \frac{\gamma^* - |\gamma|}{2(t+1)}
\right)^2 \|\vB^{**}_{t}\|^2 + 
C\frac{\zeta^2_{t+1}}{(t+1)^2}V^*_t.
$$
Therefore, since $\gamma^*> |\gamma|$ and by Lemma \ref{lemma-tec},
the process $\|\vB^{**}_{t}\|^2$ is a non-negative almost supermartingale 
that converges almost surely.
Moreover, by applying the expectation we obtain
$$
E[\|\vB^{**}_{t+1}\|^2] \leq
\left( 1 -
 \frac{\gamma^* - |\gamma|}{2(t+1)}
\right)^2E[\|\vB^{**}_{t}\|^2] + 
C\frac{\zeta^2_{t+1}}{(t+1)^2}E[V^*_t],
$$
which, since $\sum_t (\gamma^* - |\gamma|)/(t+1)=+\infty$, by 
Lemma \ref{lemma-tec} and Lemma~\ref{lemma-tecP}, we can conclude that 
$\|\vB^{**}_{t}\|\stackrel{a.s.}\longrightarrow 0$,
and hence $\vB^{**}_{t}\stackrel{a.s.}\longrightarrow \vzero$.
\\

\noindent{\bf Study of ${\vZ}^{**}_{\gamma,t}$ with $|\gamma|=\gamma^*$.} 
From the Frobenious-Perron theory, we know that each eigenvalue with maximum modulus is simple.
Then, set $b_t=\vv_\gamma^\top\vZ^{**}$ so that, since we have  
${\vZ}^{**}_{\gamma,t}=\vu_\gamma \vv_\gamma^\top\vZ^{**}=\vu_\gamma b_t$,
it is enough to prove that $|b_t|$ almost surely converges to zero.
To this end, by multiplying equation \eqref{eq:dynamics_Z**} by $\vv_\gamma^\top$, we have
$$
{b}^{**}_{t+1} = 
\left[1-\frac{1}{t+1}
(\gamma^*-\gamma)\right]{b}^{**}_t + \frac{\zeta_{t+1}}{t+1}\gamma \vv_\gamma^\top \Delta {\vM}^{*}_{t+1} +\boldsymbol{{O}}(\zeta_{t+1}/t^2).
$$
Then, using \eqref{eq:cross_product_Delta_M},
we have that
$$
\begin{aligned}
E[|b^{**}_{t+1}|^2|\mathcal{F}_t] &= 
\Big|
1 -\frac{\gamma^*}{t+1}
+ \frac{\gamma}{t+1}
\Big|^2 |b^{**}_{t}|^2 + 
\left(\frac{\zeta_{t+1}^2}{(t+1)^2}\right)
|\gamma|^2\sum_{j=1}^N|v_{j}|^2E[\Delta M^{*2}_{j,t+1}|\mathcal{F}_t]\\
&\leq
\Big|
1 -\frac{\gamma^*}{t+1}
+ \frac{\gamma}{t+1}
\Big|^2 |b^{**}_{t}|^2  + 
\left(\frac{\zeta_{t+1}^2}{(t+1)^2}\right)
|\gamma|^2
\max_j\{|v_j|^2\}
V^{*}_t.
\end{aligned}
$$
Then, regarding the first term we have that
$$
\begin{aligned}
\Big|
1 -\frac{\gamma^*}{t+1}
+ \frac{\gamma}{t+1}
\Big|^2
&=\left(
1 -\frac{\gamma^*}{t+1}
+ \frac{\mathcal{R}e(\gamma)}{t+1}
\right)^2 +
\left( \frac{\mathcal{I}m(\gamma)}{t+1}\right)^2 \\
&=
1 +
 \left( \frac{\gamma^* - \mathcal{R}e(\gamma)}{t+1}\right)^2
  - 2 \left(\frac{\gamma^* - \mathcal{R}e(\gamma)}{t+1}\right)
 + \left( \frac{\mathcal{I}m(\gamma)}{t+1}\right)^2 
\\
&=
1 - \left(\frac{2(\gamma^* - \mathcal{R}e(\gamma))}{t+1}\right)
  +  \left( \frac{\gamma^{*2} 
  -2\gamma^{*}\mathcal{R}e(\gamma) + \mathcal{R}e(\gamma)^2+
  \mathcal{I}m(\gamma)^2}{(t+1)^2}\right)
\\
&=1 - \left(\frac{2(\gamma^* - \mathcal{R}e(\gamma))}{t+1}\right)
  +  \left( \frac{2\gamma^{*} (\gamma^{*} -\mathcal{R}e(\gamma))}{(t+1)^2}\right)
\\
&=1 - 2\left(\frac{1}{t+1} -
  \frac{\gamma^{*}}{(t+1)^2}\right)
  (\gamma^* - \mathcal{R}e(\gamma))
\\
\end{aligned}
$$
and so
$$
E[|b^{**}_{t+1}|^2|\mathcal{F}_t] \leq
\left( 1 - 2\left(\frac{1}{t+1} -
  \frac{\gamma^{*}}{(t+1)^2}\right)
  (\gamma^* - \mathcal{R}e(\gamma))
\right) |b^{**}_{t}|^2 + 
C\frac{\zeta^2_{t+1}}{(t+1)^2}V^*_t.
$$
Therefore, since $\gamma^*> \mathcal{R}e(\gamma)$ and by Lemma \ref{lemma-tec},
the process $|b^{**}_{t}|^2$ is a non-negative almost supermartingale 
that converges almost surely.
Moreover, by applying the expectation, we obtain
$$
E[|b^{**}_{t+1}|^2] \leq
\left(1 - 2\left(\frac{1}{t+1} -
  \frac{\gamma^{*}}{(t+1)^2}\right)
  (\gamma^* - \mathcal{R}e(\gamma))
\right)\mathbb{E}[|b^{**}_{t}|^2] + 
C\frac{\zeta^2_{t+1}}{(t+1)^2}\mathbb{E}[V^*_t]\,.
$$
Since $\sum_t (1/(t+1)-\gamma^*/(t+1)^2)=+\infty$ and by 
Lemma \ref{lemma-tec} and Lemma \ref{lemma-tecP}, we can conclude that 
$|b^{**}_{t}|\stackrel{a.s.}\longrightarrow 0$,
and hence $b^{**}_{t}\stackrel{a.s.}\longrightarrow 0$.

\end{proof}

\begin{lemma}\label{lemma-tec}
Set
$V^*_t=\sum_{j=1}^N E[(\Delta M^*_{t+1,j})^{2}\,|\,\mathcal{F}_t]$.
Then, if $\Gamma$ is irreducible, we have
\begin{equation}\label{lemma-tec-eq}
 \textstyle{\sum_t} \frac{\zeta_{t+1}^2}{(t+1)^2} E[V^*_t] < +\infty\quad\mbox{and so}\quad
 \textstyle{\sum_t} \frac{\zeta_{t+1}^2}{(t+1)^2} V^*_t < +\infty\ \mbox{a.s.} 
\end{equation}
\end{lemma}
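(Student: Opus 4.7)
The plan is to exploit the fact that $X^*_{t+1,j}$ is conditionally Bernoulli with parameter $Z^*_{t,j}$, so that $V^*_t$ is dominated by the components of $\vZ^*_t$, and then to derive the a priori bound $E[Z^*_{t,j}]=O(t^{\gamma^*-1})$ by analyzing the linear recursion obeyed by the expectations. Since $E[(\Delta M^*_{t+1,j})^2\mid\mathcal{F}_t]=Z^*_{t,j}(1-Z^*_{t,j})\leq Z^*_{t,j}$, we have $V^*_t\leq\sum_{j=1}^N Z^*_{t,j}$ and hence $E[V^*_t]\leq\sum_j E[Z^*_{t,j}]$. Taking conditional expectation in \eqref{eq-dynamics-vector} kills the martingale-increment term, so the vector $\vz_t:=E[\vZ^*_t]$ satisfies
$$\vz_{t+1}=\vz_t-\frac{1}{t+1}(I-\Gamma^\top)\vz_t+\boldsymbol{O}(1/t^2).$$

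Next I would project onto the Perron left-eigendirection. Because $\Gamma\vv=\gamma^*\vv$, one has $\vv^\top(I-\Gamma^\top)=(1-\gamma^*)\vv^\top$, so the scalar $a_t:=\vv^\top\vz_t$ obeys
$$a_{t+1}=\left(1-\frac{1-\gamma^*}{t+1}\right)a_t+O(1/t^2).$$
Multiplying by $\zeta_{t+1}$ and using the identity $\zeta_{t+1}[1-(1-\gamma^*)/(t+1)]=\zeta_t$ yields $\zeta_{t+1}a_{t+1}=\zeta_t a_t+O(t^{-1-\gamma^*})$, whose perturbation is summable since $\gamma^*>0$; therefore $\zeta_t a_t$ is bounded and $a_t=O(t^{\gamma^*-1})$. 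Irreducibility of $\Gamma$ guarantees, via Perron--Frobenius, that $\vv$ has strictly positive entries, so $a_t\geq c\sum_j E[Z^*_{t,j}]$ for some $c>0$, and hence $E[V^*_t]=O(t^{\gamma^*-1})$.

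Combining with $\zeta_{t+1}^2/(t+1)^2\sim t^{-2\gamma^*}$ produces a general term of order $O(t^{-1-\gamma^*})$ for the target series, summable because $\gamma^*>0$; this proves the first assertion of \eqref{lemma-tec-eq}. The almost sure version then follows from Tonelli applied to the non-negative series $\sum_t \frac{\zeta_{t+1}^2}{(t+1)^2}V^*_t$: its expectation equals the convergent series just bounded, so the series is a.s.\ finite.

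The main obstacle---and the only place where irreducibility is genuinely used---is obtaining the sharper bound $E[Z^*_{t,j}]=O(t^{\gamma^*-1})$ rather than the trivial $Z^*_{t,j}\leq 1$: the latter would yield only $O(t^{-2\gamma^*})$ for the summand, which is summable only when $\gamma^*>1/2$. The projection onto $\vv^\top$ is what reduces the matrix recursion to a contracting scalar one, and the strict positivity of the Perron eigenvector is what transfers the resulting rate from $\vv^\top\vz_t$ to the individual components $E[Z^*_{t,j}]$.
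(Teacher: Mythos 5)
Your proof is correct and follows essentially the same route as the paper: bound $V^*_t\leq\sum_{j}Z^*_{t,j}$, control this sum through the strictly positive Perron eigenvector $\vv$ (irreducibility entering only via $v_{\min}>0$), and show that $\zeta_t\,\vv^\top E[\vZ^*_t]$ stays bounded, which makes the summand of order $t^{-1-\gamma^*}$ and gives the a.s.\ statement by Tonelli. The only cosmetic difference is that you re-derive the bound $\sup_t\zeta_t\,\vv^\top E[\vZ^*_t]<+\infty$ directly from the recursion for the expectations, whereas the paper cites $\sup_t E[\widetilde{Z}^{**}_t]<+\infty$, already obtained in the proof of Theorem~\ref{th-Zstar} from the almost-supermartingale property of $\widetilde{Z}^{**}_t$ — the same telescoping estimate on the same quantity.
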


\begin{proof} 
First notice that by definition
$$V^*_t=
\sum_{j=1}^N E[(\Delta M^*_{t+1,j})^{2}\,|\,\mathcal{F}_t]=
\sum_{j=1}^N  Z^{*}_{j,t}(1-Z^{*}_{j,t})\leq \sum_{j=1}^N  Z^{*}_{j,t}.$$
Then, denoting by $v_{\min}$ the minimum element of $\vv$, which is strictly positive since $\Gamma^\top$ is irreducible, we have that
$\sum_{j=1}^N  Z^{*}_{j,t}\leq \vv^\top \vZ^{*}_t/v_{\min} = \widetilde{Z}_t^{*}/v_{\min}$. Therefore, we have 
$$\zeta_{t}V^{*}_t\leq \frac{\widetilde{Z}_t^{**}}{v_{\min}}\,.$$ 
Therefore, recalling that $\sup_{t}E[\widetilde{Z}^{**}_t]<+\infty$ and $\frac{\zeta_{t+1}}{(t+1)^2}=O(1/t^{1+\gamma^*})$,  
we get  
$$
E\left[\sum_t \frac{\zeta_{t+1}^2}{(t+1)^2} V^*_t\right]= 
\sum_t \frac{\zeta_{t+1}^2}{(t+1)^2} E[V^*_t] \leq 
\frac{1}{v_{\min}}\sup_{t}E[\widetilde{Z}^{**}_t] \sum_t \frac{\zeta_{t+1}}{(t+1)^2} < +\infty.
$$
This concludes the proof.
 \end{proof}


\subsection*{Proof of Theorem~\ref{th-synchro-rates}}

Leveraging on Theorem~\ref{th-Zstar}, we can prove Theorem~\ref{th-synchro-rates}. Indeed, by the previous convergence results for 
$(Z^*_{t,h})_t$, we have
$$
D^*_{t,h}=\sum_{n=1}^t X^*_{n,h}\qquad\mbox{with } 
E[X^*_{t+1,h}\,|\,\mbox{past}]=Z^*_{t,h}\stackrel{a.s.}\sim \frac{ \widetilde{Z}^{**}_{\infty} u_h }{t^{1-\gamma^*}}
$$
and so, by Lemma~\ref{williams-lemma},  we get 
$$
D^*_{t,h}\stackrel{a.s.}\sim D^{**}_{\infty,h}\,t^{\gamma^*}\quad\mbox{with }
D^{**}_{\infty,h}=\frac{ \widetilde{Z}^{**}_{\infty} u_h }{\gamma^*}\,.
$$
As a consequence, we obtain 
$$
\frac{ D^{*}_{t,h} }{ D^{*}_{t,j} }\stackrel{a.s.}\longrightarrow \frac{ D^{**}_{\infty,h} }{ D^{**}_{\infty,j} }=\frac{u_h}{u_k}\,.
$$


\subsection*{Proof of Theorem~\ref{th-uniform-partition}}

Recall from \eqref{old-color-prob-inter} that, for any color $c$ already present in the network at time $t$, 
$P_t(h,c)=P(C_{t+1,h}=c|\ \mbox{past})$ denotes the conditional probability that
 the extraction at time-step $t+1$ from urn $h$ gives the old color $c$,
while $K_t(h,c)$ indicates the number of times the color $c$ has been drawn 
from urn $h$ until time-step $t$. \\

\indent First of all, we observe that, from \eqref{old-color-prob-inter}, we have  
 $$
 P_{t}(h,c)=\frac{\sum_{j=1}^N w_{j,h}K_t(j,c)-\gamma_{j^*(c),h}}{\theta_h + t}=
 \frac{\sum_{n=1}^t\sum_{j=1}^N w_{j,h}\Delta K_n(j,c)}{\theta_h + t}-\frac{\gamma_{j^*(c),h}}{{\theta_h + t}}
 \,,
 $$
 where $\Delta K_{n}(j,c)=K_{n}(j,c)-K_{n-1}(j,c)$. Notice that
 $\Delta K_{n}(j,c)$ takes values in $\{0,1\}$ and
 $E[\Delta K_{n+1}(j,c)|\mbox{past}]=P_{n}(j,c)$.
Then, we obtain the following dynamics for $P_{t}(h,c)$:
 \begin{equation*}
 P_0(h,c)=0,\qquad 
 P_{t+1}(h,c)=(1-r_{t,h})P_{t}(h,c)+r_{t,h}\sum_{j=1}^N w_{j,h}\Delta K_{t+1}(j,c)\,,
 \end{equation*}
 where $r_{t,h}=1/(\theta_h+t+1)=1/(t+1)+O_h(1/t^2)$.
  Thus the corresponding vectorial dynamics for 
  $\vP_t(c)=(P_t(1,c),\dots, P_t(N,c))^\top$ is
\begin{equation}\label{eq-dynamics-vector-Ptc}
  \begin{split}
\vP_0(c)=\mathbf{0},\qquad    \vP_{t+1}(c)&=
    \left(1-\frac{1}{t+1}\right)\vP_t(c)+\frac{1}{t+1} W^T {\Delta\vK}_{t+1}(c) + 
    \boldsymbol{O}(1/t^2)\\
    &=
    \vP_t(c) - \frac{1}{t+1}(I-W^\top)\vP_t(c) + \frac{1}{t+1}W^\top \Delta {\vM}_{t+1}(c) +
    \boldsymbol{{O}}(1/t^2),
\end{split}
\end{equation}
where  ${\Delta\vK}_t(c)=(\Delta K_t(1,c),\dots, \Delta K_t(N,c))^\top$,
$\Delta{\vM}_{t+1}(c)={\Delta\vK}_{t+1}(c)-\vP_t(c)$ and 
$\boldsymbol{{O}}(1/t^2)=(O_1(1/t^2),\dots,O_N(1/t^2))^{\top}$.   
We can note that the dynamics of $\vP_t(c)$ in \eqref{eq-dynamics-vector-Ptc} 
presents exactly the same form of the dynamics of $\vZ_t^{*}$ in \eqref{eq-dynamics-vector}.
Indeed, the only difference lies in the interacting matrix, which is $W$ in \eqref{eq-dynamics-vector-Ptc}, 
while was $\Gamma$ in \eqref{eq-dynamics-vector}. The different conditions on these two matrices, 
i.e. $W^\top\vone=\vone$ and $\Gamma^\top\vone<\vone$, lead through the 
Frobenious-Perron theory to have different leading eigenvalues, that is  
$w^*=1$ for $W$ and $\gamma^*<1$ for $\Gamma$. Then $\vP_t(c)$ converges almost surely to a strictly positive random variable, 
while, as proven above, $\vZ_t^{*}$ converges almost surely to $\mathbf{0}$. 
To prove the almost sure convergence of $\vP_t(c)$, we can apply exactly the same proof of Theorem~\ref{th-Zstar} 
replacing $\Gamma$ (and the corresponding eigen-structure) by $W$. 
In general this simplifies the proof, e.g. $\zeta_t\equiv 1$ and 
the relation \eqref{lemma-tec-eq} (with $\zeta_t\equiv 1$ and 
$V^*_t=\sum_{j=1}^N E[(\Delta M_{t+1,j}(c))^{2}\,|\,\mathcal{F}_t]$) 
is trivially true. Therefore, since for $W$ we have $\vu=\vone$, we have 
$$
\vP_t(c)\stackrel{a.s.}\longrightarrow 
\widetilde{P}_\infty(c)\vone,
$$
where $\widetilde{P}_\infty(c)$ is a bounded strictly positive random variable. 
The fact that it is strictly positive comes from Theorem \ref{th-general} with $\delta=w^*=1$.  
\\
\indent Finally, since $K_{t}(j,c)=\sum_{n=1}^t\Delta K_{n}(j,c)$
and $E[\Delta K_{n+1}(j,c)|\mbox{past}]=P_{n}(j,c)\stackrel{a.s.}\sim \widetilde{P}_{\infty}(c)$, 
by Lemma~\ref{williams-lemma},  we can conclude that 
$$
K_{t}(h,c) \stackrel{a.s.}\sim \widetilde{P}_{\infty}(c)\, t
$$
and so the statement of Theorem~\ref{th-uniform-partition} holds true with $K_\infty(c)=\widetilde{P}_{\infty}(c)$. 


\subsection{A general result}

Define the stochastic process ${\mathcal W}=(\mathcal{W}_t)_{t\geq 0}$
taking values in the interval $[0,1]$ and following the dynamics
\begin{equation}\label{rw-dyn}
\mathcal{W}_{t+1}=\left(1-\frac{1}{t+1}\right)\mathcal{W}_t+\frac{1}{t+1} Y_{t+1},\quad t\geq 0,
\end{equation}
where $Y_{t+1}$ takes values in $[0,1]$ and is such that 
$E[Y_{t+1}\,|\,\mbox{past}]\stackrel{a.s.}\sim\delta\mathcal{W}_t$ with $0<\delta\leq 1$. 
\\

We are going to prove the following result

\begin{theorem}\label{th-general}
Given $\mathcal{W}_0>0$, we have that $\mathcal{W}_t$ 
converges almost surely to $0$ as $t^{-(1-\delta)}$, that is 
$t^{(1-\delta)}\mathcal{W}_t$ converges almost surely to a random variable with values 
in $(0,+\infty)$.
\end{theorem}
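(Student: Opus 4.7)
The plan is to rescale $\mathcal{W}_t$ so that the resulting process becomes a non-negative almost martingale, then combine an almost-supermartingale convergence theorem with moment estimates to obtain a.s.\ (and $L^1$) convergence to an integrable limit, and finally argue strict positivity of that limit via a logarithmic/localization argument.

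First I would set $\xi_t=\zeta_t\mathcal{W}_t$ with $\zeta_t=1/\prod_{k=1}^t\bigl[1-(1-\delta)/k\bigr]\sim t^{1-\delta}$, just as in the proof of Theorem~\ref{th-Zstar}. Multiplying \eqref{rw-dyn} by $\zeta_{t+1}$ and using $\zeta_{t+1}/\zeta_t=1+(1-\delta)/(t+1)+O(1/t^2)$, one obtains
\begin{equation*}
\xi_{t+1}=\xi_t+\frac{\zeta_{t+1}}{t+1}\bigl(Y_{t+1}-\delta\mathcal{W}_t\bigr)+O\!\left(\frac{\xi_t}{t^{2}}\right).
\end{equation*}
Taking conditional expectation and invoking the asymptotic equivalence $E[Y_{t+1}\,|\,\mathcal{F}_t]=\delta\mathcal{W}_t(1+o(1))$ a.s., the conditional drift is $o(\xi_t/t)+O(\xi_t/t^{2})$, so $\xi_t$ is a non-negative almost supermartingale; the Robbins--Siegmund theorem then yields a.s.\ convergence to a finite limit $\xi_\infty\geq 0$.

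For moment control, I exploit $Y_{t+1}\in[0,1]$ to get $\mathrm{Var}(Y_{t+1}\,|\,\mathcal{F}_t)\leq E[Y_{t+1}\,|\,\mathcal{F}_t]\sim\delta\mathcal{W}_t$, which yields $E[(\xi_{t+1}-\xi_t)^2\,|\,\mathcal{F}_t]=O(\xi_t/t^{1+\delta})$. Combining this with the drift estimate in a telescoping/Gronwall argument shows $\sup_t E[\xi_t]<\infty$ and $\sup_t E[\xi_t^{2}]<\infty$, so $\xi_t\to\xi_\infty$ in $L^1$ and $L^2$. In particular $E[\xi_\infty]=\lim_t E[\xi_t]$, and tracking the $o(1)$ drift starting from $E[\xi_0]=\mathcal{W}_0>0$ gives $E[\xi_\infty]>0$, hence $P(\xi_\infty>0)>0$.

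The main obstacle is to upgrade this to $P(\xi_\infty>0)=1$. My plan is to analyze $\log\xi_t$, which is well-defined because $\mathcal{W}_t>0$ for every $t$ (by induction from $\mathcal{W}_0>0$ and $Y_k\geq 0$). A Taylor expansion of $\log(\xi_{t+1}/\xi_t)$ gives a decomposition into a martingale part whose increment conditional variance is of order $(Y_{t+1}/\mathcal{W}_t)^2/(t+1)^2\leq C/((t+1)^{2}\mathcal{W}_t)$, plus a bounded-variation correction of the same order. Introducing stopping times $\tau_\epsilon=\inf\{t:\mathcal{W}_t\leq\epsilon\,t^{-(1-\delta)}\}$, on $\{\tau_\epsilon=\infty\}$ the bound $1/(t^{2}\mathcal{W}_t)\leq t^{-(1+\delta)}/\epsilon$ is summable, so both the martingale and drift parts of $\log\xi_t$ converge, forcing $\xi_\infty>0$ on this event. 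Combining this with the $L^1$ convergence and letting $\epsilon\downarrow 0$ rules out $\{\xi_\infty=0\}$. Controlling the small-$\mathcal{W}$ regime of a non-negative martingale that is not uniformly bounded below is where the real technical effort lies.
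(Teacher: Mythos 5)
Your first two steps (the rescaling $\xi_t=\zeta_t\mathcal{W}_t$, the Robbins--Siegmund almost-supermartingale argument for a.s.\ convergence, and the second-moment bound via $Y_{t+1}^2\leq Y_{t+1}$) are sound and run parallel to what is done for $\widetilde{Z}^{**}_t$ in the proof of Theorem~\ref{th-Zstar}. But the whole content of Theorem~\ref{th-general} is the strict positivity of the limit on the \emph{whole} probability space, and there your argument has a genuine gap. The stopping times $\tau_\epsilon=\inf\{t:\mathcal{W}_t\leq\epsilon t^{-(1-\delta)}\}$ only give you summability of the drift and of the conditional variance of $\log\xi_t$ on $\{\tau_\epsilon=\infty\}$, i.e.\ on the event where $\xi_t$ never drops below (essentially) $\epsilon$. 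This shows $\xi_\infty>0$ on $\bigcup_\epsilon\{\tau_\epsilon=\infty\}$, which is no more than the tautology that a trajectory staying bounded away from $0$ has a positive limit: the bad event $\{\xi_\infty=0\}$ is contained in $\{\tau_\epsilon<\infty\}$ for \emph{every} $\epsilon$, so letting $\epsilon\downarrow 0$ tells you nothing about it. Nor does the $L^1$ convergence help: $E[\xi_\infty]>0$ (itself delicate, since the error in $E[Y_{t+1}\,|\,\mathcal{F}_t]\sim\delta\mathcal{W}_t$ is random and a.s.\ $o(1)$, not uniform) only yields $P(\xi_\infty>0)>0$, not $P(\xi_\infty>0)=1$. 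You correctly identify that controlling the small-$\mathcal{W}$ regime is ``where the real technical effort lies'', but the proposal does not supply that effort, and the route you sketch cannot be closed as stated.

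The paper fills exactly this hole with an unconditional a.s.\ polynomial lower bound, obtained before any logarithmic analysis. Writing $\mathcal{W}_t=H_t/s_t$ as the composition of an auxiliary two-color urn with $H_t\sim t\,\mathcal{W}_t$, it first shows $H_t\to+\infty$ a.s.\ by a conditional Borel--Cantelli argument (Lemma~\ref{lemma-infinito}, via Lemma~\ref{del-mey-result-app}), and then, for every $\theta>1/\delta$, that $(t/H_t^{\theta})_t$ is eventually a positive supermartingale, hence convergent, which forces $1/H_t=o(t^{-1/\theta})$ a.s.\ (Lemma~\ref{lemma-o}). This crude lower bound holds on the whole space, so the drift and the conditional variance of $L_t=\ln(H_t/t^{\delta})$ are $O(1/t^{1+\eta})$ for some $\eta>0$ everywhere, and Pemantle--Volkov's Lemma~\ref{lemma-pemantle} gives a.s.\ convergence of $L_t$, hence a limit in $(0,+\infty)$. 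If you want to keep your formulation in terms of $\xi_t$, you would need to prove an analogue of Lemma~\ref{lemma-o} (e.g.\ via the supermartingale $t/H_t^{\theta}$ or an equivalent bound such as $\liminf_t t^{1-1/\theta+\varepsilon}\mathcal{W}_t=+\infty$) before the logarithmic step; without it the argument does not exclude $\{\xi_\infty=0\}$.
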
 

First of all, we note (see \cite{ale-cri-ghi-barriers} for details)   
that, for each $t$, the random variable $\mathcal{W}_t$ corresponds to the
proportion $H_t/s_t$ of balls of color $A$ inside the urn at time-step $t$ 
for a two-color urn process where the number of balls of color $A$
(resp. $B$) added to the urn at time-step $t$ is $U_t^A=\alpha_{t}Y_t$
(resp. $U_t^B=\alpha_t(1-Y_t)$) with $\alpha_{t} =
\frac{1/t}{\prod_{n=1}^{t} (1-1/n)}\sim 1$ (and so $s_t=1/\prod_{n=1}^t(1-1/k)\sim t$).
Note that, if $(\mathcal{F}_t)_t$ is the filtration
associated to the urn process, we have  
\begin{equation}\label{cond-mean}
E[U_{t+1}^A\vert \mathcal{F}_t]\stackrel{a.s.}\sim 
\alpha_{t+1}\delta\mathcal{W}_t\,.
\end{equation}
We observe also that, since $Y_t$ takes values in $[0,1]$ and so $Y^2_t\leq Y_t$,  
we have 
 \begin{equation}\label{cond-second-moment}
E[(U_{t+1}^A)^2\vert \mathcal{F}_t]\leq 
\alpha_{t+1}^2 E[Y_{t+1}\,|\,\mathcal{F}_t]
\stackrel{a.s.}\sim \delta \mathcal{W}_t\,.
\end{equation}
In the following two lemmas we will show that $H_t$ diverges almost surely to $+\infty$ and $1/H_t=o(t^{-1/\theta})$ for $\theta>1/\delta$. 

\begin{lemma}\label{lemma-infinito}
Assuming $\mathcal{W}_0>0$,  $H_t$ diverges almost surely to $+\infty$
\end{lemma}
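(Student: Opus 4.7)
The plan is to decompose $H_t$ into its predictable compensator and a martingale, and to show that the compensator alone already diverges. First I would observe that $H_t$ is non-decreasing, since each increment $U_{t+1}^A=\alpha_{t+1}Y_{t+1}$ is non-negative, and that $H_0=s_0\mathcal{W}_0>0$ by the hypothesis $\mathcal{W}_0>0$. Hence $H_t\geq H_0>0$ for every $t$, and the monotone limit $H_\infty:=\lim_t H_t$ exists in $(0,+\infty]$ almost surely; it only remains to rule out the event $\{H_\infty<\infty\}$.

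The key step is a lower bound on the conditional mean increments. Setting $A_{n+1}:=E[U_{n+1}^A\mid\mathcal{F}_n]$, relation \eqref{cond-mean} gives $A_{n+1}\stackrel{a.s.}{\sim}\alpha_{n+1}\delta\,\mathcal{W}_n = \alpha_{n+1}\delta\, H_n/s_n$. Since $H_n\geq H_0>0$, $s_n\sim n$ and $\alpha_n\to 1$, one obtains
\[
A_{n+1}\;\geq\;\frac{c\,H_0}{n}
\]
eventually almost surely, for some (random) constant $c>0$. In particular $\sum_n A_n=+\infty$ almost surely, independently of the behaviour of $\mathcal{W}_n$ (which could, a priori, vanish).

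The conclusion is then an immediate application of the Williams-type lemma (the L\'evy extension of Borel--Cantelli), namely Lemma~\ref{williams-lemma}, already invoked in the proofs of Theorems~\ref{th-synchro-rates}--\ref{th-uniform-partition}: the sequence $(U_n^A)_n$ is non-negative, adapted and uniformly bounded (by $\sup_n\alpha_n<\infty$), so on the full-probability event $\{\sum_n A_n=+\infty\}$ one also has $\sum_{n=1}^t U_n^A\to+\infty$ almost surely, and therefore $H_t=H_0+\sum_{n=1}^t U_n^A\to+\infty$.

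The only minor obstacle is that \eqref{cond-mean} is an asymptotic equivalence, so the bound $A_{n+1}\geq cH_0/n$ is valid only for $n$ beyond some almost-surely finite random time $T(\omega)$; this, however, affects neither the divergence of the tail of $\sum_n A_n$ nor the applicability of the cited lemma, so the argument goes through with no further work.
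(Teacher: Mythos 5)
Your argument is correct and essentially coincides with the paper's own proof: both use monotonicity to get $H_n\geq H_0>0$, hence $\mathcal{W}_n=H_n/s_n\gtrsim \mathcal{W}_0/n$, deduce from \eqref{cond-mean} that the compensator $\sum_n E[U_{n+1}^A\mid\mathcal{F}_n]$ diverges almost surely, and conclude by the conditional Borel--Cantelli statement for non-negative, adapted, uniformly bounded increments. The one correction is the citation: the tool you are actually using is Lemma~\ref{del-mey-result-app} (the Dellacherie--Meyer result), not Lemma~\ref{williams-lemma}, which is stated only for Bernoulli sequences and so does not apply to the increments $U_n^A=\alpha_n Y_n\in[0,\alpha_n]$.
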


\begin{proof}
Since $H_t=\mathcal{W}_0+\sum_{n=1}^t U^A_n$, where 
the random variables $U_n^A$ are positive and uniformly bounded by a constant. 
By Lemma~\ref{del-mey-result-app}, we have 
$H_t\stackrel{a.s.}\longrightarrow +\infty$ if and only if $\sum_t
E[U_{t+1}^A\vert \mathcal{F}_{t}]=+\infty$ almost surely. Therefore, it is
enough to observe that this last condition is satisfied when $\mathcal{W}_0>0$, 
because of \eqref{cond-mean} and the fact that  $\mathcal{W}_t\geq \mathcal{W}_0\frac{1}{s_t}\stackrel{a.s.}\sim 
 \mathcal{W}_0/t$.
 \end{proof}
 
 \begin{lemma}\label{lemma-o}
 For each $\theta>1/\delta>1$, we have $1/H_t=o(t^{-1/\theta})$.
 \end{lemma}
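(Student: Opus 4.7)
The plan is to fix $\beta = 1/\theta \in (0,\delta)$ and show that $V_t := (t+1)^{\beta}/H_t$ tends to zero almost surely, which is exactly the claim $1/H_t = o(t^{-1/\theta})$. The argument proceeds in two steps: first, show that $V_t$ is, from some random time onwards, an almost supermartingale of Robbins--Siegmund type; second, exclude a strictly positive limit via a divergent-series obstruction.

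For the first step, I would start from the elementary bound $1/(1+x) \leq 1 - x + x^2$, valid for all $x \geq 0$, applied with $x = U^A_{t+1}/H_t$. Taking conditional expectation and inserting the asymptotics \eqref{cond-mean} and \eqref{cond-second-moment} together with $\mathcal{W}_t = H_t/s_t \sim H_t/t$, one obtains, for any chosen $\delta' \in (\beta,\delta)$ and every $t$ beyond a random index,
\[
E[1/H_{t+1}\mid\mathcal{F}_t] \leq (1/H_t)(1 - \delta'/t) + C/(tH_t^2).
\]
Multiplying by $(t+2)^{\beta}$ and using $((t+2)/(t+1))^{\beta} \leq 1 + \beta/(t+1)$ (by concavity of $x \mapsto x^{\beta}$), the coefficient of $V_t$ becomes $1 - (\delta' - \beta)/t + O(1/t^2)$, while the residual reshapes into $C' V_t/(tH_t)$.

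The main obstacle is absorbing this residual: the series $\sum 1/(tH_t)$ is not summable a priori, since we have no quantitative rate on $H_t$. Here I would invoke Lemma~\ref{lemma-infinito}: because $1/H_t \to 0$ almost surely, one can fix $\epsilon = (\delta' - \beta)/2$ and select a random index $T$ beyond which $C'/H_t \leq \epsilon$, obtaining
\[
E[V_{t+1}\mid\mathcal{F}_t] \leq V_t(1 - c/t + O(1/t^2)) \qquad \text{for } t \geq T,
\]
with $c = (\delta' - \beta)/2 > 0$. Rewritten as $E[V_{t+1}\mid\mathcal{F}_t] \leq V_t + \beta_t - \gamma_t$ with summable $\beta_t = O(V_t/t^2)$ and $\gamma_t = cV_t/t$, this is exactly the hypothesis of the Robbins--Siegmund theorem (cf.~\cite{rob} and the Appendix), which yields simultaneously the almost sure convergence of $V_t$ to a finite limit $V_\infty \geq 0$ and the almost sure finiteness of $\sum_t V_t/t$. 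On the event $\{V_\infty > 0\}$ one would eventually have $V_t \geq V_\infty/2$, forcing the divergent lower bound $\sum_t V_t/t = \infty$, a contradiction; hence $V_\infty = 0$ almost surely, giving $(t+1)^{\beta}/H_t \to 0$ a.s.\ and completing the proof.
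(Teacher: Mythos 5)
Your proof is correct, but it reaches the conclusion by a different mechanism than the paper. The paper applies the same kind of second-order expansion to the process $t/H_t^{\theta}$, shows it is eventually a positive supermartingale whenever $\theta\delta>1$, and hence only obtains convergence to a \emph{finite} limit; the passage to a \emph{zero} limit is then bought by the arbitrariness of $\theta$: for $1/\delta<\theta'<\theta$ one has $t/H_t^{\theta}=(t/H_t^{\theta'})\,H_t^{-(\theta-\theta')}\to 0$ since $H_t\to+\infty$. You instead work with the first power, $V_t=(t+1)^{1/\theta}/H_t$, keep the negative drift term $-cV_t/t$ explicitly, and invoke the full Robbins--Siegmund theorem, whose second conclusion (the almost sure summability of the subtracted terms $R_{2,t}=cV_t/t$) combined with the divergence of $\sum_t 1/t$ forces $V_\infty=0$ directly, with no need to vary $\theta$. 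Two remarks: the summability conclusion you use is in \cite{rob} but is not recorded in the abbreviated statement of the appendix, so you are relying on the full theorem rather than the version quoted in the paper; and your passage from the a.s.\ asymptotics \eqref{cond-mean}--\eqref{cond-second-moment} and $H_t\to\infty$ to an inequality valid ``beyond a random index'' has the same level of informality as the paper's own ``eventually a positive supermartingale'' step, so it is acceptable here, though a fully rigorous treatment would localize on the events where the bounds hold from a fixed deterministic time onward. Your bookkeeping (the bound $1/(1+x)\le 1-x+x^2$, the factor $((t+2)/(t+1))^{1/\theta}\le 1+1/(\theta(t+1))$, the absorption of $C'V_t/(tH_t)$ using $H_t\to\infty$, and the summability of the $O(V_t/t^2)$ terms thanks to $H_t\ge\mathcal{W}_0$ and $1/\theta<1$) all checks out.
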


\begin{proof} We have  
  \begin{equation*}
    \begin{split}
      &E\left[\frac{t+1}{H_{t+1}^\theta}-
        \frac{t}{H_t^\theta}\,\vert \,\mathcal{F}_t\right]=
      E\left[\frac{t+1}{H_t^\theta}-\frac{t}{H_t^\theta} +
        \frac{t+1}{H_{t+1}^\theta}-\frac{t+1}{H_t^\theta}\,\vert \,\mathcal{F}_t
        \right]=\\
      &\frac{1}{H_t^\theta}+
      E\left[(t+1)
        \left(\frac{1}{(H_{t}+U_{t+1}^A)^\theta}-\frac{1}{H_t^\theta}
        \right)\,\vert \,\mathcal{F}_t\right]
        \leq\\
        &\frac{1}{H_t^\theta}+t E\left[
        \left(\frac{1}{(H_{t}+U_{t+1}^A)^\theta}-\frac{1}{H_t^\theta}
        \right)\,\vert \,\mathcal{F}_t\right]\,.
      \end{split}
  \end{equation*}
  Let $C$ so that $0\leq U_{t}^A=\alpha_{t+1}Y_t\leq C$.  
  Using the Taylor expansion of the function
  $f(x)=1/(a+x)^\theta$ (that is
  $f(x)-f(0)=f'(0)x+\frac{f''(x_0)}{2}x^2$ with $x_0\in (0,x)$) 
  with
  $a=H_t$ and $x=U_{t+1}^A$, we have eventually (so that $H_t\geq 1$)
\begin{equation*}
  \begin{split}
  \frac{1}{(H_t+U^A_{t+1})^{\theta}}&-\frac{1}{H_t^\theta}\\
  &\leq
  -\frac{\theta}{H_t^{\theta+1}}U_{t+1}^A+
  \frac{\theta(\theta+1)}{H_t^{\theta+2}}(U_{t+1}^A)^2
  \leq
  -\frac{\theta}{H_t^{\theta+1}}U_{t+1}^A+
  \frac{\theta(\theta+1)}{H_t^{\theta+2}}C U_{t+1}^A
  \end{split}
\end{equation*}
  and so, recalling that $\mathcal{W}_t=H_t/s_t\stackrel{a.s.}\sim H_t/t$, we get 
  \begin{equation*}
    \begin{split}
 E\left[\frac{1}{(H_t+U^A_{t+1})^{\theta}}-\frac{1}{H_t^\theta}\vert \mathcal{F}_t\right]
 &\leq   
  -\frac{\theta}{H_t^{\theta+1}}
  \alpha_{t+1}E[Y_{t+1}\,|\,\mathcal{F}_t]
  \left(1+
  \frac{(\theta+1)C}{H_t}\right)
  \\
  &\stackrel{a.s.}\sim -\frac{\theta\delta}{H_t^{\theta}}
  \frac{1}{t}\left[1+
  O\left(\frac{1}{H_t}\right)\right]\,.
  \end{split}
\end{equation*}
Therefore, we have  
$$
E\left[\frac{t+1}{H_{t+1}^\theta}-\frac{t}{H_t^\theta}\,\vert \,\mathcal{F}_t\right]
\leq
\frac{1}{H_t^\theta t}
\left[-(\theta\delta-1)+O\left(\frac{1}{H_t}\right)\right]
$$ and so, for $\theta\delta>1$, since $H_t\to +\infty$, we can conclude that the above conditional
expectation is eventually negative. This proves that, for each $\theta>1/\delta$, 
 $(t/H_t^{\theta})_t$ is eventually a (positive) super-martingales and
 so, for each $\theta>1/\delta$, it converges almost surely to a finite random variable. 
 Since $\theta>1/\delta$ is arbitrary,
 we necessarily have that $t/H_t^{\theta}$ converges almost surely to zero. 
 This fact concludes the proof.
\end{proof}
 
Now we are ready for the proof of the previous theorem.

\begin{proof} (of Theorem~\ref{th-general})\\ 
Set $L_t=\ln(H_t/t^{\gamma^*})$, $\Delta_t=E[L_{t+1}-L_t\vert \mathcal{F}_t]$ and
$Q_t=E[(L_{t+1}-L_t)^2\vert \mathcal{F}_t]$. If we prove that
 $\sum_t\Delta_t$ and $\sum_t Q_t$ are almost surely convergent, then
$L_t$ converges almost surely to a finite random variable (see Lemma~\ref{lemma-pemantle}). 
This fact implies that $H_t/t^{\gamma^*}$ 
converges to a random variable with values in $(0,+\infty)$. The rest of the proof is devoted to verify that
$\sum_t\vert \Delta_t\vert <+\infty$ and $\sum_t Q_t<+\infty$ almost
surely.\\ To this regard, we note that
  \begin{equation*}
  \begin{split}
    \Delta_t=
    &E[\ln(H_{t+1})-\ln(H_t)\vert \mathcal{F}_t]-\gamma^*\left(\ln(t+1)-\ln(t)\right)=
    \\
    &E[\ln(H_{t}+U_{t+1}^A)-\ln(H_t)\vert \mathcal{F}_t]-\gamma^*\ln(1+1/t)=
    \\
   &E\left[\int_0^{U_{t+1}^A}\frac{1}{H_t+x}\,dx\right]-\gamma^*\ln(1+1/t)\,.
  \end{split}
  \end{equation*}
  Since $1/(H_t+x)\leq 1/H_t$ and $\ln(1+1/t)\geq 1/t-1/(2t^2)$
  for each $x\geq 0$ and each $t$,    the last term of the above equalities
  is smaller than or equal to 
  $$
  \frac{1}{H_t}E[U_{t+1}^A\vert \mathcal{F}_t]
  -\frac{\gamma^*}{t}
  +\frac{\gamma^*}{2t^2}
$$ and so, recalling \eqref{cond-mean} and that $\mathcal{W}_t=H_t/s_t\stackrel{a.s.}\sim H_t/t$, 
it  is smaller than or equal to
$$
\frac{\alpha_{t+1}E[Y_{t+1}|\mathcal{F}_t]}{H_t}-\frac{\delta}{t}+\frac{\delta}{2 t^2}
\stackrel{a.s.}\sim
\frac{\delta}{t}-\frac{\delta}{t}+\frac{\delta}{2 t^2}=O(1/t^2)\,.
$$ 
Therefore $\Delta_t=O(1/t^2)$. Finally, we
note that $-\Delta_t=\delta\ln(1+1/t)-\ln(H_{t+1})+\ln(H_t)$. 
Using $\ln(1+1/t)\leq 1/t$  
 and $1/(H_t + x) \geq 1/H_t - x/H_t^2$ for each $x\geq 0$ and each $t$, 
 we find that
$-\Delta_t$  is smaller than or equal to 
 $$
 \frac{\delta}{t}-\frac{1}{H_t}E[U_{t+1}^A\vert \mathcal{F}_t]+
 \frac{1}{2H_t^2}E[(U_{t+1}^A)^2\vert \mathcal{F}_t]
 $$
 and so, recalling \eqref{cond-mean}, \eqref{cond-second-moment} and 
 that $\mathcal{W}_t=H_t/s_t\stackrel{a.s.}\sim H_t/t$, 
it  is smaller than or equal to
$$
 \frac{\delta}{t}-\frac{\alpha_{t+1}E[Y_{t+1}|\mathcal{F}_t]}{H_t}+
 \frac{\alpha_{t+1}^2E[Y_{t+1}|\mathcal{F}_t]}{2H_t^2}
 \stackrel{a.s.}\sim
 \frac{\delta}{2tH_t}=O(1/(tH_t)).
 $$
By the previous Lemma, we have $1/H_t=o(t^{-\eta})$ for some $\eta>0$ and so 
$-\Delta_t=O(1/t^{1+\eta})$.  Thus,
$\sum_t\vert \Delta_t\vert <+\infty$ almost surely.  
Similarly we have
 \begin{equation*}
  \begin{split}
&E[( \ln(H_{t+1})-\ln(H_t)-\delta\ln(t+1)+\delta\ln(t) )^2
\vert \mathcal{F}_t]
\leq\\
&2\left\{
E[(\ln(H_{t+1})-\ln(H_t))^2\vert \mathcal{F}_t]+
\delta(\ln(t+1)-\ln(t))^2
\right\}
\leq\\
&2 E\left[\left(\int_0^{U_{t+1}^A}\frac{1}{H_t+x}\,dx\right)^2\Big\vert \mathcal{F}_t\right]+
2(\delta)^2/t^2
\leq \\
&2
E[(U_{t+1}^A/H_t)^2\vert \mathcal{F}_t]+
O(1/t^2)
\leq \frac{1}{H_t^2}\alpha_{t+1}^2E[Y_{t+1}|\mathcal{F}_t]
\stackrel{a.s.}\sim \\
&O(1/(tH_t))+O(1/t^2)\,.
  \end{split}
 \end{equation*}
 Therefore, we get $Q_t=O(1/t^{1+\eta})$ for some $\eta>0$ and so
 $\sum_tQ_t<+\infty$ almost surely.
\end{proof}

%
%
%
%

\subsection{Non-negative almost supermartingale}\label{sec-almost-supermart}

Let $(Y_n)$ be an $\mathcal{F}$-adapted sequence of non-negative random variables satisfying
$$
E[Y_{n+1}|\mathcal{F}_n]\leq (1+\Delta_n)Y_n+R_{1,n}-R_{2,n},\,
$$
where $\Delta_n$, $R_{1,n}$, $R_{2,n}$ are all non-negative sequences of random variables.
Then $(Y_n)$ is called non-negative almost super-martingale. 
\\ 

\indent By \cite{rob}, we know that it almost surely converges on $\{\sum_n\Delta_n<+\infty\,,\sum_n R_{1,n}<+\infty\}$.


\subsection{Some technical results}

For the reader's convenience, 
we here recall some technical results used in the previous proofs.

\begin{lemma}[{\cite[Supplementary material]{ale-cri-ghi-complete}}]\label{lemma-tecP}
If $a_t\geq 0$, $a_t \leq 1$ for $t$ large enough, $\sum_t a_t =
+\infty$, $\delta_t \geq 0$, $\sum_t \delta_t < +\infty$, $b>0$,
$y_t\geq 0$ and $y_{t+1} \leq (1-a_t)^b y_t + \delta_t$, then $\lim_t
y_t = 0$.
  \end{lemma}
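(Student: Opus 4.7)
The plan is to iterate the recursion and control the resulting Duhamel-type expression by splitting it into two pieces, governed respectively by the hypotheses $\sum_k a_k = +\infty$ and $\sum_k \delta_k < +\infty$.

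First I would pick $t_0$ large enough that $a_t \leq 1$ for all $t \geq t_0$, and unroll the inequality $y_{t+1} \leq (1-a_t)^b y_t + \delta_t$ from $t_0$ to get
$$
y_{t+1} \;\leq\; P_t\, y_{t_0} \;+\; \sum_{n=t_0}^{t} \frac{P_t}{P_n}\, \delta_n,
\qquad
P_t := \prod_{k=t_0}^{t} (1-a_k)^b.
$$
Observe that $P_t \to 0$: from $(1-a_k)^b \leq e^{-b a_k}$, valid for $a_k \in [0,1]$ by concavity of $\log$, one obtains $P_t \leq \exp\!\big(-b \sum_{k=t_0}^{t} a_k\big) \to 0$ since $\sum_k a_k = +\infty$. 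So the initial-data contribution $P_t y_{t_0}$ vanishes.

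For the remaining quantity $S_t := \sum_{n=t_0}^{t} (P_t/P_n)\delta_n$, the subtlety is that the weights $P_t/P_n$ depend on the outer index $t$, so a direct dominated-convergence argument on the fixed summable sequence $(\delta_n)$ does not apply. The standard workaround is an $\varepsilon$-cutoff: given $\varepsilon > 0$, use $\sum_n \delta_n < +\infty$ to pick $N_0 \geq t_0$ with $\sum_{n \geq N_0} \delta_n < \varepsilon$, and split $S_t$ at $N_0$. The head $\sum_{n=t_0}^{N_0-1} (P_t/P_n)\delta_n$ is bounded by $(P_t/P_{N_0-1}) \sum_{n=t_0}^{N_0-1}\delta_n$, which tends to $0$ because $P_{N_0-1}$ is a fixed positive constant while $P_t \to 0$. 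The tail $\sum_{n=N_0}^{t} (P_t/P_n)\delta_n$ is dominated by $\sum_{n \geq N_0} \delta_n < \varepsilon$, since each ratio $P_t/P_n = \prod_{k=n+1}^{t} (1-a_k)^b$ lies in $[0,1]$. Hence $\limsup_t y_{t+1} \leq \varepsilon$, and letting $\varepsilon \downarrow 0$ gives $\lim_t y_t = 0$.

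The only mild obstacle is recognizing that the weights in $S_t$ depend on $t$, which forces the splitting at an $\varepsilon$-dependent threshold $N_0$ instead of a uniform dominated-convergence argument; once this is identified, the bookkeeping is elementary. The hypothesis $b > 0$ enters only through the exponential bound on $P_t$, and the hypothesis $a_t \leq 1$ for large $t$ ensures that $(1-a_t)^b \geq 0$ so that the inequality propagates with a non-negative multiplicative factor.
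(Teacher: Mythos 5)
Your proof is correct. Note that the paper does not actually prove this lemma: it is only quoted, with a pointer to the supplementary material of the cited reference \cite{ale-cri-ghi-complete}, so there is no in-paper argument to compare against; your unrolling of the recursion into $y_{t+1}\leq P_t y_{t_0}+\sum_{n=t_0}^{t}\bigl(\prod_{k=n+1}^{t}(1-a_k)^b\bigr)\delta_n$, the bound $(1-a_k)^b\leq e^{-b a_k}$, and the $\varepsilon$-cutoff of the convolution sum is exactly the standard argument one would expect there. One small point of hygiene: the hypotheses allow $a_k=1$ (and hence $P_n=0$) for some indices, so the ratios $P_t/P_n$ can be ill-defined; this is harmless if you phrase the head estimate directly as $\sum_{n=t_0}^{N_0-1}\bigl(\prod_{k=n+1}^{t}(1-a_k)^b\bigr)\delta_n\leq \bigl(\prod_{k=N_0}^{t}(1-a_k)^b\bigr)\sum_{n<N_0}\delta_n\to 0$, using only that each dropped factor lies in $[0,1]$, rather than through division by $P_{N_0-1}$.
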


\begin{lemma}[{\cite[Theorem~46, p.~40]{del-mey}}]\label{del-mey-result-app}
  Let $(Y_t)_t$ be a sequence of non-negative random variables, adapted to a
  filtration ${\mathcal F}=(\mathcal{F}_t)_t$. Then the set $\{\sum_t
  E[Y_{t+1}\vert \mathcal{F}_{t}]<+\infty\}$ is almost surely contained in
  the set $\{\sum_t Y_t<+\infty\}$. If the random variables $Y_t$ are
  uniformly bounded by a constant, then these two sets are almost
  surely equal.
\end{lemma}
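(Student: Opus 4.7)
The natural approach is the Doob decomposition $S_n := \sum_{t=1}^n Y_t = M_n + A_n$ of the increasing process $S_n$, where $A_n := \sum_{t=1}^n E[Y_t \mid \mathcal{F}_{t-1}]$ is the predictable compensator and $M_n := S_n - A_n$ is a martingale (well-defined by truncation, since the $Y_t$ are non-negative). The two events in the statement are $\{S_\infty < +\infty\}$ and $\{A_\infty < +\infty\}$, and my plan is to use localization at the level $\{A \le N\}$ (resp.\ $\{S \le N\}$) combined with optional stopping on $M$ to derive each inclusion, taking a union over $N \in \mathbb{N}$ at the end.

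For the first inclusion, I would introduce the stopping time $\tau_N := \inf\{n \ge 0 : A_{n+1} > N\}$. Since $A_{n+1}$ is $\mathcal{F}_n$-measurable, $\tau_N$ is an $\mathcal{F}$-stopping time, and by construction $A_{n \wedge \tau_N} \le N$ for every $n$. Optional stopping on $M$ gives $E[S_{n \wedge \tau_N}] = E[A_{n \wedge \tau_N}] \le N$, and monotone convergence in $n$ yields $E[S_{\tau_N}] \le N$, hence $S_{\tau_N} < +\infty$ a.s. On $\{A_\infty \le N\}$ we have $\tau_N = +\infty$, so $S_\infty < +\infty$ there. The converse under the uniform bound $Y_t \le C$ is symmetric: with $\sigma_N := \inf\{n \ge 1 : S_n > N\}$, the boundedness of the jumps of $S$ forces $S_{n \wedge \sigma_N} \le N + C$, so optional stopping on $M$ gives $E[A_{n \wedge \sigma_N}] = E[S_{n \wedge \sigma_N}] \le N + C$ and hence $A_\infty < +\infty$ a.s.\ on $\{S_\infty \le N\}$.

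The main obstacle is really one of careful indexing: because $A$ is predictable while $S$ is only adapted, the two stopping times $\tau_N$ and $\sigma_N$ must be defined with the correct time-shift so that the stopped version of the relevant non-decreasing process remains deterministically bounded by $N$ (resp.\ $N+C$). The hypothesis $Y_t \le C$ is genuinely essential for the converse inclusion: without it, the jump $Y_{\sigma_N}$ can be arbitrarily large and no deterministic bound on $E[S_{n \wedge \sigma_N}]$ is available. A standard counterexample, e.g.\ $Y_{t+1}$ equal to $2^t$ with conditional probability $2^{-t}$ and zero otherwise, gives $\sum_t Y_t < +\infty$ a.s.\ by Borel--Cantelli while $\sum_t E[Y_{t+1} \mid \mathcal{F}_t] = +\infty$, confirming that the second half of the statement truly requires boundedness.
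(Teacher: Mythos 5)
The paper does not prove this lemma at all: it is quoted verbatim from Dellacherie--Meyer (Theorem~46, p.~40) and used as a black box in the proof of Lemma~\ref{lemma-infinito}, so there is no ``paper proof'' to compare against line by line. Your argument is a correct, self-contained proof and is in fact the standard one behind the cited result: Doob decomposition of $S_n=\sum_{t\le n}Y_t$ into martingale plus predictable compensator $A_n$, localization at the levels $\{A\le N\}$ and $\{S\le N\}$, and a union over $N$. The index shifts are handled correctly: since $A_{n+1}$ is $\mathcal{F}_n$-measurable, $\tau_N=\inf\{n:A_{n+1}>N\}$ is a stopping time with $A_{n\wedge\tau_N}\le N$ deterministically, and in the converse the uniform bound $Y_t\le C$ gives $S_{n\wedge\sigma_N}\le N+C$ because only the single overshooting jump needs to be controlled. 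Two small remarks. First, the ``truncation'' you invoke to make $M$ a genuine martingale can be dispensed with entirely: since $\{\tau_N\ge t\}\in\mathcal{F}_{t-1}$, one can write $E[S_{n\wedge\tau_N}]=\sum_{t=1}^{n}E\bigl[E[Y_t\mid\mathcal{F}_{t-1}]\mathbbm{1}_{\{\tau_N\ge t\}}\bigr]=E[A_{n\wedge\tau_N}]\le N$ directly by the tower property and Tonelli, with no integrability assumption on the $Y_t$; the same device works for $\sigma_N$. Second, your counterexample ($Y_{t+1}=2^t$ with probability $2^{-t}$, independent across $t$) is correct and does confirm that boundedness (or at least some integrability control of the jumps) is genuinely needed for the reverse inclusion, which is consistent with the way the paper only uses the unconditional first inclusion together with uniform boundedness in Lemma~\ref{lemma-infinito}.
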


\begin{lemma}[{\cite[Sec.~12.15]{williams}}]\label{williams-lemma}
Let $(Y_t)_t$ be a sequence of Bernoulli random variables, adapted to a filtration $\mathcal{F}=(\mathcal{F}_t)_t$ 
and  such that $Z_t=P(Y_{t+1}=1\,|\,\mathcal{F}_t)$. 
Then $\sum_{n=1}^t Y_n / \sum_{n=0}^{t-1} Z_n \stackrel{a.s.}\longrightarrow 1$.
\end{lemma}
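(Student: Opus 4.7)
\textbf{Proof plan for Lemma~\ref{williams-lemma}.} The plan is to realize the conclusion as a strong law of large numbers for the martingale of compensated Bernoulli increments. Set
$$
S_t=\sum_{n=0}^{t-1}Z_n\qquad\text{and}\qquad M_t=\sum_{n=1}^{t}(Y_n-Z_{n-1}).
$$
Since $Y_n\in\{0,1\}$ and $E[Y_n\,|\,\mathcal{F}_{n-1}]=Z_{n-1}$, the process $(M_t)$ is an $(\mathcal{F}_t)$-martingale with bounded increments, $S_t$ is predictable (i.e.~$\mathcal{F}_{t-1}$-measurable), non-decreasing, and the identity $\sum_{n=1}^t Y_n=S_t+M_t$ reduces the claim to showing $M_t/S_t\stackrel{a.s.}\longrightarrow 0$ on the event $\{S_\infty=+\infty\}$ (on $\{S_\infty<+\infty\}$ both sums are finite and, by conditional Borel--Cantelli, the ratio degenerates, but this case does not arise in the application since $Z^*_{t,h}\sim C/t^{1-\gamma^*}$ so that $S_\infty=+\infty$ a.s.).

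The key step will be a Kronecker-type argument based on an $L^2$-bounded martingale transform. Define the predictable weights $H_n=1/(1+S_n)$ and consider the martingale transform
$$
N_t\;=\;(H\cdot M)_t\;=\;\sum_{n=1}^{t}\frac{Y_n-Z_{n-1}}{1+S_n}.
$$
Its predictable quadratic variation satisfies
$$
\langle N\rangle_t\;=\;\sum_{n=1}^{t}\frac{Z_{n-1}(1-Z_{n-1})}{(1+S_n)^2}\;\le\;\sum_{n=1}^{t}\frac{S_n-S_{n-1}}{(1+S_n)^2},
$$
and the right-hand side is bounded by the convergent improper integral $\int_{0}^{\infty}(1+s)^{-2}\,ds=1$, by a standard telescoping/integral comparison applied pathwise to the non-decreasing sequence $S_n$. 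Hence $\langle N\rangle_\infty$ is almost surely bounded, so $N_t$ converges a.s.~(and in $L^2$) by the $L^2$-bounded martingale convergence theorem.

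To close, I would apply Kronecker's lemma pathwise on the event $\{S_\infty=+\infty\}$: the sequence $(1+S_n)$ is non-decreasing to infinity, and the series $\sum_{n\ge 1}(Y_n-Z_{n-1})/(1+S_n)$ converges by the previous paragraph. Kronecker's lemma (which is deterministic and therefore insensitive to the randomness of $S_n$) gives
$$
\frac{1}{1+S_t}\sum_{n=1}^{t}(Y_n-Z_{n-1})\;\stackrel{a.s.}\longrightarrow\;0,
$$
that is $M_t/S_t\to 0$ on $\{S_\infty=+\infty\}$, hence $\sum_{n=1}^t Y_n/S_t=1+M_t/S_t\to 1$ a.s., as required. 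The main technical point will be the integral-comparison bound on $\sum(S_n-S_{n-1})/(1+S_n)^2$, which is routine, while the conceptual obstacle is the implicit restriction to $\{S_\infty=+\infty\}$: outside that event the ratio in the statement need not tend to $1$, and one should understand the lemma as the almost sure convergence on the divergence event, which is the only case used in the proofs of Theorems~\ref{th-synchro-rates} and~\ref{th-uniform-partition}.
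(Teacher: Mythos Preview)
The paper does not give its own proof of this lemma; it simply cites \cite[Sec.~12.15]{williams}, where precisely the argument you outline (compensated martingale $M_t$, the $L^2$-bounded transform with predictable weights $1/(1+S_n)$, and Kronecker's lemma on $\{S_\infty=+\infty\}$) is carried out. Your proposal is correct and coincides with the referenced proof, including the appropriate caveat that the conclusion is to be read on the event $\{\sum_n Z_n=+\infty\}$, which is the only case used in the proofs of Theorems~\ref{th-synchro-rates} and~\ref{th-uniform-partition}.
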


\begin{lemma}[{\cite[Lemma~3.2 ]{pemantle-volkov-1999}}]\label{lemma-pemantle} 
  Let $(L_n)_n$ be a sequence of random variables, adapted to a
  filtration $\mathcal{G}_n$. Set
  $\Delta_n=E[L_{n+1}-L_n\vert \mathcal{G}_n]$ and
  $Q_n=E[(L_{n+1}-L_n)^2\vert \mathcal{G}_n]$. If $\sum_n\Delta_n$ and
  $\sum_nQ_n$ are almost surely convergent, then $(L_n)_n$ converges
  almost surely to a finite random variable.
\end{lemma}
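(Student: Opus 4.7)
The plan is to exploit the two-color urn embedding already hinted at in the excerpt: write $\mathcal{W}_t = H_t/s_t$ with $H_t$ the count of ``type-$A$'' balls, $s_t \sim t$, and increment $U^A_{t+1} := \alpha_{t+1} Y_{t+1}$ satisfying $E[U^A_{t+1}|\mathcal{F}_t] \sim \alpha_{t+1}\delta \mathcal{W}_t$ with $\alpha_t \sim 1$. Since $t^{1-\delta}\mathcal{W}_t \sim H_t/t^\delta$, the theorem reduces to showing that $H_t/t^\delta$ converges almost surely to a strictly positive and finite random variable. I would attack this by first proving two preparatory estimates on $H_t$ and then feeding them into a Pemantle--Volkov (Lemma~\ref{lemma-pemantle}) argument applied to $L_t := \ln(H_t) - \delta \ln t$.

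First I would establish the two preliminary lemmas. For $H_t \to +\infty$ a.s., since the $U^A_{t+1}$ are uniformly bounded, Lemma~\ref{del-mey-result-app} reduces the problem to $\sum_t E[U^A_{t+1}|\mathcal{F}_t] = +\infty$ a.s., which follows from $E[U^A_{t+1}|\mathcal{F}_t] \sim \delta \mathcal{W}_t$ together with the pointwise lower bound $\mathcal{W}_t \geq \mathcal{W}_0/s_t \sim \mathcal{W}_0/t$ (no mass is ever removed, so $H_t \geq \mathcal{W}_0$). For the quantitative rate $1/H_t = o(t^{-1/\theta})$ valid for every $\theta > 1/\delta$, I would show that $(t/H_t^\theta)_t$ is eventually a positive supermartingale: a second-order Taylor expansion of $1/(H_t+x)^\theta$ in $x = U^A_{t+1} \in [0,C]$, combined with $E[U^A_{t+1}|\mathcal{F}_t] \sim \alpha_{t+1}\delta H_t/t$, yields
$$E\left[\frac{t+1}{H_{t+1}^\theta} - \frac{t}{H_t^\theta}\,\Big|\,\mathcal{F}_t\right] \leq \frac{1}{H_t^\theta\, t}\bigl[-(\theta\delta-1)+O(1/H_t)\bigr],$$
which is eventually negative once $\theta\delta > 1$. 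Supermartingale convergence for every such $\theta$ then forces the limit of $t/H_t^\theta$ to be $0$ (by arbitrariness of $\theta$ across a sequence tending to $1/\delta$), giving the stated $o$-estimate.

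Next, I would verify the two inputs of Pemantle--Volkov: $\sum_t |\Delta_t| < \infty$ and $\sum_t Q_t < \infty$ a.s., where $\Delta_t = E[L_{t+1}-L_t|\mathcal{F}_t]$ and $Q_t = E[(L_{t+1}-L_t)^2|\mathcal{F}_t]$. For the upper bound on $\Delta_t$, writing the increment as an integral $\int_0^{U^A_{t+1}}(H_t+x)^{-1}dx$ and using $1/(H_t+x) \leq 1/H_t$ together with $\ln(1+1/t) \geq 1/t - 1/(2t^2)$, the leading $\delta/t$ terms cancel and one obtains $\Delta_t \leq O(1/t^2)$. For the upper bound on $-\Delta_t$, the sharper inequality $1/(H_t+x) \geq 1/H_t - x/H_t^2$ and $\ln(1+1/t) \leq 1/t$ again produce cancellation at order $1/t$, leaving a remainder of size $O(1/(tH_t))$; invoking the second preliminary lemma with any $\theta$ slightly above $1/\delta$ upgrades this to $O(1/t^{1+\eta})$ for some $\eta > 0$. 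A fully parallel computation gives $Q_t = O(1/(tH_t)) + O(1/t^2)$, hence also summable. Lemma~\ref{lemma-pemantle} then yields $L_t \to L_\infty$ a.s.\ with $L_\infty$ finite, and exponentiating gives $H_t/t^\delta \to e^{L_\infty} \in (0,+\infty)$ almost surely, which is the claim.

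The main obstacle I anticipate is achieving the exact cancellation at order $1/t$ on both sides of the $\Delta_t$ estimate: summability of $\sum|\Delta_t|$ is borderline, and the lower-bound computation only produces $O(1/(tH_t))$, which by itself is not summable. The preliminary polynomial lower bound on $H_t$ obtained through the supermartingale argument is therefore not mere scaffolding but is genuinely essential for closing the Pemantle--Volkov input; a secondary technical point is to handle carefully the ``$\sim$'' in $E[U^A_{t+1}|\mathcal{F}_t] \sim \alpha_{t+1}\delta \mathcal{W}_t$ so that the cancellation is not destroyed by the lower-order error terms.
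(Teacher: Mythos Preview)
You have not addressed the stated lemma. Lemma~\ref{lemma-pemantle} is the abstract Pemantle--Volkov criterion: given an adapted sequence $(L_n)$ with $\Delta_n=E[L_{n+1}-L_n\mid\mathcal G_n]$ and $Q_n=E[(L_{n+1}-L_n)^2\mid\mathcal G_n]$, if $\sum_n\Delta_n$ and $\sum_nQ_n$ converge a.s.\ then $L_n$ converges a.s. The paper does not prove this; it is quoted verbatim from \cite{pemantle-volkov-1999} and used as a black box. What you have written is instead a proof of Theorem~\ref{th-general} (the statement that $t^{1-\delta}\mathcal W_t$ converges a.s.\ to a strictly positive finite limit), and your argument \emph{invokes} Lemma~\ref{lemma-pemantle} rather than establishing it.

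If your intention was in fact to prove Theorem~\ref{th-general}, then your outline is essentially identical to the paper's own proof: the same urn embedding $\mathcal W_t=H_t/s_t$, the same two preparatory lemmas ($H_t\to+\infty$ via Lemma~\ref{del-mey-result-app}, and $1/H_t=o(t^{-1/\theta})$ for $\theta>1/\delta$ via the supermartingale $t/H_t^\theta$), and the same application of Lemma~\ref{lemma-pemantle} to $L_t=\ln(H_t/t^{\delta})$ with the same Taylor/integral estimates for $\Delta_t$ and $Q_t$. Should you actually want to prove Lemma~\ref{lemma-pemantle} itself, the standard route is the Doob decomposition $L_n=L_0+M_n+A_n$ with $A_n=\sum_{k<n}\Delta_k$ and $M_n$ a martingale whose bracket is controlled by $\sum_k Q_k$; convergence of $A_n$ is the hypothesis on $\sum\Delta_n$, and convergence of $M_n$ follows from the $L^2$-bounded martingale convergence theorem once $\sum Q_n<\infty$.
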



\section{Heuristics}\label{heuristics}

{
We here describe an heuristic argument (also employed in \cite{iacopini-2020}), useful in order to detect the rate at which 
 each $D^*_{t,h}$ grows along time in the case of a general matrix $\Gamma$.}\\ 
 
{\indent The dynamics that rules the vectorial process $\mathbf{D}^*_t=(D^*_{t,1},\dots,D^*_{t,N})^\top$
 can be approximated (as $t\to +\infty$) by the linear system of (deterministic) differential equations 
\begin{equation*}
\dot{\mathbf{d}}^*(t) = \Gamma \frac{\mathbf{d}^*(t)}{t}
\end{equation*}
and hence we can say that $\mathbf{D}^*_t\approx \mathbf{d}^*_t$ for $t\to +\infty$. 
By the change of variable $t=e^z$, we get
\begin{equation*}
\dot{\mathbf{d}}^*(z) = \Gamma \mathbf{d}^*(z)\,,
\end{equation*}
whose general solution is given by $\mathbf{d}^*(z)=e^{\Gamma z}\mathbf{c}$. Now,
the term $e^{\Gamma z}$ can be expressed using the canonical Jordan form of the matrix $\Gamma$, 
so that we obtain
$$
\mathbf{d}^*(z)=\sum_{k=1}^{r} e^{\gamma_k z} \sum_{i=0}^{p_k-1} z^i \mathbf{c}_i,
$$
where  $\gamma_1,\dots, \gamma_r$ are the distinct eigenvalues of $\Gamma$, $p_1,\dots,p_r$ are 
the sizes of the corresponding Jordan blocks and $\mathbf{c}_i$ are suitable vectors related to $\mathbf{c}$ and to 
the generalized eigenvectors of $\Gamma$. Indeed, 
we can write $\Gamma$ as $PJP^{-1}$, where $J$ is its canonical Jordan form and $P$ is a suitable invertible matrix of 
generalized eigenvectors.
Therefore, we have $e^{\Gamma z}=P e^{J z} P^{-1}$, where $e^{Jz}$ is a block matrix with blocks of the form
$e^{J_k z}$ with $J_k$ block in $J$. On the other hand, if $J_k =\gamma_k I + N_k$ is a generic Jordan 
block of $\Gamma$ with size $p_k$ and associated to the eigenvalue $\gamma_k$, we have
$$
e^{J_k z}= e^{\gamma_k z} e^{N_k z}=
e^{\gamma_k z} \sum_{i=0}^{p_k-1} \frac{z^i}{(i-1)!} N_k^i\, .
$$
Changing the variable from $z$ to $t$, we find 
\begin{equation}\label{eq-general-solution}
\mathbf{D}^*_t\approx \mathbf{d}^*(t) = 
\sum_{k=1}^{r} t^{\gamma_k} \sum_{i=0}^{p_k-1} \ln^i(t) \mathbf{c}_i
\end{equation}
and so the rate at which $D^*_{t,h}$ increases is given by the leading term in the expression
of $d^*_h(t)$. }
\\

{In particular, when $\Gamma$ is irreducible,
the above general formula leads, for each $D^*_{t,h}$, to the same asymptotic behavior 
$t^{\gamma^*}$, with $\gamma^*$ equal to the leading eigenvalue of $\Gamma$ 
(recall that $\gamma^*$ is simple and so the logarithm term is not present). However, it is important to 
note that, with this heuristic argument, we can deduce the right rate at which each 
$D^*_{t,h}$ grows, but we cannot get any information about the limit random variable: 
we can deduce that, for each $h$,  the quantity $D^*_{t,h}/(u_h t^{\gamma^*})$, 
where $\mathbf{u}$ is the vector of the relative centrality scores,  converges almost surely 
to a certain random variable (first statement of Theorem~\ref{th-synchro-rates}), but we cannot affirm 
that these limit random variables are all equal and this last fact is fundamental in order to obtain 
the second statement of Theorem~\ref{th-synchro-rates}. 
Nevertheless, we can affirm that the merit of this heuristics is the fact that, 
from \eqref{eq-general-solution}, we can get the 
rate at which each $D^*_{t,h}$ grows for any matrix $\Gamma$.}

\section{A preliminary idea for the estimation of the interaction in the case $N=2$}
\label{sec-estimate-interaction}
 
{In this section, for the case $N=2$, we provide a parametric family for the matrix 
$\Gamma=(\gamma_{j,h})_{j,h=1,2}$ such that its leading eigenvalue $\gamma^*$ and the ratio $r=u_1/u_2$ of the components of 
its corresponding left eigenvector coincide with some given values. More precisely, given 
the values $\gamma^*\in (0,1)$ and $r\in (0,1]$, the matrices 
\begin{equation}\label{eq:formula_gamma_real_data}
\Gamma(x_1,x_2) =
\begin{pmatrix}
\gamma^*(1-x_1) & \frac{\gamma^*}{r}x_2I_{(\gamma^*\leq r)}+\frac{(1-\gamma^*)}{(1-r)}x_2I_{(\gamma^*>r)}\\
r\gamma^*x_1    & \gamma^*(1-x_2)I_{(\gamma^*\leq r)}+\left[\gamma^*-\frac{(1-\gamma^*)}{(1-r)}rx_2\right]I_{(\gamma^*>r)}
\end{pmatrix},\qquad x_1,\,x_2\in (0,1)
\end{equation}
are non-negative, irreducible, such that $\vone^\top\Gamma<\vone^\top$ and have 
the leading eigenvalue equal to $\gamma^*$ and the ratio of the components of the 
corresponding left eigenvector equal to $r$.  Moreover, we can define a parametric family for the matrix $W=(w_{j,h})_{j,h=1,2}$, 
adding other two parameters,  as  $ W(x_1,x_2,y_1,y_2)=\Gamma(x_1,x_2)+\Lambda(x_1,x_2,y_1,y_2)$ where
 \begin{equation*}\label{eq:formula_W}
\Lambda(x_1,x_2,y_1,y_2)=\begin{pmatrix}
(1-[\Gamma(x_1,x_2)^{\top} \vone]_1)(1-y_1) & (1-[\Gamma(x_1,x_2)^{\top} \vone]_2)y_2\\
(1-[\Gamma(x_1,x_2)^{\top} \vone]_1)y_1 & (1-[\Gamma(x_1,x_2)^{\top} \vone]_2)(1-y_2)
\end{pmatrix} ,\qquad y_1,\,y_2\in [0,1].
\end{equation*}
Note that the above matrices $W(x_1,x_2,y_1,y_2)$ are non-negative, irreducible and such that 
$\vone^\top W=\vone^\top$. The balance condition is satisfied by construction. 
}
\\

{Given a data set such that the observed processes exhibit asymptotic behaviors in accordance with the 
provided theoretical results of the model, the above parametric families for the two interaction matrices $\Gamma$ and $W$ 
can be used for performing a Maximum Likelihood Estimation (MLE) procedure. In details: 
\begin{itemize}
\item[1)] estimate the quantity $\gamma^*$ as the common slope of the lines in the $\log_{10}-\log_{10}$ plot of the processes 
$(D^*_{t,h})$, with $h=1,\,2$;
\item[2)] estimate the quantity $r$ as $10^{\widehat{u}}$, where $\widehat{u}$ is the difference between the intercepts of 
   the lines in the $\log_{10}-\log_{10}$ plot of the processes $(D^*_{t,h})$, with $h=1,\,2$ 
   (note that, in order to employ the above parametric 
   families of matrices,  we need to label the two categories so that the estimated value for $r$ is $\leq 1$, i.e.~$\widehat{u}\leq 0$); 
\item[3)] consider the matrices $\Gamma(x_1,x_2)$ and $W(x_1,x_2,y_1,y_2)$ related to the estimated values for $\gamma^*$ and $r$;
\item[4)] perform a MLE procedure in order to estimate from the data the interaction parameters 
$x_1,\,x_2,\, y_1$ and $y_2$ and, possibly, the initial parameters $\theta_1$ and $\theta_2$. 
\end{itemize}
However, in order to get a robust MLE estimation, we may want to reduce the number of parameters by imposing some conditions on them:
for instance, we can take $\theta_1$ and $\theta_2$ equal to some given values and  
restrict to matrices $\Gamma(x_1,x_2)$ and $W(x_1,x_2,y_1,y_2)$ that are symmetric 
(which means that the interaction mechanism is symmetric, 
i.e.~the influence of $h=1$ on $h=2$ is equal to the one of $h=2$ on $h=1$). 
The general formula of the likelihood function that we have to maximize is:
\begin{multline*}
\mathcal{L}(\theta_1,\theta_2,x_1,x_2,y_1,y_2;c_{1,1},c_{1,2},\dots,c_{T,1},c_{2,T})=
\\
\prod_{t=1}^{T-1}\prod_{h=1}^2 
\left( Z^*_{t,h} I_{\{c_{t+1,h}\, \text{is new}\}}+ P_t(h,c) I_{\{c_{t+1,h}\,\text{is equal to an old item } c\}} \right)
\end{multline*}
where $I_E$ denotes the indicator function of the event $E$, 
$Z^*_{t,h}$ and $P_t(h,c)$ are given in~\eqref{birth-prob-inter} and 
in~\eqref{old-color-prob-inter} , respectively, and  
$(c_{t,1})_{1,\dots,T}$ and $(c_{t,2})_{1,\dots,T}$ are the 
two observed sequences of items (colors/tables) for the two agents (urns/categories) $h=1,\,2$. 
\\}

{We now present a simulation study aimed at highlighting the performance of the 
estimation procedure obtained by following the steps 1)-4) of the algorithm proposed above. 
In order to reduce the number of parameters to be estimated, we set $\theta_1=\theta_2=1$ and 
we impose that both $\Gamma$ and $W$ must be symmetric. 
This assumption, combined with the condition $W^{\top}\vone=\vone$, implies that $\Gamma$ and $W$ can be univocally identified by four 
parameters, e.g. $\gamma_{1,1}$, $\gamma_{1,2}$, $\gamma_{2,2}$, $w_{1,2}$. For each choice of $\Gamma$ and $W$, 
$100$ independent innovation processes following the model presented in this work have been generated until the time-step $T=10^4$. 
Then, we have applied steps 1)-4) to the data generated by each simulation, so obtaining a set of $100$ estimates of $\gamma^{*}$, $r$, 
$x_1$, $x_2$, $y_1$ and $y_2$ which fulfill the symmetric condition, 
i.e. each one leading to symmetric estimated matrices $\widehat{\Gamma}$ and $\widehat{W}$.
The results of this simulation study are collected in Table~\ref{tab:table-simulations}, 
where the mean values and the standard deviations of the estimated elements are compared with the true ones used for 
generating the data. Regarding the elements of the two matrices and $\gamma^*$, 
the estimation procedure works very well in all the cases.
Regarding $r$, we can note that the estimated values are "sensitive" to the strenght of the interaction term $\gamma_{1,2}$: 
the higher the interaction term, the better is the estimation.}

\begin{table}[h!]
{
  \begin{center}
    \caption{Simulation results of the estimation procedure described in steps 1)-4) with
    $\theta_1=\theta_2=1$ and assuming $\Gamma$ and $W$ symmetric. 
    Each parameter has been estimated by $100$ independent simulated processes generated until time-step $T=10^4$.\\
Columns 1-4: elements of the interacting matrices $\Gamma$ and $W$ used to generate the data.\\
Columns 5-8: mean values and standard deviations of the elements of the 
$100$ estimated interacting matrices $\widehat{\Gamma}$ and $\widehat{W}$.\\
Colmuns 9-10: true $\gamma^*$ and $r$ .\\
Columns 11-12: mean values and standard deviations of the $100$ estimates of $\widehat{\gamma^*}$ and $\widehat{r}$.}
    \label{tab:table-simulations}
\begin{footnotesize}
    \begin{tabular}{|cccc|cccc||cc|cc|}
    \hline
 $\gamma_{1,1}$ & 
 $\gamma_{2,2}$ & 
 $\gamma_{1,2}$ & 
 $w_{1,2}$ &
 $\widehat{\gamma}_{1,1}$ & $\widehat{\gamma}_{2,2}$ & 
 $\widehat{\gamma}_{1,2}$ & $\widehat{w}_{1,2}$ &
 $\gamma^{*}$ & $r$ & 
 $\widehat{\gamma}^{*}$ &  $\widehat{r}$ \\
      \hline
      0.10 & 
      0.40 & 
      0.10 & 
      0.50 & 
      0.10 (0.07) & 0.38 (0.05) &
      0.14 (0.03) & 0.52 (0.03) &
      0.43 & 
      0.30 & 0.43 (0.04) & 0.42 (0.11) \\
      0.10 & 
      0.40 & 
      0.10 & 
      0.25 & 
      0.14 (0.08) & 0.39 (0.05) &
      0.12 (0.03) & 0.26 (0.02) &
      0.43 & 
      0.30 & 0.44 (0.05) & 0.41 (0.10) \\
      0.25 & 
      0.40 & 
      0.10 & 
      0.50 & 
      0.27 (0.06) & 0.39 (0.05) &
      0.12 (0.03) & 0.50 (0.03) &
      0.45 & 
      0.50 & 0.46 (0.04) & 0.63 (0.16) \\
      0.25 & 
      0.40 &
      0.10 & 
      0.25 & 
      0.28 (0.06) & 0.39 (0.05) &
      0.11 (0.03) & 0.25 (0.02) &
      0.45 & 
      0.50 & 0.46 (0.04) & 0.63 (0.14) \\
      0.10 & 
      0.40 & 
      0.25 & 
      0.50 & 
      0.10 (0.05) & 0.39 (0.04) &
      0.26 (0.03) & 0.51 (0.03) &
      0.54 & 
      0.57 & 0.54 (0.03) & 0.59 (0.06) \\
      0.10 & 
      0.40 & 
      0.25 & 
      0.25 & 
      0.13 (0.05) & 0.40 (0.04) &
      0.25 (0.01) & 0.25 (0.01) &
      0.54 & 
      0.57 & 0.54 (0.03) & 0.60 (0.05) \\
      0.25 & 
      0.40 & 
      0.25 & 
      0.50 & 
      0.26 (0.04) & 0.40 (0.04) &
      0.25 (0.03) & 0.50 (0.03) &
      0.59 & 
      0.74 & 0.59 (0.03) & 0.76 (0.07) \\
      0.25 & 
      0.40 & 
      0.25 & 
      0.25 & 
      0.27 (0.04) & 0.40 (0.04) &
      0.24 (0.02) & 0.25 (0.02) &
      0.59 & 
      0.74 & 0.59 (0.02) & 0.76 (0.08) \\
      0.10 & 
      0.40 & 
      0.40 & 
      0.50 & 
      0.11 (0.04) & 0.40 (0.04) &
      0.40 (0.03) & 0.50 (0.03) &
      0.68 & 
      0.69 & 0.68 (0.02) & 0.69 (0.03) \\
      0.25 & 
      0.40 & 
      0.40 & 
      0.50 & 
      0.25 (0.02) & 0.40 (0.03) &
      0.40 (0.02) & 0.50 (0.02) &
      0.73 & 
      0.83 & 0.73 (0.02) & 0.83 (0.03) \\
            \hline
    \end{tabular}
\end{footnotesize}
  \end{center}
  }
\end{table}

{In order to complete the picture, we have also checked how the results can be affected by the choice of 
$\theta_h$ and, in particular, if choosing a wrong value of $\theta_h$ in the likelihood could considerably worsen the estimation of 
$\Gamma$ and $W$. To this end, we have considered some of the scenarios presented in Table~\ref{tab:table-simulations} 
and we have computed the estimates of the elements of $\Gamma$ and $W$ for two different values of 
$\theta_h$ and, in particular, including the cases when the value of $\theta_h$ used to generate the simulated data sets is 
different from the value of $\theta_h$ used to compute the likelihood. 
The results of this simulation study on the "sensitivity" of the parameter $\theta_h$ are collected in Table~\ref{tab:table-simulations_theta}.
In general, we can notice that the results seem to be quite robust to the choice of $\theta_h$ used in the likelihood.
Therefore, the problem of using the "right" $\theta_h$ in the likelihood does not seem so important as we could imagine.
However, the performance of the estimation procedure does worsen considerably when the data are generated with high values of 
$\theta_h$. This is probably due to the fact that, when $\theta_h$ is large, the asymptotic behaviors of the innovation processes 
are reached after a number of time-steps which is much larger than $T=10^4$ used in this simulation study.
\\
\indent In conclusion, the estimation procedure provided in this subsection is only a first step toward the 
estimation of the interaction between two innovation processes. Additional simulations and analyses are needed. In particular, 
we need to understand how to test the restrictions on the parameters, for example how to provide a test on the symmetry of the 
interaction mechanism.}

\begin{table}[h!]
{
  \begin{center}
    \caption{Simulation results of the estimation procedure described in steps 1)-4) with
    $\theta_1=\theta_2$ and assuming $\Gamma$ and $W$ symmetric.  
    Each parameter has been estimated by $100$ independent simulated processes generated until time-step $T=10^4$.\\
Columns 1: value of $\theta_1=\theta_2=\theta_{Data}$ used to generate the data.\\
Columns 2: value of $\theta_1=\theta_2=\theta_{Likelihood}$ put in the likelihood function.\\
Columns 3-6: elements of the interacting matrices $\Gamma$ and $W$ used to generate the data.\\
Columns 7-10: mean values and standard deviations of the elements of the 
$100$ estimated interacting matrices $\widehat{\Gamma}$ and $\widehat{W}$.}
    \label{tab:table-simulations_theta}
    \begin{tabular}{|cc|cccc|cccc|}
    \hline
 $\theta_{Data}$ &  $\theta_{Likelihood}$ & 
 $\gamma_{1,1}$ & 
 $\gamma_{2,2}$ & 
 $\gamma_{1,2}$ & 
 $w_{1,2}$ &
 $\widehat{\gamma}_{1,1}$ & $\widehat{\gamma}_{2,2}$ & 
 $\widehat{\gamma}_{1,2}$ & $\widehat{w}_{1,2}$ \\
      \hline
            1 & 1 &
      0.10 & 
      0.40 & 
      0.10 & 
      0.50 & 
      0.10 (0.07) & 0.38 (0.05) &
      0.14 (0.03) & 0.52 (0.03) \\
      1 & 100 &
      0.10 & 
      0.40 & 
      0.10 & 
      0.50 & 
      0.09 (0.07) & 0.38 (0.05) &
      0.13 (0.03) & 0.51 (0.03) \\
      100 & 1 &
      0.10 & 
      0.40 & 
      0.10 & 
      0.50 & 
      0.22 (0.02) & 0.41 (0.02) &
      0.18 (0.01) & 0.5 (0.01)\\
      100 & 100 &
      0.10 & 
      0.40 & 
      0.10 & 
      0.50 & 
      0.19 (0.02) & 0.40 (0.02) &
      0.19 (0.01) & 0.51 (0.01)\\
            \hline
      1 & 1 &
      0.10 & 
      0.40 & 
      0.10 & 
      0.25 & 
      0.14 (0.08) & 0.39 (0.05) &
      0.12 (0.03) & 0.26 (0.02) \\
      1 & 100 &
      0.10 & 
      0.40 & 
      0.10 & 
      0.25 & 
      0.11 (0.09) & 0.38 (0.05) &
      0.13 (0.03) & 0.26 (0.02) \\
      100 & 1 &
      0.10 & 
      0.40 & 
      0.10 & 
      0.25 & 
      0.28 (0.02) & 0.43 (0.01) &
      0.14 (0.01) & 0.26 (0.01) \\
      100 & 100 &
      0.10 & 
      0.40 & 
      0.10 & 
      0.25 & 
      0.27 (0.02) & 0.43 (0.01) &
      0.15 (0.01) & 0.26 (0.01) \\
      \hline
      1 & 1 &
      0.25 & 
      0.40 & 
      0.40 & 
      0.50 & 
      0.25 (0.02) & 0.40 (0.03) &
      0.40 (0.02) & 0.50 (0.02) \\
      1 & 100 &
      0.25 & 
      0.40 & 
      0.40 & 
      0.50 & 
      0.25 (0.03) & 0.40 (0.03) &
      0.40 (0.02) & 0.50 (0.02) \\
      100 & 1 &
      0.25 & 
      0.40 & 
      0.40 & 
      0.50 & 
      0.29 (0.02) & 0.42 (0.02) &
      0.41 (0.01) & 0.50 (0.01) \\
      100 & 100 &
      0.25 & 
      0.40 & 
      0.40 & 
      0.50 & 
      0.29 (0.02) & 0.42 (0.02) &
      0.41 (0.01) & 0.50 (0.01) \\
             \hline
    \end{tabular}
  \end{center}
  }
\end{table}



\end{document}